\documentclass[review]{siamart251216}

\usepackage{amssymb}
\usepackage{subcaption}
\usepackage{bm}
\graphicspath{{./}{images/}}
\usepackage{enumerate}
\usepackage{algorithmic}
\usepackage{soul}
\usepackage{mathrsfs}
\usepackage{tikz}
\usetikzlibrary{shapes.geometric, positioning, calc, arrows.meta}

\newsiamthm{assumption}{Assumption}
\newsiamremark{remark}{Remark}

\headers{G. Bao, Y. Zhang, and J. Yu}
{$c$-Transform for High-Contrast Reflector Design}

\title{A $c$-Transform Optimal Transport Method for High-Contrast Freeform Reflector Design}
\author{Gang Bao\thanks{School of Mathematical Sciences, Zhejiang University,
Hangzhou 310027, China (\email{baog@zju.edu.cn}).}
\and Yixuan Zhang\thanks{School of Mathematical Sciences, Peking University,
Beijing 100871, China (\email{yixuan@math.pku.edu.cn}). Corresponding author.}
\and Jiaqi Yu\thanks{School of Mathematical Sciences, Zhejiang University,
Hangzhou 310027, China (\email{12135024@zju.edu.cn}).}}

\begin{document}
\nolinenumbers
\maketitle

\begin{abstract}
High-contrast target distributions pose a particular challenge in freeform reflector design, as the presence of zero-intensity regions leads to degeneracy in the associated Monge--Amp\`ere-type equation. A common remedy is to add a positive artificial background to the target intensity, which improves the regularity of the equation. However, this regularization inevitably reduces the achievable illumination contrast and introduces nonzero intensity into regions that are intended to remain dark.
We propose a fast c-transform method based on optimal transport duality for high-contrast freeform reflector design in the far field without artificial background regularization. The method integrates repeated discrete c-transforms into a measure-based dual optimization scheme to maintain the $c$-concavity of the reflector potential throughout the iteration. This preserves the admissibility of the iterates and enables stable convergence for targets with zero-intensity regions. To efficiently compute the discrete $c$-transforms, we develop a localized search algorithm that exploits their contact structure and prove its exactness and linear complexity. Numerical experiments demonstrate that the proposed method accurately realizes high-contrast illumination targets with zero background.
\end{abstract}

\begin{keywords}
inverse reflector design, optimal transport, fast $c$-transform, high contrast
\end{keywords}

\begin{MSCcodes}
49Q22, 65K10, 65M60, 78A05
\end{MSCcodes}

\section{Introduction}
Freeform optical elements are used to redistribute optical energy according to a prescribed illumination pattern. 
In this work, we consider the inverse design of a far-field reflector. In this optical system, the source rays emanate in directions 
$\boldsymbol{x}\in\Omega\subset\mathbb{S}_{-}^{2}$, are reflected by the surface
\begin{equation}\label{eq:radial-surface}
\Gamma=\{\rho(\boldsymbol{x})\boldsymbol{x}:\boldsymbol{x}\in\Omega\},
\end{equation}
and then propagate to the target domain 
$\boldsymbol{y}=T(\boldsymbol{x})\in\Omega^*\subset\mathbb{S}_{+}^{2}$, where $T$ is the reflective optical map following the reflection law; see Fig.~\ref{fig:far_field_reflector_system}. 
Here, $\mathbb{S}_{-}^{2}$ and $\mathbb{S}_{+}^{2}$ denote the lower and upper unit hemispheres, respectively. The source intensity is denoted by $f(\boldsymbol{x})$, $\boldsymbol{x}\in\Omega$, and the prescribed far-field intensity by $g(\boldsymbol{y})$, $\boldsymbol{y}\in\Omega^*$.

In the absence of loss, the optical map $T$ induced by $\Gamma$ is required to preserve energy \cite{CGH2008Reflector,OT2}. This condition can be written as $T_{\#}f=g$, i.e.,
\begin{equation}\label{eq:measure-preserving}
    \int_{T^{-1}(B)} f(\boldsymbol{x}) \,\mathrm{d}\boldsymbol{x}
    =
    \int_B g(\boldsymbol{y}) \,\mathrm{d}\boldsymbol{y},
    \quad \forall \text{ Borel set } B \subset \Omega^*,
\end{equation}
where $T_{\#}f$ denotes the pushforward of the source intensity measure by $T$. The inverse problem is therefore to determine $\Gamma$ realizing this redistribution.

\begin{figure}[H]
    \centering
    \includegraphics[width=0.70\textwidth]{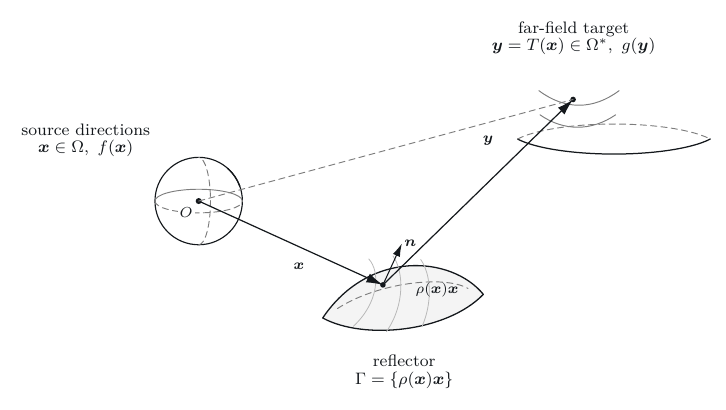}
    \caption{Schematic of the far-field reflector system.}
    \label{fig:far_field_reflector_system}
\end{figure}

In many numerical treatments, when the optical map $T$ is smooth and single-valued, the conservation law (\ref{eq:measure-preserving}) is converted into the Jacobian equation $f(\boldsymbol{x})=g(T(\boldsymbol{x}))|\det\nabla T(\boldsymbol{x})|$.
Together with the reflection law and the geometry of the optical surface, this
relation leads to a Monge--Amp\`ere-type equation for the radial function of $\Gamma$ \cite{wang1996design},
\begin{equation}\label{eq:MA}
\begin{aligned}
\frac{1}{\eta^2}
\det\left(
    -D^2 \rho+\frac{2}{\rho}D\rho\otimes D\rho
    -(\rho-\eta)I
\right)&=\frac{f(\boldsymbol{x})}{g(T(\boldsymbol{x}))},\\
\eta=\left(|\nabla \rho|^2+\rho^2\right) / 2 \rho,&\qquad \text{on }\Omega.
\end{aligned}
\end{equation}
Classical regularity theory for \eqref{eq:MA} is typically established under suitable structural and geometric conditions of $\Omega,\Omega^*$, together with nondegenerate source and target densities. One commonly assumes \cite{TW2009SecondBoundary,CGH2008Reflector,Loeper2009Regularity}
\begin{equation}\label{eq:density-bounds}
    0<\lambda
\leq f(\boldsymbol{x}),\,g(\boldsymbol{y})
\leq \Lambda<\infty,
\qquad
\boldsymbol{x}\in\Omega,\quad
\boldsymbol{y}\in\Omega^*,
\end{equation}
while $\Omega$ and $\Omega^*$ are assumed to be $c$ and $c^*$-convex with respect to each other for the reflector cost $c(\boldsymbol{x},\boldsymbol{y})=-\log(1-\boldsymbol{x}\cdot \boldsymbol{y})$; see \cite{Villani2009} for the relevant definitions and properties.

Existing numerical strategies for inverse reflector and refractor design may be divided into several groups. Ray-mapping methods first construct a map between the source and the target and then recover an optical surface compatible with this map \cite{bosel2017single,desnijder2019ray,feng2016freeform,Fournier2010fast}. These methods are intuitive and computationally efficient, but the constructed map must satisfy an integrability condition in order to correspond to a physically realizable optical surface, which can be difficult to guarantee in general \cite{desnijder2019ray,Fournier2010fast}.

PDE-based methods formulate inverse design as a generalized Monge--Amp\`ere-type equation \eqref{eq:MA}. Numerical approaches include discretizations of the resulting fully nonlinear equation, such as finite difference and finite element methods often combined with Newton-type solvers, as well as variational and least-squares formulations \cite{FD&Newton_2,FD&Newton_3,FD&Newton_4,FD&Newton_5,FD&Newton_6,GJENewton,OT3,LS1,LS2,LS3,LS5}. Such methods provide a direct and systematic connection between the optical surface and the prescribed intensity distribution. They are particularly effective when the optical map is regular and the source and target densities satisfy the regularity conditions in \eqref{eq:density-bounds}. However, these PDE-based methods are generally not directly applicable to high-contrast targets. In this case, the density ratio in \eqref{eq:MA} may become singular and the regularity required by the pointwise Monge--Amp\`ere formulation may break down \cite{TW2009SecondBoundary,CGH2008Reflector,Loeper2009Regularity}.

Recently, a Sobolev gradient method was developed for the far-field reflector problem from the viewpoint of optimal transport \cite{Far_ref}. It optimizes the Kantorovich dual functional using a Sobolev-preconditioned ascent direction. Its strict ascent property, however, is established in the regular positive-density regime \eqref{eq:density-bounds}. Outside this regime, an unconstrained update need not preserve the $c$-concavity, and hence the admissibility, of the reflector potential.

Supporting quadric methods (SQM) and closely related semi-discrete optimal transport (OT) methods represent the optical surface as an envelope of local optical pieces and determine the associated parameters by matching prescribed target point energies \cite{SQM1,SQM2,SQM5}. These methods are well suited to discrete target data and can handle nonuniform and high-contrast irradiance distributions. A major limitation arises from the discrete representation of the target measure. For continuous illumination patterns, the target must be sampled by a large number of points to suppress discretization artifacts, which limits the scalability of these methods for high-resolution targets. A further computational challenge lies in the geometry of the induced partition. In quadratic OT, the partition is represented by Laguerre or power diagrams, for which efficient construction and update algorithms are well established \cite{SemiDiscreteOT1,SemiDiscreteOT2}. For the far-field reflector problem, the corresponding partition is generated by supporting paraboloids on the sphere \cite{ParaboloidPowerDiagram}, leading to a more complex partition construction than in quadratic OT. For general optical systems such as near-field problems \cite{OT1,OT4,SQMNearField}, tractable diagram constructions may no longer be available, and the resulting computations can scale quadratically with the discretization size.

Our method incorporates repeated $c$-transforms into a bidirectional Sobolev iteration based on Kantorovich duality, restoring $c$-concavity after each update and preserving the ascent property even in singular regimes. This framework removes two key restrictions of existing approaches. Unlike SQM, it works directly with continuous measures, providing a more efficient discretization for high-resolution illumination patterns. More importantly, it directly handles nonsmooth reflector surfaces arising from high-contrast targets with zero-intensity regions, without artificial background regularization. This direct formulation also carries over naturally to other optical systems.

The main contributions of this work are as follows. We evaluate the gradient of the dual functional directly in terms of the pushforward measures induced by the dual contact maps, avoiding Hessian and determinant evaluations in singular regimes. We develop a fast discrete $c$-transform on spherical meshes using local geometric reconstruction and localized search, and establish interior exactness and linear complexity for the resulting search strategy. A finite element discretization provides a unified computational framework for the measure-based updates and discrete $c$-transforms. Numerical experiments demonstrate the accuracy, stability, and efficiency of the method for high-contrast far-field reflector design.

The paper is organized as follows. Section \ref{sec:problem} reviews the far-field reflector formulation and introduces the bidirectional framework. Section \ref{sec:fast-ctransform} develops the finite element discrete $c$-transform. Section \ref{sec:bidirectional} presents the pushforward evaluation and Sobolev preconditioned updates. Numerical experiments are presented in Section \ref{sec:numerical-experiments}.

\section{Problem Formulation and Measure-Based Framework}\label{sec:problem}
Throughout the paper, $\Omega\subset\mathbb S_-^2$ and $\Omega^*\subset\mathbb S_+^2$ are open spherical caps compactly contained in opposite hemispheres, so that $1-\boldsymbol x\cdot\boldsymbol y\ge\delta_c>0$ on $\Omega\times\Omega^*$. We allow the source and target densities to have infinite contrast, $\inf_\Omega f=\inf_{\Omega^*}g=0$, without adding an artificial positive background. Their nonzero supports may contain holes, corners, or multiple components. We assume only that, for some $d_0>0$, they remain uniformly separated from the domain boundaries:
\begin{equation}\label{eq:dis_condition}
\operatorname{dist}(\operatorname{supp} f,\partial\Omega)\ge d_0,
\qquad
\operatorname{dist}(\operatorname{supp} g,\partial\Omega^*)\ge d_0.
\end{equation}

\subsection{Reflector Model and Optimal Transport}
With the radial form of the reflector surface given in \eqref{eq:radial-surface}, we introduce $u(\boldsymbol{x}):=\log \rho(\boldsymbol{x})$. It has been shown \cite{OT2,wang2004design} that the reflector design problem can be formulated as an optimal transport problem with cost
\begin{equation}\label{reflector_cost}
    c(\boldsymbol{x},\boldsymbol{y})=-\log(1-\boldsymbol{x}\cdot \boldsymbol{y}).
\end{equation}
The corresponding Kantorovich dual problem is
\begin{equation}\label{eq:dual-OT}
    \sup_{(u,v)\in \mathcal{K}}
\left\{
\int_{\Omega} u(\boldsymbol{x})f(\boldsymbol{x})\,d\boldsymbol{x}
+\int_{\Omega^*} v(\boldsymbol{y})g(\boldsymbol{y})\,d\boldsymbol{y}
\right\},
\end{equation}
where $\mathcal{K}:=\{(u,v):u(\boldsymbol x)+v(\boldsymbol y)\le
c(\boldsymbol x,\boldsymbol y),\ \forall(\boldsymbol x,\boldsymbol y)\in
\Omega\times\Omega^*\}$. In fact, the dual variables $u$ and $v$ are related to each other by letting $v=u^c$ or $u=v^c$, where the superscript $c$ here denotes the c-transform,
\begin{equation}
    \left\{\begin{aligned}
    u^c(\boldsymbol{y})&:=\inf_{\boldsymbol{x}\in\Omega}
\{c(\boldsymbol{x},\boldsymbol{y})-u(\boldsymbol{x})\},\qquad \boldsymbol{y}\in \Omega^*,\\
v^c(\boldsymbol{x})&:= \inf_{\boldsymbol{y}\in\Omega^*}
\{c(\boldsymbol{x},\boldsymbol{y})-v(\boldsymbol{y})\},\qquad \boldsymbol{x}\in \Omega.
    \end{aligned}\right.
\end{equation}
A function $u$ or $v$ is called $c$-concave if there exists some $\tilde{v}$ on $\Omega^*$ or $\tilde{u}$ on $\Omega$ such that $u=\tilde{v}^c$ or $v=\tilde{u}^c$. Therefore, the dual functional \eqref{eq:dual-OT} can be reduced to a single potential functional depending on either $u$ or $v$,
\begin{equation}\label{reduced_dual_functionals}
\mathcal J(u):=\int_\Omega u\,f\,d\boldsymbol x+\int_{\Omega^*}u^c\,g\,d\boldsymbol y,\qquad\mathcal J_*(v):=\int_{\Omega^*}v\,g\,d\boldsymbol y+\int_\Omega v^c\,f\,d\boldsymbol x, 
\end{equation}
where optimizations can be performed over the family of $c$-concave functions \cite{Villani2009}.

\subsection{Reflector Map}
\noindent
For a sufficiently smooth $c$-concave function $u$, the optical map induced by the corresponding radial $\rho = e^u$ can be written as 
\begin{equation}\label{eq:explict_map}
    \boldsymbol{y}=T_u(\boldsymbol{x})=T(\boldsymbol{x},\nabla u(\boldsymbol x)),
\end{equation}
where the map $T: \Omega\times T_{\boldsymbol{x}}\Omega\to \Omega^*$ is given by \cite{wang2004design,Far_ref}
\begin{equation}
     T(\boldsymbol{x},\boldsymbol{p})=\frac{(|\boldsymbol{p}|^2-1)\boldsymbol{x}+2\boldsymbol{p}}{|\boldsymbol{p}|^2+1}.
\end{equation}
It can also be derived from the twist relation $\nabla_{\boldsymbol x}c(\boldsymbol x,T(\boldsymbol x,\boldsymbol p))=\boldsymbol p,\;\boldsymbol p\in T_{\boldsymbol{x}}\Omega$, which arises from differentiating the active contact condition induced by the c-transform. Here, both $\nabla$ and $\nabla_{\boldsymbol x}$ denote the intrinsic gradient on the spherical surface, and $T_{\boldsymbol{x}}\Omega$ is the tangent space of $\Omega$ at $\boldsymbol{x}$. The reverse map $T_{v}:\Omega^*\to\Omega$ has the same form as $T_u$ with the roles of $\boldsymbol{x}$ and $\boldsymbol{y}$ interchanged.

However, in the high-contrast regime, the $c$-concave function $u$ may fail to be differentiable and is in general only Lipschitz continuous on $\Omega$ \cite{Villani2009}. In this case, the contact map $T_u$ is defined through the $c$-contact relation,
\begin{equation}\label{eq:contact}
T_{u}(\boldsymbol{x})\in
    \{\boldsymbol y\in\Omega^*:
    u(\boldsymbol x)+u^c(\boldsymbol y)=c(\boldsymbol x,\boldsymbol y)\},\quad\boldsymbol{x}\in \Omega.
\end{equation}
Similarly, the reverse map $T_{v}:\Omega^*\to\Omega$ induced by the $c$-concave function $v$ on $\Omega^*$ is
\[T_{v}(\boldsymbol y)\in
    \{\boldsymbol x\in\Omega:
    v(\boldsymbol y)+v^c(\boldsymbol x)=c(\boldsymbol x,\boldsymbol y)\},\quad \boldsymbol{y}\in \Omega^*.\]

\subsection{Jacobian Residual}
After defining the map $T_u$ induced by the potential $u$, we can write the first variations of the functionals (\ref{reduced_dual_functionals}) \cite{BackForth,Far_ref}, 
\begin{equation}\label{eq:new_grads}
\left\{
\begin{aligned}
    \mathcal{J}^{\prime}(u)& =f - (T_{u^c})_{\#}g,\\
    \mathcal{J}_{*}^{\prime}(v)& =g - (T_{v^c})_{\#}f,
\end{aligned}
\right.
\end{equation}
which are the pushforward-measure residuals associated with mass conservation on the source and target sides, respectively.
 
If the $c$-concave function $u$ is sufficiently smooth, the pushforward-measure residual in \eqref{eq:new_grads} admits the following Jacobian representation through the change of variables \cite{Far_ref},
\begin{equation}\label{eq:jacobian-residual}
    \mathcal J^{\prime}[u]:=
    f(\boldsymbol x)
    -\frac{g(T_u(\boldsymbol x))}
    {\left|\det D^2_{\boldsymbol x\boldsymbol y}c
    (\boldsymbol x,T_u(\boldsymbol x))\right|}
    \det\!\left(
    D^2_{\boldsymbol x\boldsymbol x}c(\boldsymbol x,T_u(\boldsymbol x))
    -D^2u(\boldsymbol x)\right),
\end{equation}
where the Hessians and determinants are computed with respect to local orthonormal tangent frames. The regularity of the generalized Monge--Amp\`ere equation $\mathcal J^{\prime}[u]=0$ relies on positive density bounds \eqref{eq:density-bounds} and the uniform $c$-concavity of the potential $u$ \cite{ma2005regularity,TW2009SecondBoundary,CGH2008Reflector,loeper2011regularity},
\[D_{x x}^2 c\left(x, T_u(x)\right)-D^2 u(x) \geq \Theta I,\quad \Theta>0.\]

In the Sobolev gradient method \cite{Far_ref}, the functional $\mathcal J$ is optimized through the iteration 
\begin{equation}\label{eq:single-iteration}
    \Delta u^{n+1} = \Delta u^n -  \mathrm dt\, \mathcal{J}^{\prime}(u^n),
\end{equation}
where the choice of the fixed step size $\mathrm dt$ relies on the parameters $\lambda,\Lambda,\Theta$, etc.

In the high-contrast regime considered here, the target density $g$ may vanish on subsets of $\Omega^*$, and $\operatorname{supp}(g)$ may contain holes, corners, or multiple components, meaning that the regularity assumptions for Monge--Amp\`ere equations are no longer satisfied \cite{ma2005regularity,CGH2008Reflector,Loeper2009Regularity,loeper2011regularity}. A direct consequence is that the map may become discontinuous. This loss of regularity prevents the Jacobian residual from being evaluated by formula \eqref{eq:jacobian-residual}, and the original regularity-based argument no longer yields a positive admissible upper bound on the step size in \eqref{eq:single-iteration}. Consequently, once the iteration fails to preserve the $c$-concavity of the potential, the residual in \eqref{eq:new_grads} ceases to be a valid gradient of the reduced dual functional. Thus the ascent structure of \eqref{eq:single-iteration} is lost, and strict ascent or convergence can no longer be expected from the original argument \cite{Far_ref}.

\subsection{Measure-Based Bidirectional Framework}
\label{subsec:measure-framework}
To retain the dual ascent structure without assuming a smooth transport map, we evaluate the residuals in \eqref{eq:new_grads} directly from pushforward measures. Following the back-and-forth idea \cite{BackForth}, we alternate Sobolev updates on the source and target sides with $c$-transforms:
\begin{equation}\label{eq:bidirectional}
\left\{\begin{aligned}
\Delta \tilde{u}^{n+1}&=\Delta u^n-\mathrm{d}t \cdot \mathcal{J}^{\prime}\left(u^n\right),\\
v^{n} &= \left(\tilde{u}^{n+1}\right)^{c},\\
\Delta \tilde{v}^{n+1}&=\Delta v^n-\mathrm{d}t \cdot\mathcal{J}_{*}^{\prime}\left(v^n\right),\\
u^{n+1} &= \left(\tilde{v}^{n+1}\right)^{c},
\end{aligned}\right.
\end{equation}
where $\Delta$ denotes the Laplace--Beltrami operator on the spherical surface. In scheme \eqref{eq:bidirectional}, the bidirectional iteration of $u$ and $v$ accounts for the residuals on both the source and target sides through the forward and reverse maps, avoiding the imbalance of a one-sided update. The $c$-transform, in turn, restores $c$-concavity after each Sobolev step, preserving the validity of the gradient and the ascent structure.

The main computational bottleneck is the repeated discrete $c$-transform. A direct search over all source--target pairs requires $O(N^2)$ operations for comparable mesh sizes. Section~\ref{sec:fast-ctransform} develops a localized search for this step, while Section~\ref{sec:bidirectional} describes the pushforward evaluation and Poisson updates.

\section{Fast Discrete \texorpdfstring{$c$}{c}-Transform}\label{sec:fast-ctransform}

\subsection{Candidate Localization Principle}
We begin by describing the continuous localization principle for \eqref{eq:contact} that underlies the fast search. For the reflector cost \eqref{reflector_cost}, the $c$-transform cannot be reduced to a Legendre transform, in contrast to the Euclidean quadratic-cost setting where FFT-type accelerations are available \cite{BackForth}.

To construct the acceleration strategy, we exploit the connection between the contact relation \eqref{eq:contact} and the explicit transport formula \eqref{eq:explict_map} in the smooth regime. If $u$ is smooth and an interior point $\boldsymbol x$ realizes the minimum $u^c(\boldsymbol y)=c(\boldsymbol x,\boldsymbol y)-u(\boldsymbol x)$, then the first-order optimality condition gives \eqref{eq:explict_map}. Thus, apart from boundary contacts, a target point $\boldsymbol y$ can only be matched with source points whose predicted transport images lie near $\boldsymbol y$.

For nonsmooth potentials, the same localization principle is interpreted through the Clarke generalized subdifferential $\partial u(\boldsymbol x)\subset T_{\boldsymbol x}\Omega$ \cite{Clarke1990}. At differentiability points, this reduces to $\partial u(\boldsymbol x)=\{\nabla u(\boldsymbol x)\}$, while in the convex case it agrees with the usual convex subdifferential. Hence an interior contact is generated either by 
\begin{equation}\label{eq:diff-contact}
\begin{cases}
\boldsymbol y=T(\boldsymbol x,\nabla u(\boldsymbol x)),
& \text{if }u\text{ is differentiable at }\boldsymbol x,\\[0.4em]
\boldsymbol y=T(\boldsymbol x,\boldsymbol g),\quad 
\boldsymbol g\in\partial u(\boldsymbol x),
& \text{if }u\text{ is non-differentiable at }\boldsymbol x.
\end{cases}
\end{equation}
This provides a localized candidate set for the contact search, rather than requiring a global search over all source-target pairs in \eqref{eq:contact}.

The discrete algorithm below approximates this local generalized-gradient set by a one-ring gradient hull in the finite element discretization. Boundary contacts require separate consideration and are discussed in Section \ref{subsec:boundary-handling}.

\subsection{Finite Element Setup and Gradient Reconstruction}\label{sec:fem}
In the finite element setting, we discretize the source and target domains by curved quadratic meshes, denoted by $\mathcal T_H$ and $\mathcal T_H^*$, with nodal sets $\mathcal N$ and $\mathcal N^*$, respectively. We denote the corresponding source and target nodes by ${\boldsymbol{x}_p},\,p\in\mathcal N$ and ${\boldsymbol{y}_q},\,q\in\mathcal N^*$, and write $N:=\#\mathcal N$ and $M:=\#\mathcal N^*$. 

On $\mathcal T_H$, we use the quadratic Lagrange finite element space
\begin{equation}\label{quad_space}
V_H=\left\{v \in C^0:\; v|_K \in P_2(K),\; \forall K \in \mathcal{T}_H\right\}.
\end{equation}
For $u_H\in V_H$, write $u_H(\boldsymbol{x})=\sum_{p\in\mathcal N}u_H(\boldsymbol{x}_p)\phi_p(\boldsymbol{x})$, where $\phi_p$ denotes the quadratic nodal basis function. The target potential $v_H$ is represented analogously on $\mathcal T_H^*$.

To compute the gradient of $u_H$ at each node, we introduce an associated refined linear mesh $\mathcal{T}_h$. Each curved quadratic surface element $K\in\mathcal{T}_H$ is subdivided into four planar triangular faces using its three vertex nodes and three mid-edge nodes. Thus $\mathcal{T}_h$ is obtained directly from the nodal structure of $\mathcal{T}_H$, and its connectivity is precomputed once and reused throughout the algorithm. The linear interpolant of $u_H\in V_H$ on $\mathcal{T}_h$ is defined by $u_h=\mathcal{I}_h u_H\in V_h$, where $V_h=\{v\in C^0:\;v|_k\in P_1(k),\;\forall k\in\mathcal{T}_h\}$.

Since $u_h$ is piecewise linear on $\mathcal{T}_h$, its gradient is constant on each element $k\in\mathcal{T}_h$, 
\[
\boldsymbol{g}_k := \nabla (u_h|_k).
\]
For each node $p\in \mathcal{N}$, we define the one-ring patch
\begin{equation}\label{patch_def}
\omega_p := \{k \in \mathcal{T}_h :\; p \text{ is a vertex of } k\}.
\end{equation}
Since the refined mesh is quasi-uniform in our computations, we construct the nodal gradient $\mathbf{g}_p$ at $\boldsymbol{x}_p$ by averaging the element gradients over the patch $\omega_p$, and then projecting it onto the tangent plane,
\begin{equation}\label{nodal_grad_recon}
\mathbf{g}_p := \frac{1}{\#\omega_p} \sum_{k \in \omega_p} \mathcal{P}_{p}\boldsymbol{g}_k,
\end{equation}
where $\mathcal{P}_{p}\boldsymbol{\xi}:=
\boldsymbol{\xi}-(\boldsymbol{\xi}\cdot \boldsymbol{x}_p)\boldsymbol{x}_p$
is the orthogonal projection onto the tangent space at $\boldsymbol{x}_p$. An area-weighted average can be used in the same way for highly nonuniform meshes.

The target side is treated analogously. The reconstructed gradients are used for node classification, subgradient sampling, and candidate generation in the fast $c$-transform.

\subsection{Discrete \texorpdfstring{$c$}{c}-Transform and Candidate Generation}
For a given finite element potential $u_H\in V_H$, the nodal discrete
$c$-transform on the target mesh is defined by
\begin{equation}\label{disc_ct_u}
\begin{aligned}
u_{H}^{c}(\boldsymbol{y})&:= \sum_{q\in \mathcal{N}^{*}}v_q\,\phi_q^{*}(\boldsymbol{y}),\\
v_q &:= \min_{p\in \mathcal{N}}-\log(1-\boldsymbol{x}_{p}\cdot \boldsymbol{y}_{q}) - u_H(\boldsymbol{x}_p).
\end{aligned}
\end{equation}
Similarly, for a target potential $v_H$ on $\mathcal{T}_{H}^{*}$, the corresponding source-side transform is $v_{H}^{c}(\boldsymbol{x}):=\sum_{p\in \mathcal{N}}u_p\,\phi_p(\boldsymbol{x})$, with $u_p:=\min_{q\in \mathcal{N}^{*}}\{-\log(1-\boldsymbol{x}_{p}\cdot \boldsymbol{y}_{q})-v_H(\boldsymbol{y}_q)\}$.

A direct evaluation of these minima requires $O(MN)$ comparisons. The localization principle above suggests replacing the global search by a localized one guided by the generated transport images $T(\boldsymbol x_p,\boldsymbol g)$, where $\boldsymbol g$ is either the reconstructed nodal gradient or a sampled subgradient direction approximating an element of $\partial u(\boldsymbol x_p)$ in the continuous setting. These generated images are used to build candidate source-target pairs, so that each target node only tests nearby source candidates instead of all nodes in $\mathcal N$.

\subsection{Node Classification and Fast Candidate Search}

Motivated by the characterization above, we develop a fast algorithm for computing the discrete $c$-transform. The main idea is to recover local transport directions from the discrete potential and then restrict the minimization to a small set of relevant candidates.

For a prescribed tolerance $\epsilon>0$, we first classify the source nodal points into a numerically regular set $\mathcal{N}_{0}$ and a numerically singular set $\mathcal{N}_{1}$ by the threshold criterion
\begin{equation}\label{node_classification}
\begin{aligned}
\mathcal{N}_{0}&:=\left\{p\in \mathcal{N}:\; \max_{k\in \omega_{p}}\left\|\mathcal{P}_{p}\boldsymbol{g}_k - \mathbf{g}_p\right\|_2<\epsilon \right\},\\
 \mathcal{N}_{1} &:= \mathcal{N}\,\backslash \,\mathcal{N}_{0}.
\end{aligned}
\end{equation}
This classification is purely numerical and depends on the tolerance $\epsilon$, rather than a rigorous differentiability criterion.

For each regular node $p\in \mathcal{N}_0$, we keep the single recovered gradient $\mathbf{g}_p$ in \eqref{nodal_grad_recon}. For each singular node $p\in \mathcal{N}_{1}$, we approximate the local subdifferential $\partial u(\boldsymbol x_p)$ by the one-ring gradient hull. More precisely, we project the element gradients $\boldsymbol{g}_k$ of $k\in \omega_p$ onto the tangent plane at $\boldsymbol{x}_p$, and cluster nearby element gradients using the same tolerance $\epsilon$. Denoting the resulting clusters of element indices by $\{\mathcal{C}_{m}^{p}\}_{m=1}^{M_p}$, each cluster satisfies
\begin{equation}\label{eq:criterion}
   \left\|\mathcal{P}_{p}\left(\boldsymbol{g}_{k} - \boldsymbol{g}_{k'}\right)\right\|_{2}<\epsilon,
\qquad \forall\, k,k'\in \mathcal{C}_{m}^{p}\subset \omega_p.
\end{equation}
We collect the centroids of each cluster $\mathcal{C}_{m}^{p}$,
\[\widehat{\mathcal{G}}_{p}:=
\left\{
\frac{1}{|\mathcal{C}_{m}^{p}|}
\sum_{k\in \mathcal{C}_{m}^{p}}\mathcal{P}_{p}\boldsymbol{g}_{k}
:\; m=1,\cdots,M_p
\right\},\]
and use them to build a discrete approximation $\mathbf{G}_{p}$ of the subdifferential $\partial u(\boldsymbol x_p)$,
\begin{equation}\label{subgradient_sampling}
\mathbf{G}_{p}:= \left\{\begin{aligned}
&\operatorname{Sample}\left(\operatorname{Convexhull}(\widehat{\mathcal{G}}_{p}),\, \epsilon\right), \quad && p\in \mathcal{N}_1,\\
&\qquad \{\mathbf{g}_p\}, \quad && p\in \mathcal{N}_0.
\end{aligned}\right.
\end{equation}
Here, $\operatorname{Sample}(\cdot,\epsilon)$ denotes barycentric sampling with spacing $\epsilon$, where $\epsilon$ is the same parameter as in \eqref{eq:criterion}. The sampled set $\mathbf G_p$ can provide an $O(h)$-net of the one-ring gradient hull 
\begin{equation}\label{eq:one-hull}
     S_p^h:=\operatorname{co}\{\mathcal P_p\boldsymbol g_k:\ k\in\omega_p\}.
\end{equation}
Namely, let $\epsilon=\gamma h$ for a fixed constant $\gamma$ independent of $h$. For every $\boldsymbol g\in S_p^h$ there exists $\boldsymbol g_{p,j}\in \mathbf G_p$ such that
\begin{equation}\label{gradient_covering}
    |\boldsymbol g-\boldsymbol g_{p,j}|\le C_{\rm samp}h,
\end{equation}
where $C_{\rm samp}$ depends on the sampling constant $\gamma$. This provides the covering estimate required for the localized search.

The resulting set $\mathbf{G}_p$ is used to generate candidate transport images. Let $J_p:=\#\mathbf G_p$. If $p\in \mathcal{N}_0$, $J_p=1$; if $p\in \mathcal{N}_1$, $J_p$ is the number of sampled subgradients. We compute transport points by
\begin{equation}\label{targets}
    \left\{\boldsymbol{y}_{p, j}= T(\boldsymbol{x}_p,\mathbf{g}_{p,j}):\; \mathbf{g}_{p,j}\in \mathbf{G}_p\right\}_{\substack{p=1, \ldots, N \\ j=1, \ldots, J_p}}.
\end{equation}
Ideally, for each $q\in \mathcal{N}^{*}$, we would like to construct a spherical candidate set
\begin{equation}\label{candidate_ideal}
\mathcal{B}_q^{\rm Nbr} :=\{p\in \mathcal{N}: \exists 1\leq j \leq J_p \text{ s.t. } \boldsymbol{y}_{p,j} \in B_r(\boldsymbol{y}_q)\}
\end{equation}
and then replace the global minimization in (\ref{disc_ct_u}) by the local search
\[
v_q := \min_{p\in \mathcal{B}_q^{\rm Nbr} }-\log(1-\boldsymbol{x}_{p}\cdot \boldsymbol{y}_{q}) - u_H(\boldsymbol{x}_p).
\]
The remaining task is therefore to realize this spherical neighborhood search efficiently.

\subsection{Stereographic Projection for Fast Search}\label{subsec:stereo-search}
We now describe the search step for the reflector cost \eqref{reflector_cost} and the spherical domains $\Omega$ and
$\Omega^{*}$. The projection introduced in this
subsection is used for fast candidate search.

The nodes $\{\boldsymbol{y}_q\}_{q\in \mathcal{N}^{*}}$ and the target points $\left\{\boldsymbol{y}_{p, j}\right\}_{p,j}$ in (\ref{targets}) lie on the sphere. To realize the neighborhood search in \eqref{candidate_ideal} efficiently, we map these points to a planar chart by the stereographic projection
\[
\Pi(\boldsymbol y)=
\left(
\frac{y_1}{1+y_3},
\frac{y_2}{1+y_3}
\right),
\qquad
\boldsymbol y=(y_1,y_2,y_3)\in \mathbb S_+^2 .
\]
We denote the projected points by
\[
\boldsymbol Y_{p,j}:=\Pi(\boldsymbol y_{p,j}),
\qquad
\boldsymbol Y_q:=\Pi(\boldsymbol y_q).
\]
Since the target region is compactly contained in this projection chart, $\Pi$ is bi-Lipschitz on the relevant set. Hence spherical neighborhoods of radius $r$ are mapped to planar neighborhoods of comparable size. We therefore choose a planar search radius $R\asymp r$, partition the plane into square cells of width comparable to $R$, and store the generated points $\boldsymbol Y_{p,j}$ by their cell indices. For each target node $\boldsymbol Y_q$, the candidate set is then obtained by inspecting only the nearby cells, instead of comparing $\boldsymbol Y_q$ with all generated transport points.

Since $\boldsymbol Y_{p,j}$ are stored in a uniform square grid with cell width $R$, for each query point $\boldsymbol Y_q$, we only need to inspect a fixed number of neighboring grid cells around the cell containing $\boldsymbol Y_q$, whose union is denoted by $\operatorname{Nbr}_R(\boldsymbol Y_q)$. The practical candidate set is then constructed by the bucket-based rule
\begin{equation}\label{eq:bucket}
    \mathcal{B}_q=\left\{p\in \mathcal{N}: \exists 1\leq j \leq J_p \text{ s.t. } \boldsymbol{Y}_{p,j} \in \operatorname{Nbr}_{R}(\boldsymbol{Y}_q)\right\}.
\end{equation}
In the theoretical discussion below, the radius is chosen at the mesh scale, $R=\kappa h$. Since the target region is fixed away from the singular point of the stereographic projection, this is equivalent to taking a spherical search radius of order $h$. We choose the grid width and the number of inspected neighboring cells so that, for every $q$,
\begin{equation}\label{bucket_covering}
    B(\Pi(\boldsymbol y_q),R)
    \subset \operatorname{Nbr}_R(\boldsymbol Y_q)
    \subset B(\Pi(\boldsymbol y_q),C_{\rm nbr}R),
\end{equation}
where $C_{\rm nbr}$ is independent of $h$ and $q$.
Thus $\mathcal B_q$ consists of the source node indices whose generated projected transport points fall into the inspected cells around $\boldsymbol Y_q$.

The resulting fast $c$-transform is summarized below.
\begin{algorithm}
    \caption{Fast $c$-transform}\label{F}
    \begin{algorithmic}  
    \STATE Given $\{\boldsymbol{y}_q\}_{q\in \mathcal{N}^{*}}$ on $\Omega^{*}$ and $u_{H}(\boldsymbol{x})$ on $\{\boldsymbol{x}_p\}_{p\in \mathcal{N}}\subset \Omega$.
    \STATE Set the parameters: $\epsilon=\gamma h$ and $R=\kappa h$.
    \STATE 1. Compute $\{\mathbf{g}_{p}\}_{p\in \mathcal{N}}$ and $\{\boldsymbol{g}_{k}\}_{k\in \mathcal{T}_{h}}$ of $u_H$.
    \STATE 2. Divide $\mathcal{N}$ into $\mathcal{N}_0$ and $\mathcal{N}_1$.
    \STATE 3. For $p\in \mathcal{N}_1$, cluster nearby $\boldsymbol{g}_k$ and compute $\widehat{\mathcal{G}}_p$ to obtain $\mathbf{G}_p$.
    \STATE 4. Generate the target points $\{\boldsymbol{y}_{p,j}\}_{p,j}$ from $\mathbf{G}_p$.
    \STATE 5. Project $\{\boldsymbol{y}_{q}\}_{q\in \mathcal{N}^{*}}$ and $\{\boldsymbol{y}_{p,j}\}_{p,j}$ onto the plane to obtain $\{\boldsymbol{Y}_q\}_{q\in \mathcal{N}^{*}}$ and $\{\boldsymbol{Y}_{p,j}\}_{p,j}$.%
    \STATE 6. Partition the plane into square cells of width $R$ and store the source-node indices according to the cells containing $\{\boldsymbol{Y}_{p,j}\}_{p,j}$.
    \STATE 7. For $q\in \mathcal{N}^{*}$, construct $\mathcal{B}_q$ from neighboring grid cells around $\boldsymbol{Y}_q$.
    \STATE 8. Compute $v_q=\min_{p\in \mathcal{B}_q}\{-\log(1-\boldsymbol{x}_{p}\cdot \boldsymbol{y}_{q})-u_H(\boldsymbol{x}_p)\}$.
    \end{algorithmic}  
\end{algorithm}

\subsection{High-Order Local Refinement}\label{High-Order Local Refinement}
In order to improve the smoothness of the transformed function, we build a local quadratic correction on top of Algorithm \ref{F}. For a fixed
$\boldsymbol{y}_q$, the fast discrete transform above computes the minimum only over nodal values. 
Equivalently, it can be viewed as minimizing the piecewise linear interpolant of the sampled objective values $-\log(1-\boldsymbol{x}\cdot \boldsymbol{y}_q)-u_H(\boldsymbol{x})$ on the refined linear mesh $\mathcal T_h$, so the minimizer over each simplex is attained at a vertex. This is efficient, but it also explains the piecewise artifacts that appear in the gradient of the transformed potential.

To reduce these artifacts, we introduce a local quadratic correction on the original $P_2$ mesh $\mathcal T_H$. For each element $K\in\mathcal T_H$, let
\begin{equation}\label{quad_form}
    F(\boldsymbol{\lambda};\,\boldsymbol{y}_q,K)
    =
    \sum_{i=1}^{6} f_{q,i}^{K}\hat{\phi}_i(\boldsymbol{\lambda}),
    \qquad
    f_{q,i}^{K}
    =
    -\log(1-\boldsymbol{y}_q\cdot\boldsymbol{x}_{\kappa_i})
    -u_H(\boldsymbol{x}_{\kappa_i}),
\end{equation}
where $\widehat K$ denotes the reference triangle, $\hat{\phi}_i$, $i=1,\ldots,6$, are the quadratic Lagrange basis functions on $\widehat K$, and $\kappa_i=\kappa_i(K)$ is the global index of the $i$-th local node of $K$.

For each target node $\boldsymbol y_q$, choose
\begin{equation}\label{min_index}
    p(q)
    \in
    \operatorname{argmin}_{p\in\mathcal B_q}
    \left\{
    -\log(1-\boldsymbol{x}_p\cdot\boldsymbol{y}_q)
    -u_H(\boldsymbol{x}_p)
    \right\}.
\end{equation}
The quadratic correction is restricted to the elements $\mathcal K_{p(q)}:=\{K\in\mathcal T_H:p(q)\in K\}$ incident to $p(q)$, and the refined transform is defined by
\begin{equation}\label{refined_trans}
    \begin{aligned}
        u_H^c(\boldsymbol y)
        &=\sum_{q\in\mathcal N^*}v_q^{new}\phi_q^*(\boldsymbol y),\\
        v_q^{new}
        &=
        \min_{K\in\mathcal K_{p(q)}}
        \min_{\boldsymbol\lambda\in\widehat K}
        F(\boldsymbol\lambda;\boldsymbol y_q,K).
    \end{aligned}
\end{equation}

We refer to the finite element functions on $\Omega$ and $\Omega^*$ produced by Algorithm \ref{F}, with or without the local refinement \eqref{refined_trans}, as numerically $c$-concave. This means that they are constructed to be compatible, at the discrete level, with the supporting relation $u(\boldsymbol x)+u^c(\boldsymbol y)=c(\boldsymbol x,\boldsymbol y)$ used throughout this paper.

\subsection{Boundary Handling}\label{subsec:boundary-handling}
The discussion of the $c$-transform in the preceding subsections relies on a common interior contact assumption. For each target node $\boldsymbol y_q$, the corresponding contact node $\boldsymbol x_{p(q)}$ selected by the discrete $c$-transform lies in $\operatorname{int}(\Omega)$. If $\boldsymbol x_{p(q)}\in\partial\Omega$, the optimality condition used in the localization argument is no longer valid, and the fast search strategy cannot be directly applied. Boundary contacts therefore require separate consideration through the transport boundary condition. Let
\begin{equation}\label{boundary_nodes}
\mathcal{N}_{b}:=\left\{p\in \mathcal{N}:\boldsymbol{x}_p \in \partial \Omega\right\},
\qquad
\mathcal{N}_{b}^{*}:=\left\{q\in \mathcal{N}^{*}:
\boldsymbol y_q \in \partial \Omega^*\right\},
\end{equation}
be the sets of boundary nodes of $\mathcal{T}_H$ and $\mathcal{T}_H^*$, respectively. With $p(q)$ denoting the minimizing source node selected by the discrete $c$-transform, we impose the following discrete boundary consistency condition:
\begin{equation}\label{boundary_corr}
\begin{aligned}
q\in \mathcal{N}_b^*  &\quad\Rightarrow  \quad&& p(q) \in \mathcal{N}_b, \\
p\in \mathcal{N}_b   &\quad\Rightarrow  \quad&&
\forall q\in \mathcal{N}^{*}\text{ such that } p=p(q),\quad  q\in\mathcal{N}_b^* .
\end{aligned}
\end{equation}
In other words, the boundary correspondence for the discrete $c$-transform is bidirectional. Boundary target nodes are matched only with boundary source nodes, and boundary source nodes do not generate interior target contacts.

The condition in \eqref{boundary_corr} may seem restrictive, but it is the discrete counterpart of the transport boundary condition \cite{FD&Newton_6,Far_ref}
\begin{equation}\label{boundary_map}
T_{u_H}(\partial \Omega) = \partial \Omega^*,
\qquad
T_{u_H^c}(\partial \Omega^*) = \partial \Omega .
\end{equation}
During the iterative process \eqref{eq:bidirectional}, we impose boundary data in the Poisson solves for the updated potentials so that \eqref{boundary_map} is preserved; see Section \ref{subsec:poisson-boundary}.

A direct enforcement of \eqref{boundary_corr} would require an exhaustive comparison over the boundary nodes. Under the support separation condition \eqref{eq:dis_condition}, however, the singular interfaces remain away from the boundary and the transport map remains locally regular near $\partial\Omega$ and $\partial\Omega^*$. The boundary correspondence can therefore be determined from the differentiable transport condition \eqref{eq:diff-contact}, without a separate exhaustive boundary search.

\subsection{Discrete Uniform Exactness and Linear Complexity}
\label{subsec:discrete-uniform-exactness}

The following results are stated directly at the finite element level. Under the assumptions below and the reflector cost \eqref{reflector_cost}, we prove that the fast $c$-transform produced by Algorithm \ref{F} gives the same results as the standard discrete $c$-transform \eqref{disc_ct_u}.

We work under the standing assumptions on $\Omega$ and $\Omega^*$ stated in Section \ref{sec:problem}. Since only the portion of the transport supported by the prescribed measures is relevant to the computation, we work under the following assumptions.

\begin{assumption}\label{ass:disc_exact_main}
The refined source and target meshes $\mathcal{T}_h$ and $\mathcal{T}_h^*$ are shape-regular and quasi-uniform. For each target node $\boldsymbol y_q\in\mathcal N^*$, let
\begin{equation}
p^*(q)\in
\operatorname{argmin}_{p\in\mathcal N}
\left\{
c(\boldsymbol x_p,\boldsymbol y_q)-u_H(\boldsymbol x_p)
\right\}
\end{equation}
be an exact nodal minimizer, and assume that for some $d_*>0$, independent of $h$ and $q$,
\begin{equation}
\operatorname{dist}(\boldsymbol x_{p^*(q)},\partial\Omega)\ge d_*,
\qquad \forall q\in\mathcal N^* .
\end{equation}
\end{assumption}

\begin{proposition}[Interior discrete uniform exactness]\label{thm:disc_exact_main}
Under Assumption \ref{ass:disc_exact_main}, choose $\kappa$ sufficiently large. Then, for every target node $\boldsymbol y_q$ covered by the assumption,
$p^*(q)\in\mathcal B_q$. Consequently,
\[
    \min_{p\in\mathcal B_q}
    \{c(\boldsymbol x_p,\boldsymbol y_q)-u_H(\boldsymbol x_p)\}
    =
    \min_{p\in\mathcal N}
    \{c(\boldsymbol x_p,\boldsymbol y_q)-u_H(\boldsymbol x_p)\}.
\]
\end{proposition}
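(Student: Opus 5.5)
The plan is to show that the exact nodal minimizer $p^*=p^*(q)$ generates at least one transport image $\boldsymbol y_{p^*,j}$ whose stereographic projection lies within distance $\kappa h$ of $\boldsymbol Y_q$. By \eqref{bucket_covering} this gives $p^*\in\mathcal B_q$, and the equality of minima then follows at once, since $\mathcal B_q\subset\mathcal N$ and $\mathcal B_q$ contains a global minimizer. The argument is a discrete version of the first-order optimality condition behind \eqref{eq:diff-contact}, combined with the twist relation $T(\boldsymbol x,\nabla_{\boldsymbol x}c(\boldsymbol x,\boldsymbol y))=\boldsymbol y$.

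\emph{Step 1 (discrete optimality on the one-ring).} By Assumption \ref{ass:disc_exact_main} and $h\ll d_*$, the full patch $\omega_{p^*}$ of \eqref{patch_def} lies in $\operatorname{int}\Omega$. Define $\varphi_h:=\mathcal I_h c(\cdot,\boldsymbol y_q)-u_h$, which is piecewise linear on $\mathcal T_h$; its vertices are exactly the nodes in $\mathcal N$, and its nodal values are the objective values in \eqref{disc_ct_u}. I claim that $0\in\operatorname{co}\{\mathcal P_{p^*}\nabla(\varphi_h|_k):k\in\omega_{p^*}\}$. Suppose not. Then there is a tangent direction $\boldsymbol d$ with $\nabla(\varphi_h|_k)\cdot\boldsymbol d<0$ for all $k\in\omega_{p^*}$, up to the $O(h)$ tilt between planar faces and the tangent plane, which is absorbed by working in a local chart. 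Moving from $\boldsymbol x_{p^*}$ along $\boldsymbol d$ inside some $k\in\omega_{p^*}$ strictly decreases $\varphi_h$ until the ray hits $\partial k$. That exit point is a convex combination of vertices of $k$, so some vertex has a nodal value strictly below $\varphi_h(\boldsymbol x_{p^*})$. This contradicts the nodal minimality of $p^*$.

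\emph{Step 2 (linearization of the cost).} On $\Omega\times\Omega^*$ we have $1-\boldsymbol x\cdot\boldsymbol y\ge\delta_c$, so $c$ is smooth with uniformly bounded derivatives. Shape regularity then gives $|\mathcal P_{p^*}\nabla(\mathcal I_hc(\cdot,\boldsymbol y_q)|_k)-\boldsymbol a|\le C_1h$ for all $k\in\omega_{p^*}$, where $\boldsymbol a:=\nabla_{\boldsymbol x}c(\boldsymbol x_{p^*},\boldsymbol y_q)$. Since the element gradient of $\varphi_h$ is the cost gradient minus $\boldsymbol g_k$, Step 1 yields $\operatorname{dist}(\boldsymbol a,S_{p^*}^h)\le C_1h$, with $S_{p^*}^h$ the hull in \eqref{eq:one-hull}. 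Next I find an element of $\mathbf G_{p^*}$ near $\boldsymbol a$, splitting into two cases.
\begin{itemize}
\item If $p^*\in\mathcal N_1$, the covering estimate \eqref{gradient_covering} gives $\boldsymbol g_{p^*,j}\in\mathbf G_{p^*}$ with $|\boldsymbol a-\boldsymbol g_{p^*,j}|\le(C_1+C_{\rm samp})h$.
\item If $p^*\in\mathcal N_0$, criterion \eqref{node_classification} places all of $S_{p^*}^h$ within $\epsilon=\gamma h$ of $\mathbf g_{p^*}$, so $|\boldsymbol a-\mathbf g_{p^*}|\le(C_1+\gamma)h$.
\end{itemize}

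\emph{Step 3 (transfer to the target and the bucket).} The twist relation gives $T(\boldsymbol x_{p^*},\boldsymbol a)=\boldsymbol y_q$. The map $\boldsymbol p\mapsto T(\boldsymbol x,\boldsymbol p)$ is Lipschitz with a uniform constant $L$, since its derivative is bounded by an absolute constant. Hence $|\boldsymbol y_{p^*,j}-\boldsymbol y_q|\le L(C_1+C_{\rm samp}+\gamma)h$. The projection $\Pi$ is bi-Lipschitz with constant $L_\Pi$ on the relevant compact set, so
\[
|\boldsymbol Y_{p^*,j}-\boldsymbol Y_q|\le L_\Pi L(C_1+C_{\rm samp}+\gamma)h .
\]
Choosing $\kappa\ge L_\Pi L(C_1+C_{\rm samp}+\gamma)$ makes this at most $R=\kappa h$. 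The left inclusion in \eqref{bucket_covering} then places $\boldsymbol Y_{p^*,j}$ in $\operatorname{Nbr}_R(\boldsymbol Y_q)$, so $p^*\in\mathcal B_q$ by \eqref{eq:bucket}. All constants are independent of $h$ and $q$ by quasi-uniformity and compactness.

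I expect the main obstacle to be Step 1 together with the geometric bookkeeping. The faces of $\mathcal T_h$ are planar while the gradients are projected onto the tangent plane at $\boldsymbol x_{p^*}$, so the separating-direction argument must be carried out in a chart. There, projected face gradients differ from chart gradients by $O(h)$, and these errors have to be folded into $C_1$ without spoiling the strict decrease used in the contradiction. To handle this I would argue the contradiction directly with the exact element gradients in the chart, where the strict decrease is exact, and only afterwards pass to $\mathcal P_{p^*}$, paying an $O(h)$ term in Step 2.
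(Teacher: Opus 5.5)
Your proof is correct. It differs from the paper's only in how you obtain the key estimate $\operatorname{dist}(\boldsymbol a_q,S_{p^*}^h)=O(h)$. The rest is the same: the covering by $\mathbf G_{p^*}$, the twist identity $T(\boldsymbol x_{p^*},\boldsymbol a_q)=\boldsymbol y_q$, the Lipschitz bound for $\Pi\circ T$, and the choice of $\kappa$.

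The paper keeps the exact cost in $\Phi_q=c(\cdot,\boldsymbol y_q)-u_h$. It uses the $L^\infty$ interpolation error to show that $\boldsymbol x_{p^*}$ is an $O(h^2)$-almost minimizer of $\Phi_q$ on $\omega_{p^*}$. It then turns this into an $O(h)$ gradient bound using three ingredients: a separating direction, a Taylor expansion of $c$, and the quantitative mesh fact that every direction admits a segment of length $r_*h$ inside a single element of the patch.

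You instead interpolate the cost as well, so nodal minimality becomes exact local minimality of the piecewise-linear $\varphi_h$. This gives the exact discrete optimality condition $0\in\operatorname{co}\{\nabla(\varphi_h|_k)\}$, using only that the patch is a neighborhood of an interior node. The $O(h)$ then comes solely from the $W^{1,\infty}$ interpolation error of $c$ and the face tilt. The tilt, which you flagged as the main obstacle, is in fact only $O(h^2)|\boldsymbol g_k|$ for $u_h$. This is because $\boldsymbol g_k$ is orthogonal to a face normal lying within $O(h)$ of $\boldsymbol x_{p^*}$.

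What each route buys:
\begin{itemize}
\item Your route is more elementary, and it cleanly separates the combinatorial optimality from the approximation error.
\item The paper's perturbative route tolerates approximate minimality: an $O(h^2)$-suboptimal node is still captured. Your exact-hull argument could handle that case only by importing the paper's $r_*h$ segment lemma. The paper's route also mirrors the continuous subdifferential reasoning behind \eqref{eq:diff-contact}.
\end{itemize}

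You are also more careful than the paper on two points. You make explicit that $\omega_{p^*}$ is interior, which follows from $h\ll d_*$. You also treat $p^*\in\mathcal N_0$ separately via the classification criterion \eqref{node_classification}. The paper instead applies \eqref{gradient_covering} uniformly, even though $\mathbf G_p=\{\mathbf g_p\}$ for regular nodes.
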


\begin{proof}
Let $p^*=p^*(q)$ and $\Phi_q(\boldsymbol x):=c(\boldsymbol x,\boldsymbol y_q)-u_h(\boldsymbol x)$.
First, nodal minimality gives a local minimality estimate on $\omega_{p^*}$. For any $k\in\omega_{p^*}$ and any vertex
$\boldsymbol z$ of $k$,
\[
    c(\boldsymbol z,\boldsymbol y_q)-u_h(\boldsymbol z)
    =
    c(\boldsymbol z,\boldsymbol y_q)-u_H(\boldsymbol z)
    \ge
    \Phi_q(\boldsymbol x_{p^*}).
\]
Let $I_kc(\cdot,\boldsymbol y_q)$ be the affine nodal interpolant of $c(\cdot,\boldsymbol y_q)$ on $k$. At each vertex of $k$ one has $I_kc=c$,
so the nodal values of $I_kc-u_h$ are bounded below by
$\Phi_q(\boldsymbol x_{p^*})$. Since an affine function on a simplex attains
its minimum at a vertex,
\[
    I_kc(\boldsymbol x,\boldsymbol y_q)-u_h(\boldsymbol x)
    \ge
    \Phi_q(\boldsymbol x_{p^*}),
    \qquad \forall\boldsymbol x\in k.
\]
The $C^2$ interpolation estimate gives $|c(\boldsymbol x,\boldsymbol y_q)-I_kc(\boldsymbol x,\boldsymbol y_q)|\le C_Ih^2$ uniformly in $q$, and therefore
\begin{equation}\label{eq:main_local_almost_min}
    \Phi_q(\boldsymbol x)
    \ge
    \Phi_q(\boldsymbol x_{p^*})-C_Ih^2,
    \qquad \boldsymbol x\in\omega_{p^*}.
\end{equation}

Let $\boldsymbol a_q=\nabla_{\boldsymbol x}c(\boldsymbol x_{p^*},\boldsymbol y_q)$. For the set \eqref{eq:one-hull}, we claim that
\begin{equation}\label{eq:main_discrete_contact}
d:=\operatorname{dist}(\boldsymbol a_q,S_{p^*}^h)\le C_{\rm con}h.
\end{equation}
If $d>0$, let $\boldsymbol s_0$ be the Euclidean projection of $\boldsymbol a_q$ onto $S_{p^*}^h$ and set $\boldsymbol e=\frac{\boldsymbol s_0-\boldsymbol a_q}{d}$. Then
\begin{equation}\label{eq:proof_in1}
    (\boldsymbol a_q-\boldsymbol s)\cdot\boldsymbol e\le -d, \qquad\forall s\in S_{p^*}^h.
\end{equation}
By the standard geometry of shape-regular quasi-uniform triangulations \cite{BrennerScott2008}, there exists $r_*>0$, independent of $h$ and $p^*$, such that for every unit tangent direction $\boldsymbol e$, some element $k_{\boldsymbol e}\in\omega_{p^*}$ contains, in the local mesh coordinates, the segment $\boldsymbol X(t)=\boldsymbol x_{p^*}+t\boldsymbol e$, $0\le t\le r_*h$. Let $\boldsymbol s_{k_{\boldsymbol e}}:=\mathcal P_{p^*}\boldsymbol g_{k_{\boldsymbol e}}\in S_{p^*}^h$.
Since $u_h$ is affine on the refined linear element $k_{\boldsymbol e}$, along this segment we have
\begin{equation}\label{proof:in2}
    u_h(\boldsymbol X(t))-u_h(\boldsymbol X(0))
=t\boldsymbol s_{k_{\boldsymbol e}}\cdot\boldsymbol e,
\qquad 0\le t\le r_*h,
\end{equation}
On the other hand, Taylor's theorem gives
\begin{equation}\label{proof:in3}
     c(\boldsymbol X(t),\boldsymbol y_q)
    -c(\boldsymbol X(0),\boldsymbol y_q)
    =
    t\,\boldsymbol a_q\cdot\boldsymbol e+O(t^2).
\end{equation}
Combining \eqref{proof:in2} and \eqref{proof:in3} with \eqref{eq:proof_in1}, we obtain
\[
    \Phi_q(\boldsymbol X(t))
    -\Phi_q(\boldsymbol X(0))
    \le -td+Ct^2.
\]
Taking $t=r_*h$ and using \eqref{eq:main_local_almost_min} yields $-C_Ih^2\le -r_*hd+Cr_*^2h^2$, hence \eqref{eq:main_discrete_contact}. By \eqref{eq:main_discrete_contact} and \eqref{gradient_covering}, we can choose $\boldsymbol s_q\in S_{p^*}^h$ and
$\boldsymbol g_{p^*,j}\in\mathbf G_{p^*}$ such that
\[
    |\boldsymbol s_q-\boldsymbol a_q|\le C_{\rm con}h,\qquad
    |\boldsymbol g_{p^*,j}-\boldsymbol s_q|\le C_{\rm samp}h.
\]
Since $T(\boldsymbol x_{p^*},\boldsymbol a_q)=\boldsymbol y_q$ and
$\Pi\circ T$ is Lipschitz,
\[
    |\boldsymbol Y_{p^*,j}-\Pi(\boldsymbol y_q)|
    \le C_T(C_{\rm con}+C_{\rm samp})h.
\]
Taking $\kappa\ge C_T(C_{\rm con}+C_{\rm samp})$ gives
$\boldsymbol Y_{p^*,j}\in B(\Pi(\boldsymbol y_q),R)\subset
\operatorname{Nbr}_R(\boldsymbol Y_q)$, and therefore $p^*(q)\in\mathcal B_q$. Since $p^*(q)$ is a global nodal minimizer, the restricted and full minima agree.

\end{proof}

In general, high-contrast illumination targets consist of finitely many regions or inclusions, rather than an infinitely oscillatory or fractal interface
structure. Therefore, the singular nodes are expected to concentrate along finitely many curves and near isolated corners or junctions. This motivates the following assumption.

\begin{assumption}\label{ass:main_linear_complexity}
Let $N:=\#\mathcal N\simeq h^{-2}$, and $\mathcal N=\mathcal N_0\cup\mathcal N_1$ be the numerical classification defined in \eqref{node_classification}. The singular set is further decomposed according to the affine dimension of the one-ring gradient hull \eqref{eq:one-hull},
\begin{equation}\label{eq:decomp}
\mathcal N_1=\mathcal N_1^{(1)}\cup\mathcal N_1^{(2)},\qquad \mathcal N_1^{(d)}:=\left\{p\in\mathcal N_1:\dim\operatorname{aff}(S_p^h)=d
\right\},\quad d=1,2.
\end{equation}
Here $\operatorname{aff}(S_p^h)$ denotes the affine span of $S_p^h$. The following bounds are assumed:
\begin{equation}
    \#\mathcal N_1^{(1)}\le Ch^{-1},\qquad
\#\mathcal N_1^{(2)}\le C,
\end{equation}
where $C$ is a constant independent of $h$.
\end{assumption}

Recall the notation $J_p:=\#\mathbf G_p$, where $\mathbf G_p$ is the sampled set constructed in \eqref{subgradient_sampling}, and let $\mathcal B_q$ be the local candidate set defined in \eqref{eq:bucket}. The resulting complexity estimate is as follows.

\begin{proposition}[Linear algorithmic complexity]\label{prop:main_linear_complexity}
Under Assumptions \ref{ass:disc_exact_main} and
\ref{ass:main_linear_complexity}, and with the sampling spacing $\epsilon=\gamma h$ for a fixed constant $\gamma>0$ independent of $h$, the candidate generation and the discrete search step satisfy the linear estimates
\[
\sum_{p\in\mathcal N}J_p=O(N),
\qquad
\sum_{q\in\mathcal N^*}\#\mathcal B_q=O(N).
\]
Consequently, the fast $c$-transform algorithm takes $O(N)$ operations.
\end{proposition}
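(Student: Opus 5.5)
We bound the two sums separately and then add the cost of the remaining steps of Algorithm~\ref{F}, each of which is a single pass over nodes or elements.

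\textbf{First sum.} We split $\sum_p J_p$ according to the decomposition $\mathcal N=\mathcal N_0\cup\mathcal N_1^{(1)}\cup\mathcal N_1^{(2)}$ of Assumption~\ref{ass:main_linear_complexity}.

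For $p\in\mathcal N_0$ we have $J_p=1$, which contributes at most $N$.

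For $p\in\mathcal N_1$ we must bound the number of barycentric samples of $\operatorname{co}(\widehat{\mathcal G}_p)$ with spacing $\epsilon=\gamma h$. The key quantity is the diameter of the hull.
\begin{itemize}
\item On a quasi-uniform mesh the element gradients $\boldsymbol g_k$ are difference quotients $O(\|u_h\|_{W^{1,\infty}})$.
\item The nodal values of $u_H$ are values of a numerically $c$-concave function, i.e.\ an infimum of the uniformly Lipschitz family $c(\cdot,\boldsymbol y)-v(\boldsymbol y)$, using $1-\boldsymbol x\cdot\boldsymbol y\ge\delta_c$.
\item Hence $|\boldsymbol g_k|\le L$ with $L$ independent of $h$, and $\operatorname{diam}S_p^h\le 2L$.
\end{itemize}
The number of samples is then governed by the affine dimension of the hull:
\begin{itemize}
\item If $\dim\operatorname{aff}(S_p^h)=1$, the hull is a segment of length $\le 2L$. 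Its samples number $O(L/\epsilon)=O(h^{-1})$, so $\mathcal N_1^{(1)}$ contributes $O(h^{-1})\cdot O(h^{-1})=O(h^{-2})$.
\item If $\dim\operatorname{aff}(S_p^h)=2$, the hull is a polygon of diameter $\le 2L$ with $O(h^{-2})$ samples. Since there are only $O(1)$ such nodes, this again gives $O(h^{-2})$.
\end{itemize}
Because $N\simeq h^{-2}$, we conclude $\sum_pJ_p=O(N)$.

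We should also note that the hull is spanned by the at most $\#\omega_p\le C$ cluster centroids, with $C$ set by shape regularity. Constructing each hull and its samples therefore costs $O(1+J_p)$.

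\textbf{Second sum.} We use a double-counting argument over the buckets:
\[
\sum_q\#\mathcal B_q\le\sum_q\#\{(p,j):\boldsymbol Y_{p,j}\in\operatorname{Nbr}_R(\boldsymbol Y_q)\}=\sum_{(p,j)}\#\{q:\boldsymbol Y_{p,j}\in\operatorname{Nbr}_R(\boldsymbol Y_q)\}.
\]
By \eqref{bucket_covering}, the condition $\boldsymbol Y_{p,j}\in\operatorname{Nbr}_R(\boldsymbol Y_q)$ forces $|\boldsymbol Y_q-\boldsymbol Y_{p,j}|\le C_{\rm nbr}\kappa h$.

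Since $\Pi$ is bi-Lipschitz on $\Omega^*$ and $\mathcal T_h^*$ is quasi-uniform, a planar ball of radius $C_{\rm nbr}\kappa h$ contains at most $C(\kappa)$ target nodes. The required packing bound holds because the nodes are $ch$-separated. It follows that
\[
\sum_q\#\mathcal B_q\le C(\kappa)\sum_pJ_p=O(N).
\]
Here we assume the target mesh size is comparable to $h$, i.e.\ $M\simeq N$, as in the comparable-mesh setting of Assumption~\ref{ass:disc_exact_main}.

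\textbf{Remaining steps.} Each of the other steps of Algorithm~\ref{F} is linear:
\begin{itemize}
\item Gradient reconstruction and classification (Steps 1--2) take $O(\sum_p\#\omega_p)=O(N)$.
\item Projection and hashing into cells (Steps 5--6) take $O(M+\sum J_p)$, using a hash table or a grid array over the bounded chart.
\item Step 8 costs $O(\sum_q\#\mathcal B_q)$.
\end{itemize}

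\textbf{Main obstacle.} The delicate point is the uniform bound $\operatorname{diam}S_p^h\le 2L$, i.e.\ a uniform $W^{1,\infty}$ bound on $u_h$ that does not blow up as $h\to0$. I would derive it from the Lipschitz property of $c$-transforms of the discrete iterate, as sketched above. If that argument is not available, I would state it as an implicit consequence of the numerical $c$-concavity of $u_H$.
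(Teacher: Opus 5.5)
Your argument is essentially the paper's proof. It uses the same split of $\sum_p J_p$ over $\mathcal N_0,\mathcal N_1^{(1)},\mathcal N_1^{(2)}$ with $1$, $O(h^{-1})$, $O(h^{-2})$ samples per node, and the same exchange of summation with a packing bound from the $O(h)$-separation of the projected target nodes; your accounting of the remaining algorithmic steps is a harmless addition. The paper simply asserts that the relevant gradients stay bounded. Note that your $c$-concavity justification does not literally apply inside Algorithm~\ref{alg:bidirectional-solver}, where the fast $c$-transform is applied to the Poisson-updated $\tilde u_H^{n+1}$ rather than to a $c$-transform. It is therefore cleaner to state this boundedness as a hypothesis, as your fallback suggests.
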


\begin{proof}
For regular nodes $p\in\mathcal N_0$, only one recovered gradient \eqref{nodal_grad_recon} is used, so $J_p=1$. For singular nodes, the sampled set $\mathbf G_p$ is obtained with spacing $O(h)$ on the gradient hull $S_p^h$. Since the relevant gradients remain in a bounded set, a one-dimensional hull contributes $O(h^{-1})$ samples and a two-dimensional hull contributes $O(h^{-2})$ samples. Therefore,
\[
\sum_{p\in\mathcal N}J_p
=\sum_{p\in\mathcal N_0}J_p+\sum_{p\in\mathcal N_1^{(1)}}J_p+\sum_{p\in\mathcal N_1^{(2)}}J_p  
\le\#\mathcal N_0+Ch^{-1}\#\mathcal N_1^{(1)}
+Ch^{-2}\#\mathcal N_1^{(2)}.
\]
Using Assumption \ref{ass:main_linear_complexity}, we obtain
\begin{equation}\label{eq:es1}
\sum_{p\in\mathcal N}J_p
\le C\#\mathcal N + Ch^{-2}+Ch^{-2}\le CN.
\end{equation}

By the definition of $\mathcal B_q$ and the outer bucket inclusion \eqref{bucket_covering},
\[
\#\mathcal B_q
\le\#\left\{(p,j):|\boldsymbol Y_{p,j}-\boldsymbol Y_q|\le C_{\rm nbr}R
\right\}.
\]
Summing over $q$ and exchanging the order of summation gives
\begin{equation}\label{eq:es2}
    \sum_{q\in\mathcal N^*}\#\mathcal B_q
\le\sum_{p\in\mathcal N}\sum_{j=1}^{J_p}
\#\left\{q\in\mathcal N^*:|\boldsymbol Y_{p,j}-\boldsymbol Y_q|\le C_{\rm nbr}R
\right\}.
\end{equation}

By the quasi-uniformity of the target mesh, the target nodes are separated on the sphere, 
$d_{\mathbb S^2}(\boldsymbol y_q,\boldsymbol y_{q'})\ge C_{\text{Sep}}h$ for any $q\neq q'$.
Since the projection $\Pi$ is bi-Lipschitz on the compact target region, there exists $C_\Pi>0$ such that $|\Pi(\boldsymbol y_q)-\Pi(\boldsymbol y_{q'})|\ge C_\Pi d_{\mathbb S^2}(\boldsymbol y_q,\boldsymbol y_{q'})$. Therefore, the projected nodes satisfy
\begin{equation}\label{eq:Sep}
    |\boldsymbol Y_q-\boldsymbol Y_{q'}|\ge m_* h,\qquad m_*:=C_{\text{Sep}}C_{\Pi}.
\end{equation}

For fixed $(p,j)$, set $\mathcal Q_{p,j}:=\{q\in\mathcal N^*:|\boldsymbol Y_{p,j}-\boldsymbol Y_q|\le C_{\rm nbr}R\}$.
Since $R=\kappa h$, all nodes in $\mathcal Q_{p,j}$ lie in
$B(\boldsymbol Y_{p,j},C_{\rm nbr}\kappa h)$. By the separation estimate
\eqref{eq:Sep}, the balls
$B(\boldsymbol Y_q,m_*h/2)$, $q\in\mathcal Q_{p,j}$, are pairwise disjoint
and contained in
$B(\boldsymbol Y_{p,j},C_{\rm nbr}\kappa h+m_*h/2)$.
Comparing their areas gives
\[
\#\mathcal Q_{p,j}
\le
\left(1+\frac{2C_{\rm nbr}\kappa}{m_*}\right)^2
=:C_\kappa,
\]
where $C_\kappa$ is independent of $h$, $p$, and $j$.

Returning to \eqref{eq:es2} and combining it with \eqref{eq:es1},
\[
\sum_{q\in\mathcal N^*}\#\mathcal B_q
\le\sum_{p\in\mathcal N}\sum_{j=1}^{J_p} C_\kappa
=C_\kappa\sum_{p\in\mathcal N}J_p=O(N).
\]
\end{proof}

\section{Pushforward Evaluation and Poisson Updates}\label{sec:bidirectional}
This section describes the implementation of the two remaining components of the bidirectional framework \eqref{eq:bidirectional}, namely the evaluation of pushforward-measure residuals and the finite element Poisson update.

\subsection{Pushforward Measure Evaluation}\label{subsec:pushforward}
We evaluate the pushforward measures in (\ref{eq:new_grads}) through a common elementwise framework. For a measurable set $U\subset\Omega$, the pushforward of the target measure by the reverse map $T_{u^c}:\Omega^*\to\Omega$ is defined by
\[
(T_{u^c})_\#g (U)
:=\int_{T_{u^c}^{-1}(U)}g(\boldsymbol y)\,\mathrm dS(\boldsymbol y).
\]
When the reconstructed maps $T_u$ and $T_{u^c}$ are consistent, the set $T_{u^c}^{-1}(U)\subset\Omega^*$ can be evaluated through $T_u(U)$. Thus, the average density assigned to $U$ can be approximated by
\begin{equation}\label{pushforward_cell}
\overline g_U:=\frac{1}{|U|}(T_{u^c})_{\#}g(U)
\approx\frac{1}{|U|}\int_{T_u(U)}g(\boldsymbol y)\,\mathrm dS(\boldsymbol y),
\end{equation}
where $|U|$ denotes the surface area of $U$. In the finite element discretization, $U$ is taken to be an element $k\in\mathcal T_h$.  

For the target density $g$ and a $c$-concave function $u_H\in V_H$, we first interpolate $u_H$ to the refined linear space to get $u_h = \mathcal{I}_h(u_H)$ and then compute the mass of $g$ over $T_{u_h}(k)$. Since this image is not a standard element of the target mesh, a direct integration on the sphere is inconvenient, especially when the target illumination is given in finite element form.

To efficiently query the target values at integration points, we introduce a central projection used only for mass evaluation. Let
$\widetilde P_d:=\{(Y_1,Y_2,d):(Y_1,Y_2)\in\mathbb R^2\}\subset\mathbb R^3$ be the plane at height $d>0$, and identify it with the coordinate plane $P_d\simeq\mathbb R^2$ through $\boldsymbol Y=(Y_1,Y_2)$. We define
\[\begin{aligned}
\pi_{d}:\qquad &\mathbb{S}_{+}^{2}&\longrightarrow &\qquad P_d\,,\\
&\boldsymbol{y}=(y_1, y_2, y_3)& \mapsto & \qquad \pi_{d}(\boldsymbol{y})= \left(d\frac{y_1}{y_3},\, d\frac{y_2}{y_3}\right).
\end{aligned}\]
This projection is used only for the mass integral and should be distinguished from the stereographic projection introduced earlier for fast candidate search.
Under this projection, the spherical target density is represented on the
far-field plane by the measure identity
\[
\int_A g(\boldsymbol y)\,\mathrm dS(\boldsymbol y)
=\int_{\pi_d(A)}\tilde g(\boldsymbol Y)\,\mathrm d\boldsymbol Y, \qquad A\subset\Omega^*.
\]
Here $\tilde g(\boldsymbol Y)=g(\pi_d^{-1}(\boldsymbol Y))\frac{d}{\left(|\boldsymbol Y|^2+d^2\right)^{3/2}}$. In the implementation, we store and query the equivalent planar density $\tilde g$. The map $\pi_d$ allows the spherical integral over $T_{u_h}(k)\subset\mathbb S_+^2$ to be written as a planar integral over $\pi_d(T_{u_h}(k))\subset P_d$,
\begin{equation}\label{e_g}
    \int_{T_{u_h}(k)}g(\boldsymbol{y})\,\mathrm{d} S(\boldsymbol{y})=\int_{\pi_d(T_{u_h}(k))}\tilde{g}(\boldsymbol{Y})\,\mathrm{d}\boldsymbol{Y}.
\end{equation}
At each quadrature point of the planar integral, the value of $\tilde g$ can then be queried efficiently on a regular square grid, as in Section \ref{subsec:stereo-search}.

To evaluate the mass of $\tilde{g}$ over $\pi_d(T_{u_h}(k))$, we use a nodal reconstruction of this image. For the refined linear function $u_h=\mathcal I_h u_H$, we compute the nodal gradients $\{\mathbf g_p\}_{p\in\mathcal N}$ by \eqref{nodal_grad_recon}. Each source node is mapped to a target point and then projected to $P_d$,
\[
\boldsymbol Y_{p,0}:=\pi_d(T(\boldsymbol x_p,\mathbf g_p)),
\qquad p\in\mathcal N.
\]
For each $k\in\mathcal T_h$, write $k=\operatorname{Triangle}(\boldsymbol x_{\kappa_1},\boldsymbol x_{\kappa_2},\boldsymbol x_{\kappa_3})$, where $\kappa_i=\kappa_i(k)$, $i=1,2,3$, are the corresponding global node indices. We approximate the projected image of $k$ by
\begin{equation}\label{tri_approx}
    \pi_d(T_{u_h}(k))\approx \; \widetilde{K}_k
    :=\operatorname{Triangle}
    (\boldsymbol Y_{\kappa_1,0},
     \boldsymbol Y_{\kappa_2,0},
     \boldsymbol Y_{\kappa_3,0})\subset P_d .
\end{equation}
Corresponding to \eqref{pushforward_cell}, the elementwise pushforward density on $k$ is then approximated by 
\begin{equation}\label{eq:push_approx}
    \left.(T_{u_h^c})_{\#}g\,\right|_{k}
    \approx\;
    g_k^h
    :=
    \frac{1}{|k|}
    \int_{\widetilde{K}_k}\tilde g(\boldsymbol Y)\,\mathrm d\boldsymbol Y,
\end{equation}
where $|k|$ denotes the surface area of the source element $k$. The integral over $\widetilde{K}_k$ is computed by Monte Carlo integration, with sample points generated in barycentric coordinates.

\begin{remark}[Nodal reconstruction of element images]
The triangle $\widetilde{K}_k$ in \eqref{tri_approx} approximates the projected image $\pi_d(T_{u_h}(k))$ from the transported nodal points. When the transport map is locally smooth, this is the standard affine approximation of the element image.

At non-differentiable points, the transport relation may be multivalued, so a single triangle cannot represent all possible image branches. However, the present reconstruction is used only to evaluate the transported mass against $g$. At the optimal OT matching, branches associated with zero-density target regions carry no target mass. Motivated by this structure, the nodal triangle is used as a practical approximation for elementwise mass evaluation during the iteration.
\end{remark}

To use this elementwise pushforward density in the finite element Poisson update, we recover a continuous nodal finite element function $f_{u_h}\in V_H$ from this elementwise pushforward density \eqref{eq:push_approx} by patch averaging,
\begin{equation}\label{pushforward_nodal}
f_{u_h}(\boldsymbol{x}):=\sum_{p\in\mathcal{N}} f_{u_h}(\boldsymbol{x}_p)\,\phi_p(\boldsymbol{x}),
\qquad
f_{u_h}(\boldsymbol{x}_p):=\frac{1}{\#\omega_p}
\sum_{k\in \omega_p} g_k^h .
\end{equation}
This defines a finite element approximation of $\left(T_{u^c}\right)_{\#}g$ in $V_H$.

The pushforward $(T_{v^c})_{\#}f$ is evaluated within the same elementwise framework. For the uniform source used in the numerical experiments below, $f$ is constant on its support, so the mass associated with a reconstructed image element reduces to the constant value of $f$ multiplied by the area of the portion lying inside $\operatorname{supp} f$. Thus, the two dual updates share the same pushforward mechanism, while the source-side integral admits a simpler geometric evaluation.

\subsection{Poisson Update and Boundary Data}\label{subsec:poisson-boundary}
Suppose that $u_H^n$ has been obtained at the $n$-th step, and set $u_h^n:=\mathcal I_h(u_H^n)$. Following Section \ref{subsec:pushforward}, we compute the finite element function $f_{u_h^n}\in V_H$ defined in \eqref{pushforward_nodal}. In this subsection, we denote it by $f_H^n:=f_{u_h^n}\in V_H$, with $f_H^n\approx(T_{(u_h^n)^c})_\#g$.
Here $f_H:=\mathcal I_H f\in V_H$ denotes the prescribed source density. We rescale $f_H^n$ so that it has the same total mass as $f_H$ by $r^n:=\bigl(\int_{\Omega}f_H(\boldsymbol x)\,\mathrm dS(\boldsymbol x)\bigr)\big/\bigl(\int_{\Omega}f_H^n(\boldsymbol x)\,\mathrm dS(\boldsymbol x)\bigr)$.
This normalization ensures that $r^n f_H^n$ and $f_H$ have the same total mass in the Poisson update.

For the update of $u_H^n$, we discretize the Poisson equation as $\Delta \tilde u_H^{n+1}=\Delta u_H^n-\mathrm{d}t\,\bigl(f_H-r^n f_H^n\bigr)$.
After integration by parts, the finite element weak formulation reads
\begin{equation}\label{poisson_weak_update}
\begin{aligned}
\left\langle\nabla \tau_H,\, \nabla \tilde u_H^{n+1}\right\rangle_{\Omega}
&=\left\langle\nabla \tau_H,\, \nabla u_H^n\right\rangle_{\Omega}
+\mathrm{d}t\,\left\langle \tau_H,\, f_H-r^n f_H^n\right\rangle_{\Omega} \\
&\quad+\left\langle \tau_H,\,(l_H^{n+1}-l_H^n)\cdot\boldsymbol n\right\rangle_{\partial \Omega},
\qquad \forall \tau_H\in V_H .
\end{aligned}
\end{equation}
Here $\boldsymbol n$ denotes the outward unit co-normal to $\partial\Omega$ within the spherical surface. The vector field $l_H^{n+1}$ provides the
prescribed boundary gradient data, whose normal component enters the Neumann condition, $\partial_{\boldsymbol n}\tilde u_H^{n+1}
=l_H^{n+1}\cdot \boldsymbol n$.

Specifically, the boundary data $l_H^{n+1}$ is constructed from the current boundary image in order to impose the boundary transport condition
\eqref{boundary_map} in the next update \cite{Far_ref}. To construct $l_H^{n+1}$, each current boundary image $T_{u_H^n}(\boldsymbol x_p)$ is projected onto the
target boundary $\partial\Omega^*$. The boundary gradient is then recovered from the contact relation as follows,
\begin{equation}\label{boundary_grad_data}
\begin{aligned}
    l_H^{n+1}(\boldsymbol{x})
    &:=\sum_{p\in \mathcal{N}_b}l_H^{n+1}(\boldsymbol{x}_p)\,\phi_p(\boldsymbol{x}),\\
    l_H^{n+1}(\boldsymbol{x}_p)
    &:=\nabla_{\boldsymbol{x}} c\left(
    \boldsymbol{x}_p,\,\operatorname{Proj}_{\partial \Omega^*}\left(T_{u_H^n}(\boldsymbol{x}_p)\right)
    \right).
\end{aligned}
\end{equation}
Here $\mathcal N_b$ is the set of boundary nodes defined in \eqref{boundary_nodes}. 

The same weak formulation is used for the two updates in the bidirectional iteration, with the residual and normal data replaced by the corresponding quantities on the current domain. 

The complete numerical implementation is summarized in Algorithm \ref{alg:bidirectional-solver}.

\begin{algorithm}[H]
    \caption{Bidirectional reflector iteration with measure residuals}
    \label{alg:bidirectional-solver}
    \begin{algorithmic}
    \STATE \textbf{Input:} $f_H,g_H,u_H^0,\mathrm dt,\epsilon,\kappa,
    \mathrm{tol},N_{\max}$.
    \STATE Precompute refined meshes, patches, boundary nodes, and projection grids.
    \STATE Initialize the boundary data $l_{u,H}^{0}$ and $l_{v,H}^{0}$.
    \FOR{$n=0,\ldots,N_{\max}-1$}
        \STATE $f_H^n\leftarrow \operatorname{PushForward}_{\Omega}(u_H^n,g_H)$,
        \quad
        $r_u^n\leftarrow
        {\int_{\Omega} f_H\,\mathrm dS}/
        {\int_{\Omega} f_H^n\,\mathrm dS}$.
        
        \STATE $l_{u,H}^{n+1}\leftarrow
        \operatorname{Boundary}_{\Omega}(u_H^n)$.
        
        \STATE $\tilde u_H^{n+1}\leftarrow
        \operatorname{Poisson}_{\Omega}
        (u_H^n,f_H-r_u^n f_H^n,
        l_{u,H}^{n+1}-l_{u,H}^{n})$.
        
        \STATE $v_H^n\leftarrow
        $c$\operatorname{-Transform}_{\Omega\to\Omega^*}
        (\tilde u_H^{n+1};\epsilon,\kappa)$.
        
        \STATE $g_H^n\leftarrow
        \operatorname{PushForward}_{\Omega^*}(v_H^n,f_H)$,
        \quad
        $r_v^n\leftarrow
        {\int_{\Omega^*} g_H\,\mathrm dS}/
        {\int_{\Omega^*} g_H^n\,\mathrm dS}$.
        
        \STATE $l_{v,H}^{n+1}\leftarrow
        \operatorname{Boundary}_{\Omega^*}(v_H^n)$.
        
        \STATE $\tilde v_H^{n+1}\leftarrow
        \operatorname{Poisson}_{\Omega^*}
        (v_H^n,g_H-r_v^n g_H^n,
        l_{v,H}^{n+1}-l_{v,H}^{n})$.
        
        \STATE $u_H^{n+1}\leftarrow
        $c$\operatorname{-Transform}_{\Omega^*\to\Omega}
        (\tilde v_H^{n+1};\epsilon,\kappa)$.
        
        \STATE $E^n\leftarrow
        \max\left\{
        \|f_H-r_u^n f_H^n\|_{L^1(\Omega)},
        \|g_H-r_v^n g_H^n\|_{L^1(\Omega^*)}
        \right\}$.
        
        \IF{$E^n<\mathrm{tol}$}
            \STATE \textbf{break}
        \ENDIF
    \ENDFOR
    \STATE \textbf{Output:} $u_H^{n+1}$, $v_H^n$, and
    $\Gamma_H^{n+1}=\{e^{u_H^{n+1}(\boldsymbol x)}
    \boldsymbol x:\boldsymbol x\in\Omega\}$.
    \end{algorithmic}
\end{algorithm}

\section{Numerical Examples}\label{sec:numerical-experiments}
\subsection{Experimental Setup}
Define the spherical caps $\mathbb{S}_{+\theta}:=\{\boldsymbol{x}\in\mathbb{S}^2:\boldsymbol{x}\cdot\boldsymbol{e}_z\ge\cos(\theta)\}$ and $\mathbb{S}_{-\theta}:=\{\boldsymbol{x}\in\mathbb{S}^2:\boldsymbol{x}\cdot\boldsymbol{e}_z\le-\cos(\theta)\}$ for $\theta\in(0,\pi/2)$, where $\boldsymbol{e}_z=(0,0,1)^T$. We take $\Omega=\mathbb{S}_{-\frac{3\pi}{8}}$ and $\Omega^*=\mathbb{S}_{+\frac{3\pi}{8}}$. The source density is uniform, $f(\boldsymbol{x})\equiv1$ on $\operatorname{supp}f=\mathbb{S}_{-\frac{\pi}{4}}$. For each target pattern, $g$ is normalized so that $\int_{\Omega}f\,dS=\int_{\Omega^*}g\,dS$.

Unless otherwise specified, all experiments use mesh level $100$ and $\mathrm{d}t=0.01$. Sections 5.5 and 5.6 vary the mesh density and time step, respectively. Here the mesh level denotes the number of subdivisions in the radial variable of the cap parametrization, and levels $40,\,80,\,100$ correspond to $4921$, $19441$, and $30301$ nodes, respectively. The iterative process is stopped when either a prescribed maximum of $80$ iterations is reached or the monitored discrete loss no longer decreases. The discrepancy between the prescribed and simulated irradiance distributions is assessed using the $H^{-1}$ and $L^2$ norms of the irradiance residual.

The following tests use zero-background target distributions, i.e., no positive lower bound $g\ge\epsilon>0$ is imposed on $\Omega^*$.

\subsection{Reconstruction of Binary Patterns}\label{ex:Binary}

In the first numerical test, we consider a binary target $g$ representing the letter ``A''. This target poses a challenge due to the sharp discontinuities between the illuminated region and the dark part of $\Omega^*$.

Fig.~\ref{fig:letterA} illustrates the numerical results.
Fig.~\ref{fig:letterA}(a) shows the prescribed target distribution $g$, while Fig.~\ref{fig:letterA}(b) displays the simulated irradiance $g_H$ obtained via Monte Carlo ray tracing. The results indicate that $g_H$ matches the target $g$ with high fidelity, effectively preserving the sharp edges and the topological features of the letter.

\begin{figure}[H]
    \centering
    \begin{subfigure}[t]{0.20\textwidth}
        \centering
        \includegraphics[width=\linewidth]{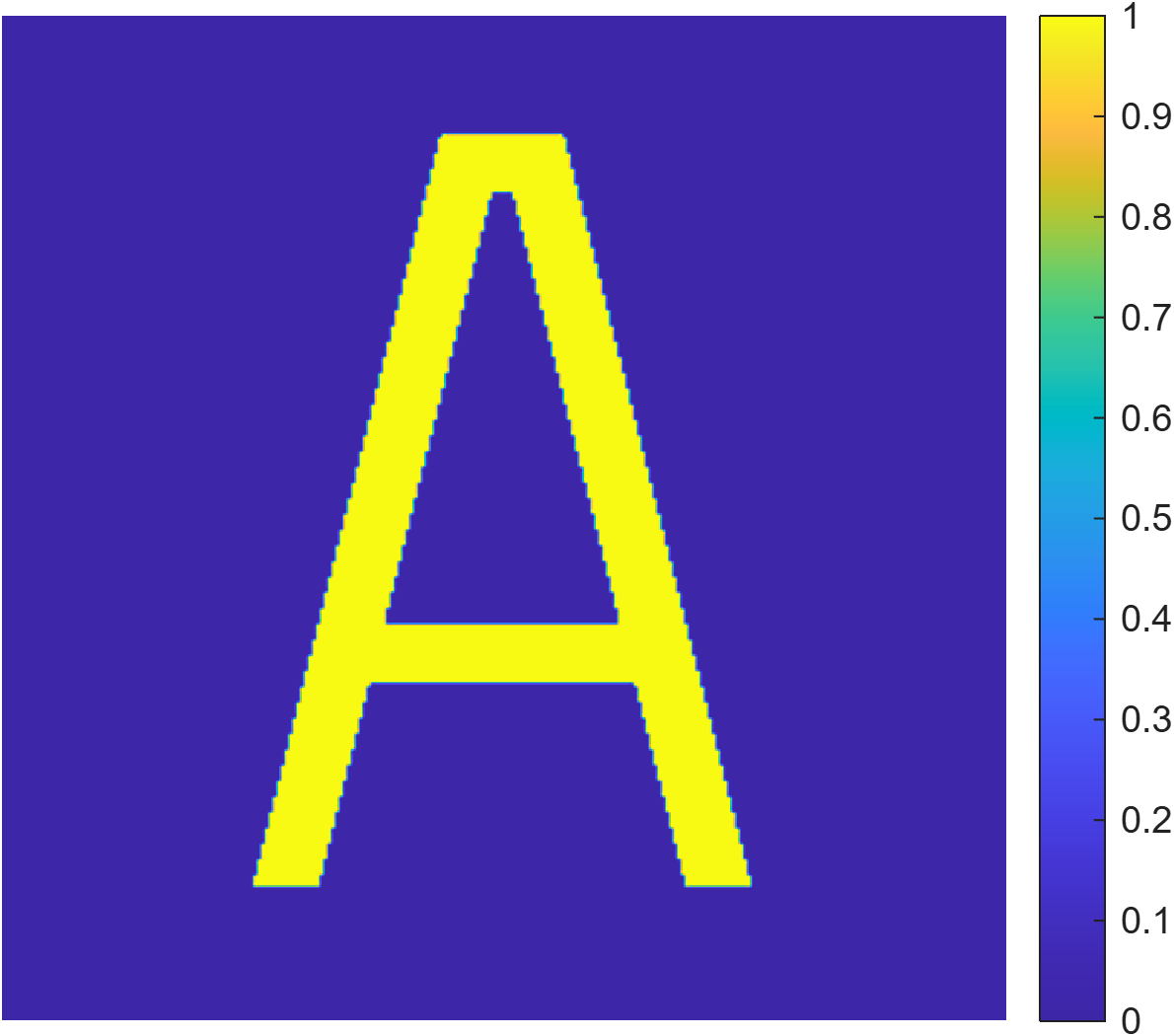}
        \caption{Target $g$}
    \end{subfigure}\hfill
    \begin{subfigure}[t]{0.20\textwidth}
        \centering
        \includegraphics[width=\linewidth]{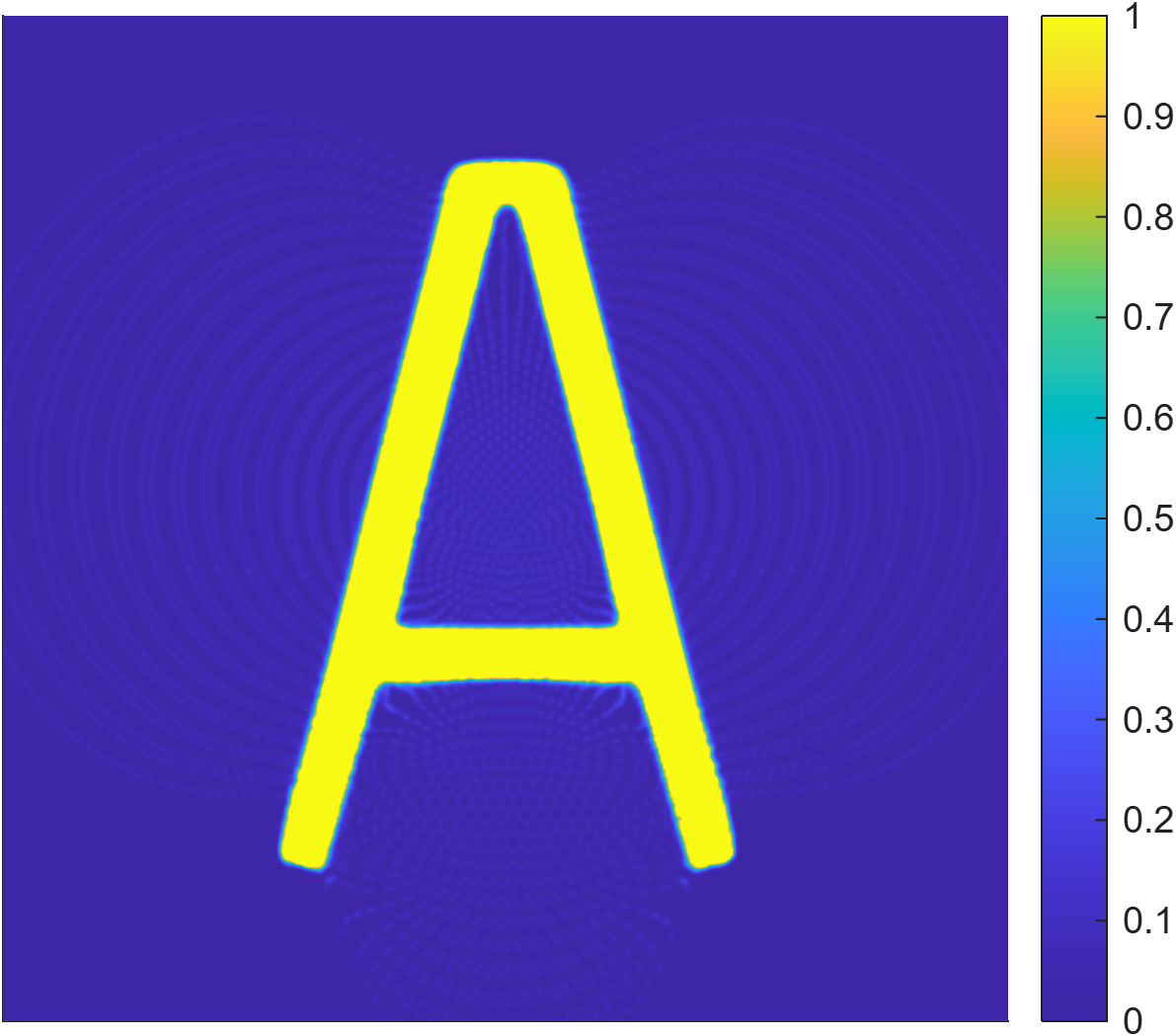}
        \caption{Ray tracing $g_H$}
    \end{subfigure}\hfill
    \begin{subfigure}[t]{0.20\textwidth}
        \centering
        \includegraphics[width=\linewidth]{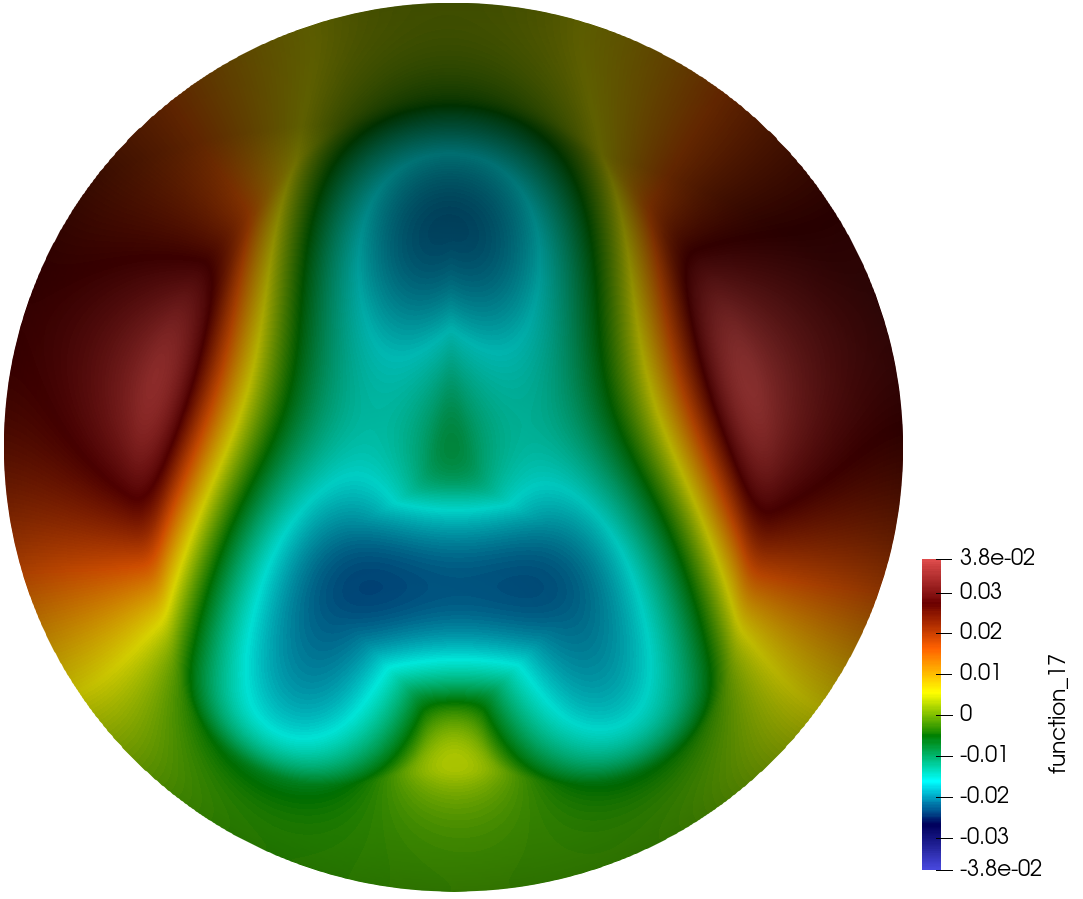}
        \caption{Potential $u_H$}
    \end{subfigure}\hfill
    \begin{subfigure}[t]{0.20\textwidth}
        \centering
        \includegraphics[width=\linewidth]{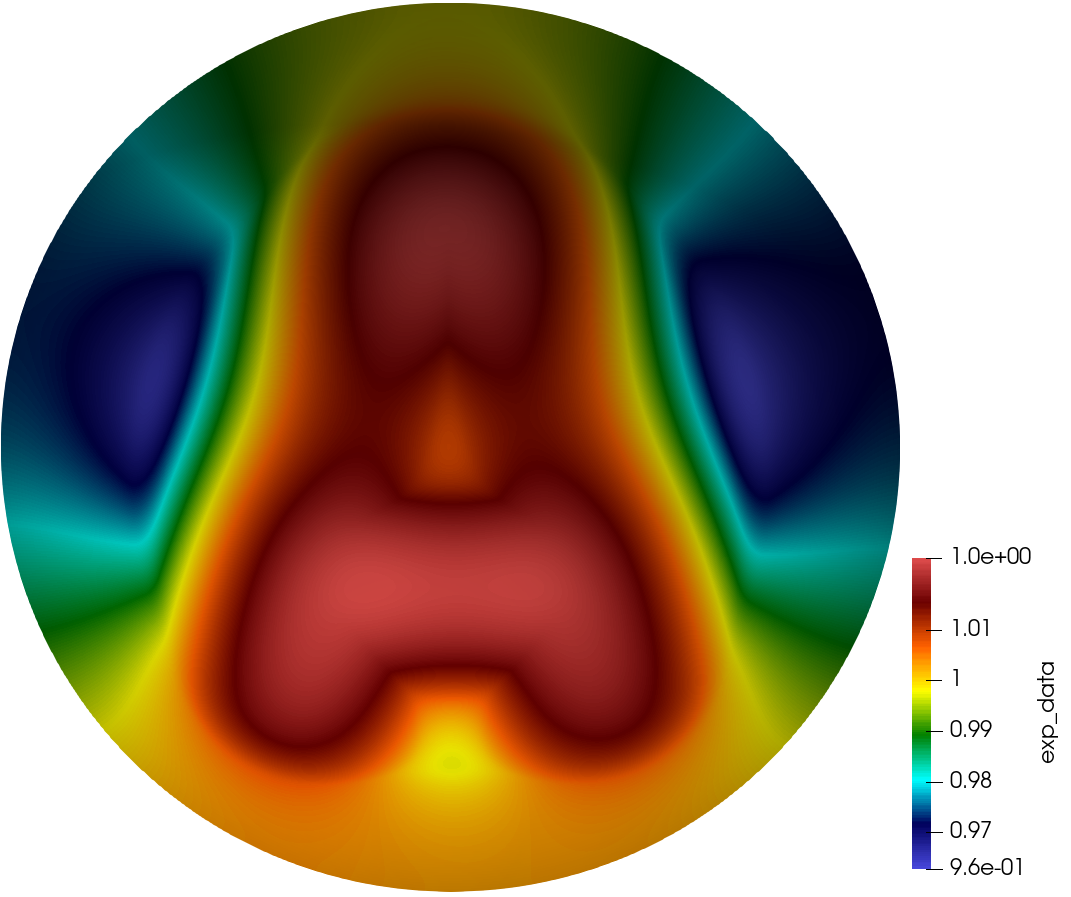}
        \caption{Radial function $\rho_H$}
    \end{subfigure}
    \captionsetup{width=0.9\textwidth}
    \caption{Results for the letter ``A'' target: prescribed target intensity,
    ray-tracing simulation, computed potential, and radial distance function
    $\rho_H=e^{u_H}$.}
    
    \label{fig:letterA}
\end{figure}

Fig.~\ref{fig:letterA}(c) shows the computed potential function $u_H$, and Fig.~\ref{fig:letterA}(d) shows the corresponding reflector surface represented by the radial function $\rho_H=e^{u_H}$. The reconstructed radial function
appears visually regular at the displayed resolution and does not exhibit visible high-frequency oscillations. This behavior is consistent with the expected Lipschitz-type regularity and indicates good numerical stability for this binary target.

\raggedbottom
\subsection{Convergence Analysis and Complex Patterns}

We next evaluate the performance and convergence of the proposed method using four representative targets: the letter ``A'', the text ``ZJU'' with multiple disconnected components, a binary ``Eagle'',
and a multi-level grayscale ``Eagle''.

Fig.~\ref{fig:convergence} displays the targets, the corresponding ray-tracing results, and the convergence curves. The ray-tracing results highlight two additional challenging scenarios:
\begin{itemize}
    \item \textbf{Disconnected components}: For the ``ZJU'' target, the simulation captures multiple disconnected illumination regions without visible topological distortion.
    \item \textbf{Continuous density variations}: For the grayscale ``Eagle'', the simulation reproduces varying intensity levels with small visible error, indicating that the measure-based iteration is not restricted to binary support constraints and can also handle spatially varying target intensities.
\end{itemize}

The convergence curves in Fig.~\ref{fig:convergence} show the $H^{-1}$ and $L^2$ residuals versus iteration. In all four cases, the residuals decay stably, with only small local fluctuations in some $L^2$ traces. The comparable behavior for ``ZJU'' and the grayscale ``Eagle'' indicates similar convergence characteristics for these complex targets.
\begin{figure}[H]
    \centering
    \begin{subfigure}[t]{0.20\textwidth}
        \centering
        \includegraphics[width=\linewidth]{target_A.png}
        \vspace{0.15em}
        \includegraphics[width=\linewidth]{simulation_A.png}
        \vspace{0.15em}
        \includegraphics[width=\linewidth]{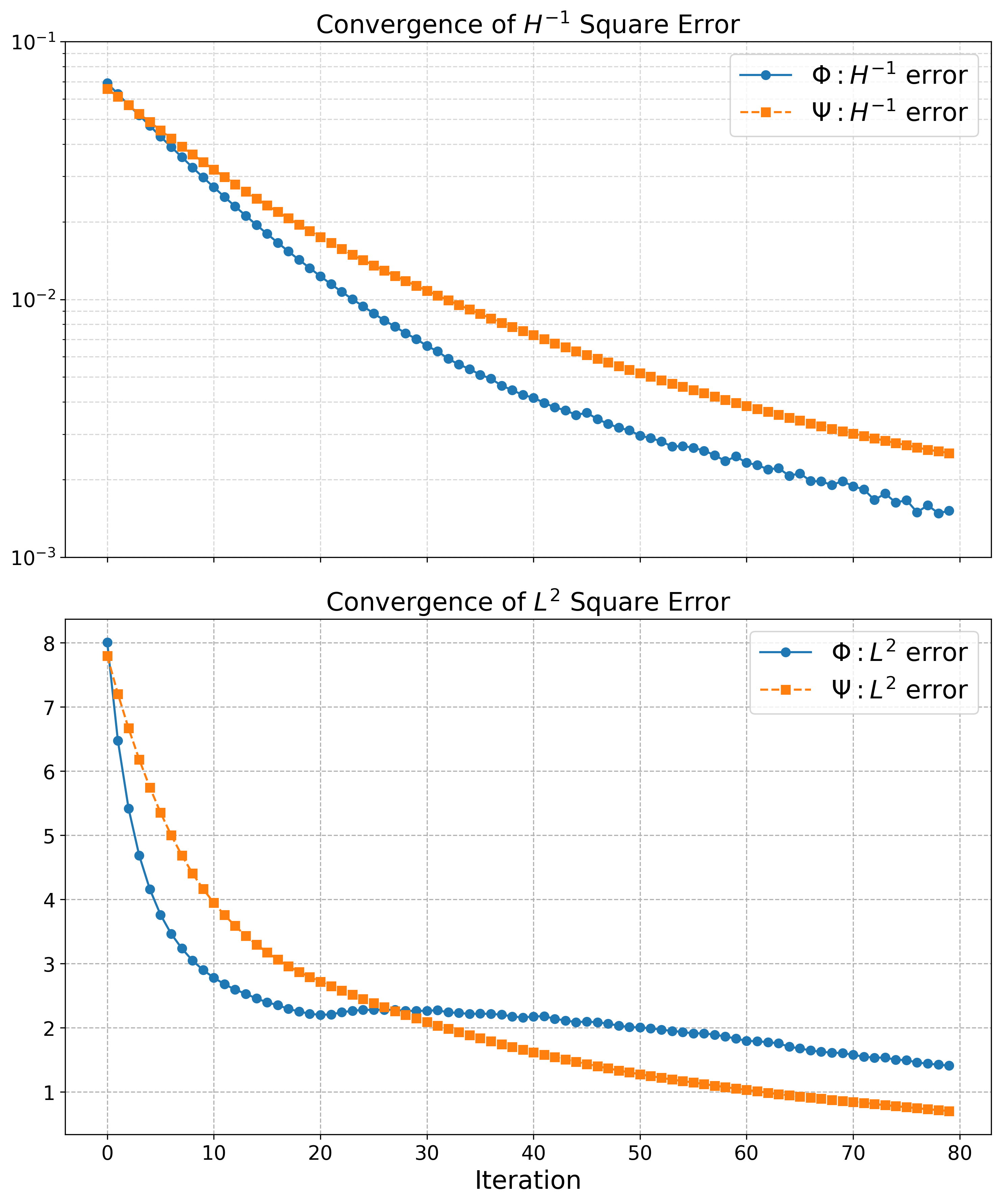}
        \caption{Letter ``A''}
    \end{subfigure}\hfill
    \begin{subfigure}[t]{0.20\textwidth}
        \centering
        \includegraphics[width=\linewidth]{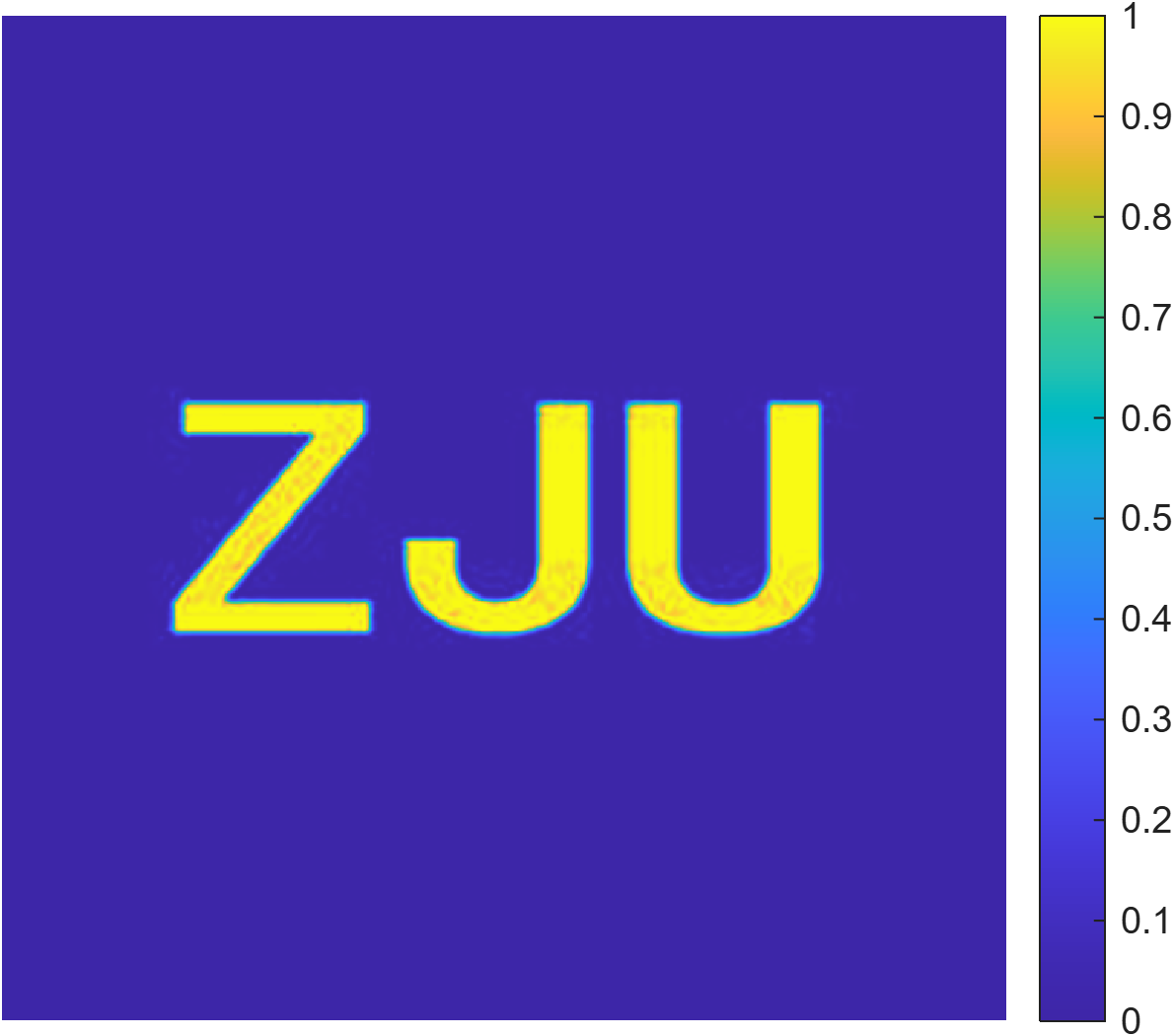}
        \vspace{0.15em}
        \includegraphics[width=\linewidth]{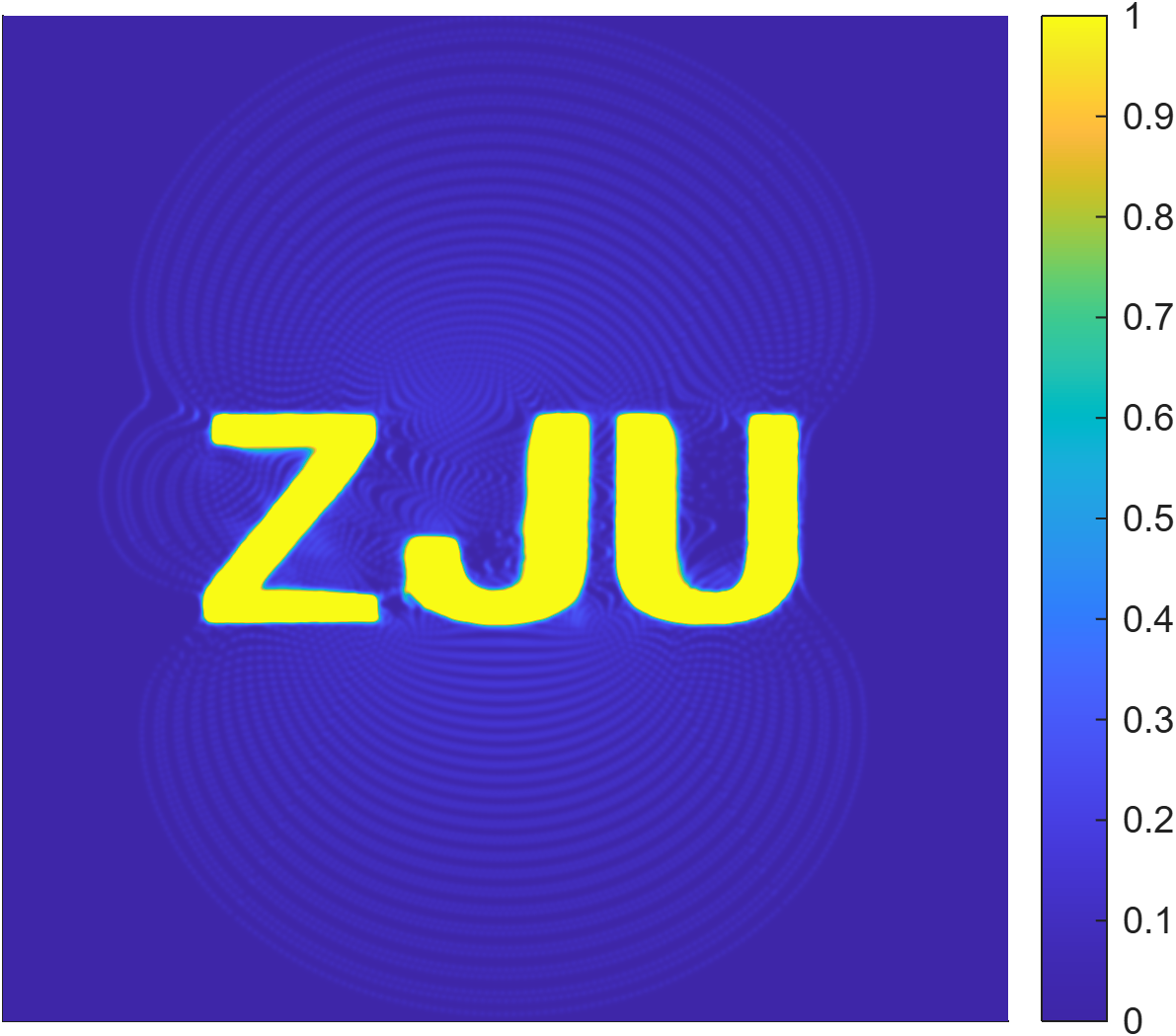}
        \vspace{0.15em}
        \includegraphics[width=\linewidth]{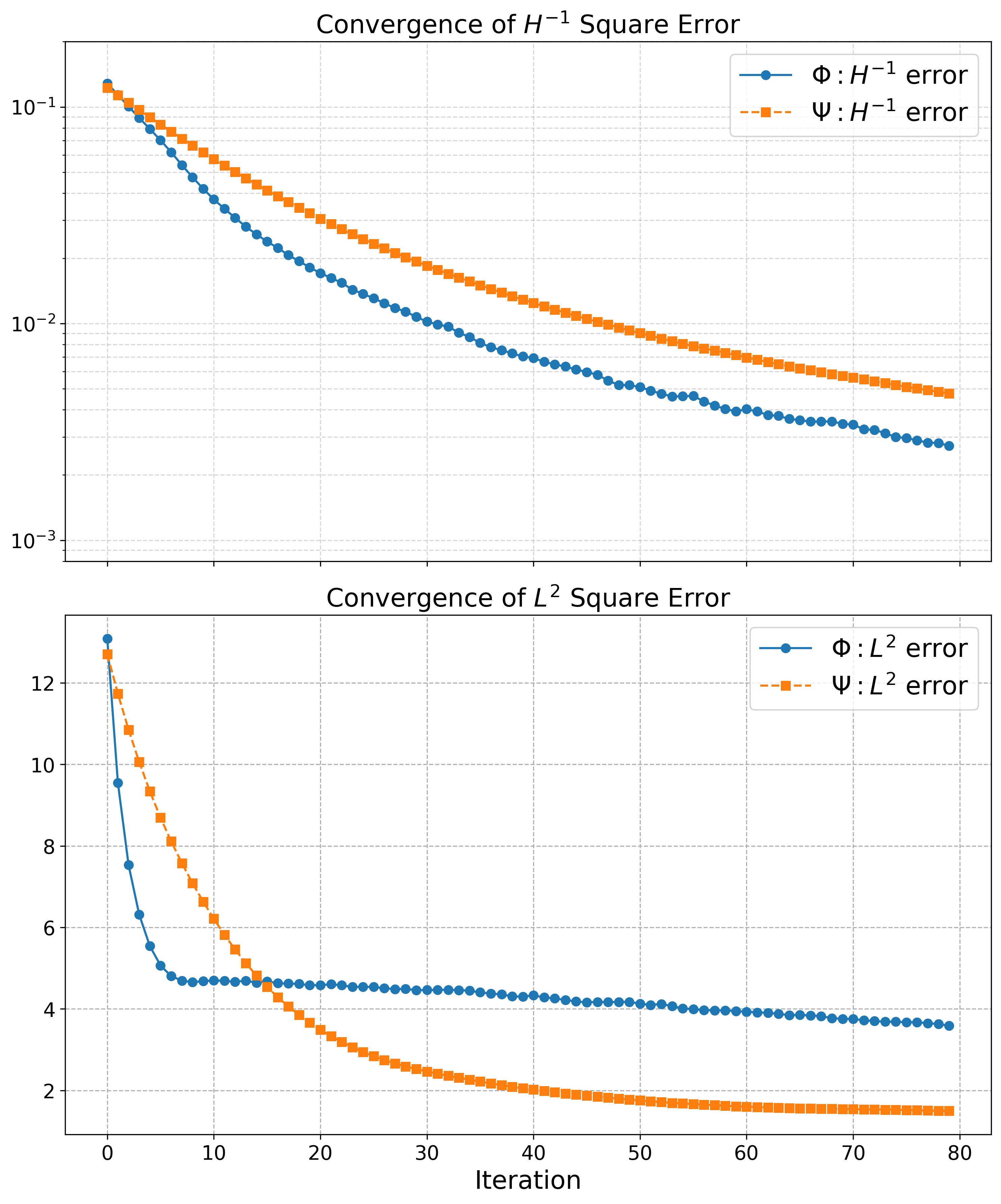}
        \caption{Text ``ZJU''}
    \end{subfigure}\hfill
    \begin{subfigure}[t]{0.20\textwidth}
        \centering
        \includegraphics[width=\linewidth]{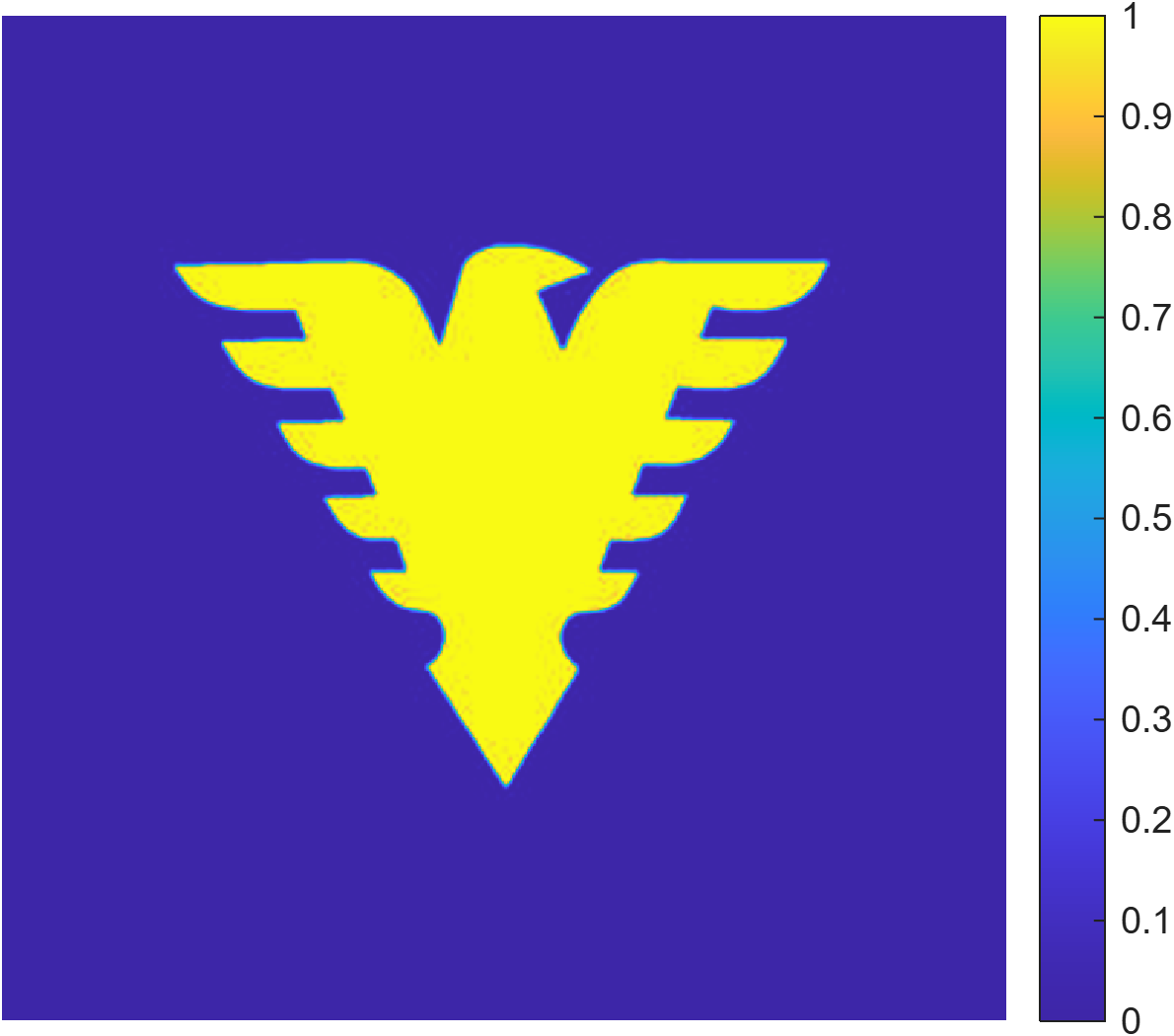}
        \vspace{0.15em}
        \includegraphics[width=\linewidth]{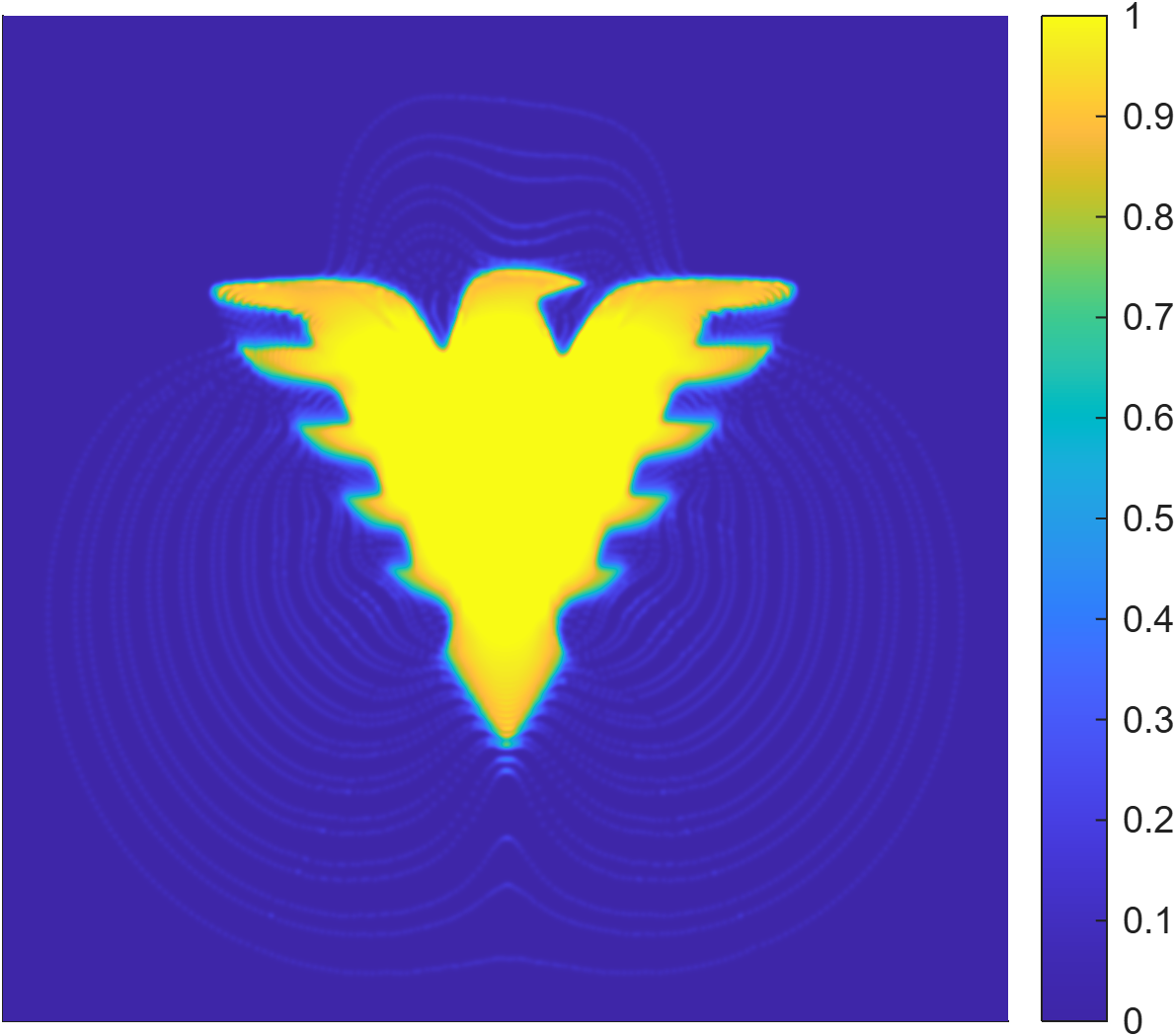}
        \vspace{0.15em}
        \includegraphics[width=\linewidth]{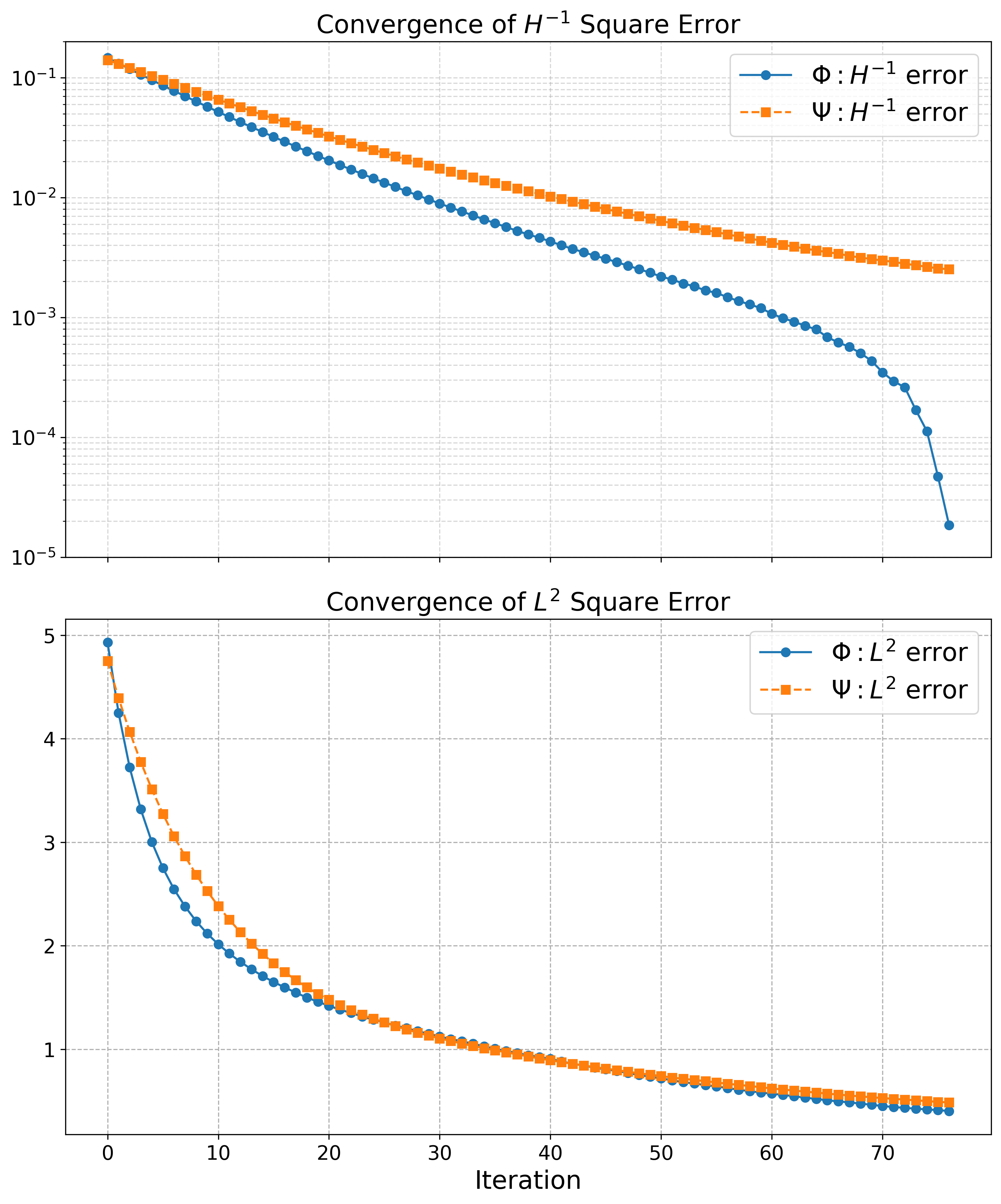}
        \caption{``Eagle'' pattern}
    \end{subfigure}\hfill
    \begin{subfigure}[t]{0.20\textwidth}
        \centering
        \includegraphics[width=\linewidth]{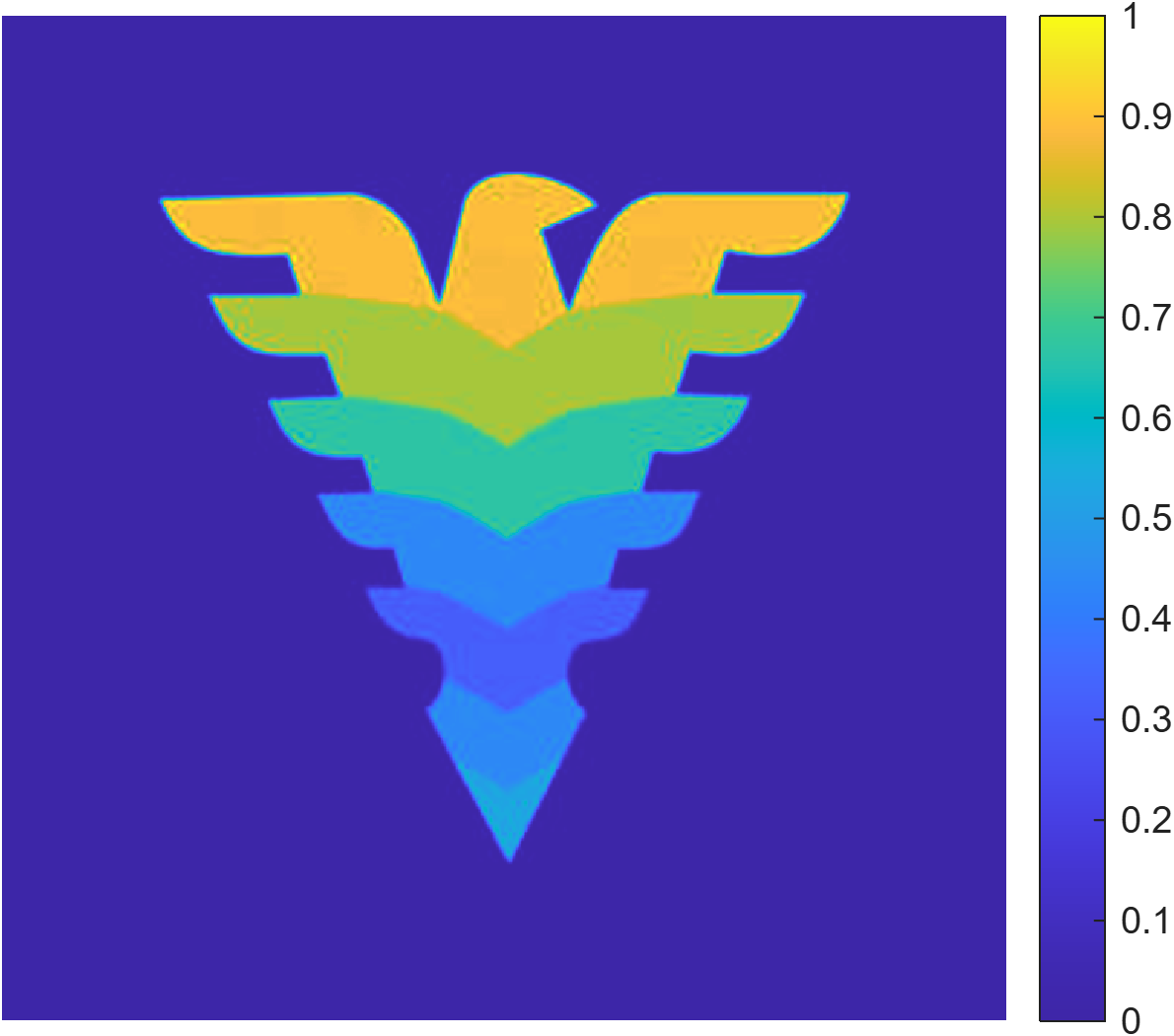}
        \vspace{0.15em}
        \includegraphics[width=\linewidth]{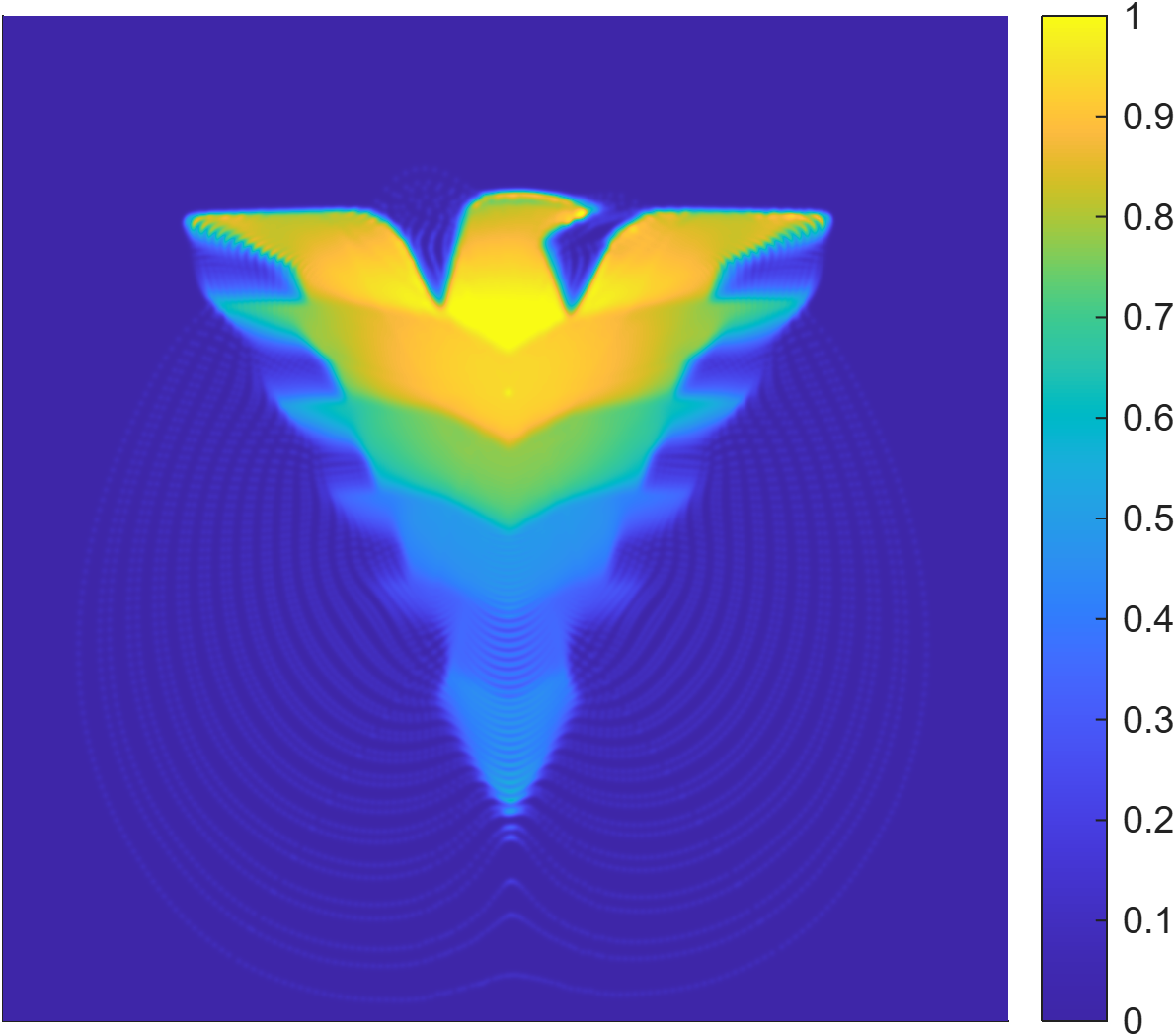}
        \vspace{0.15em}
        \includegraphics[width=\linewidth]{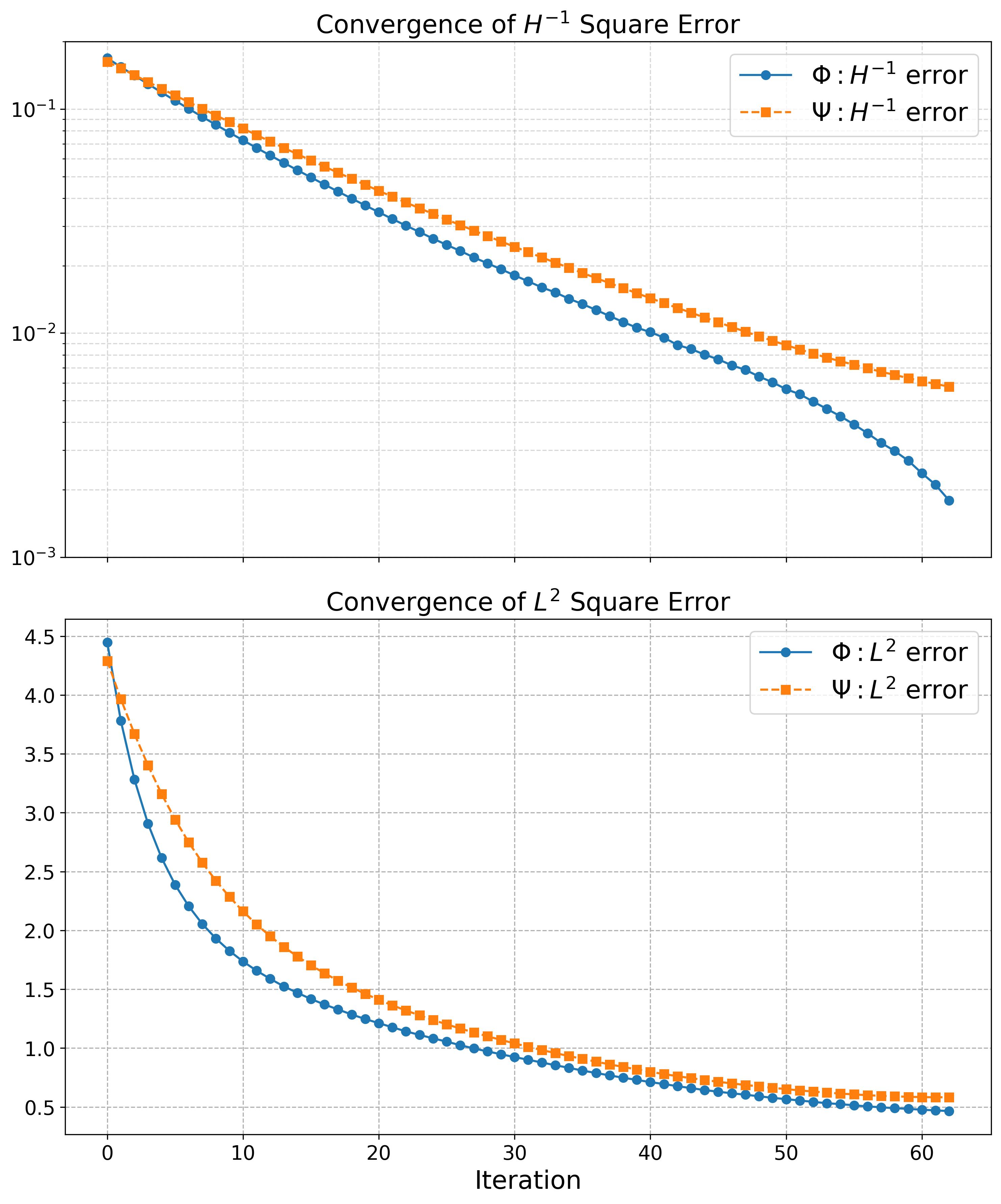}
        \caption{Grayscale ``Eagle''}
    \end{subfigure}
    \caption{Results on four different targets. In each column, the top image is the prescribed target intensity, the middle image is the ray-tracing simulation result, and the bottom image is the convergence curve.}
    \label{fig:convergence}
\end{figure}

\subsection{Comparison with Sobolev Gradient Method}

To further assess the proposed method, we compare it with the Sobolev gradient method~\cite{Far_ref} using the ``ZJU'' and binary ``Eagle'' targets introduced in the previous subsection. In both cases, the prescribed density has compact support in the computational target domain $\Omega^*$ and vanishes on the dark part of $\Omega^*$.

Fig.~\ref{fig:comparison} presents a visual comparison. The Sobolev gradient method recovers the main contours of the targets, but it produces visible artifacts and numerical diffusion near the interfaces. In contrast, the proposed algorithm gives sharper boundaries and lower spurious irradiance in the dark part of $\Omega^*$. For the intricate ``Eagle'' pattern, the ray tracing validation shows that our method better preserves fine structures, including the feather details, while these structures are blurred in the result of the Sobolev gradient method. This demonstrates the advantage of the proposed algorithm for target distributions with zero background, without artificial background regularization.

\begin{figure}[H]
    \centering
    \begin{subfigure}[t]{0.42\textwidth}
        \centering
        \includegraphics[width=0.49\linewidth]{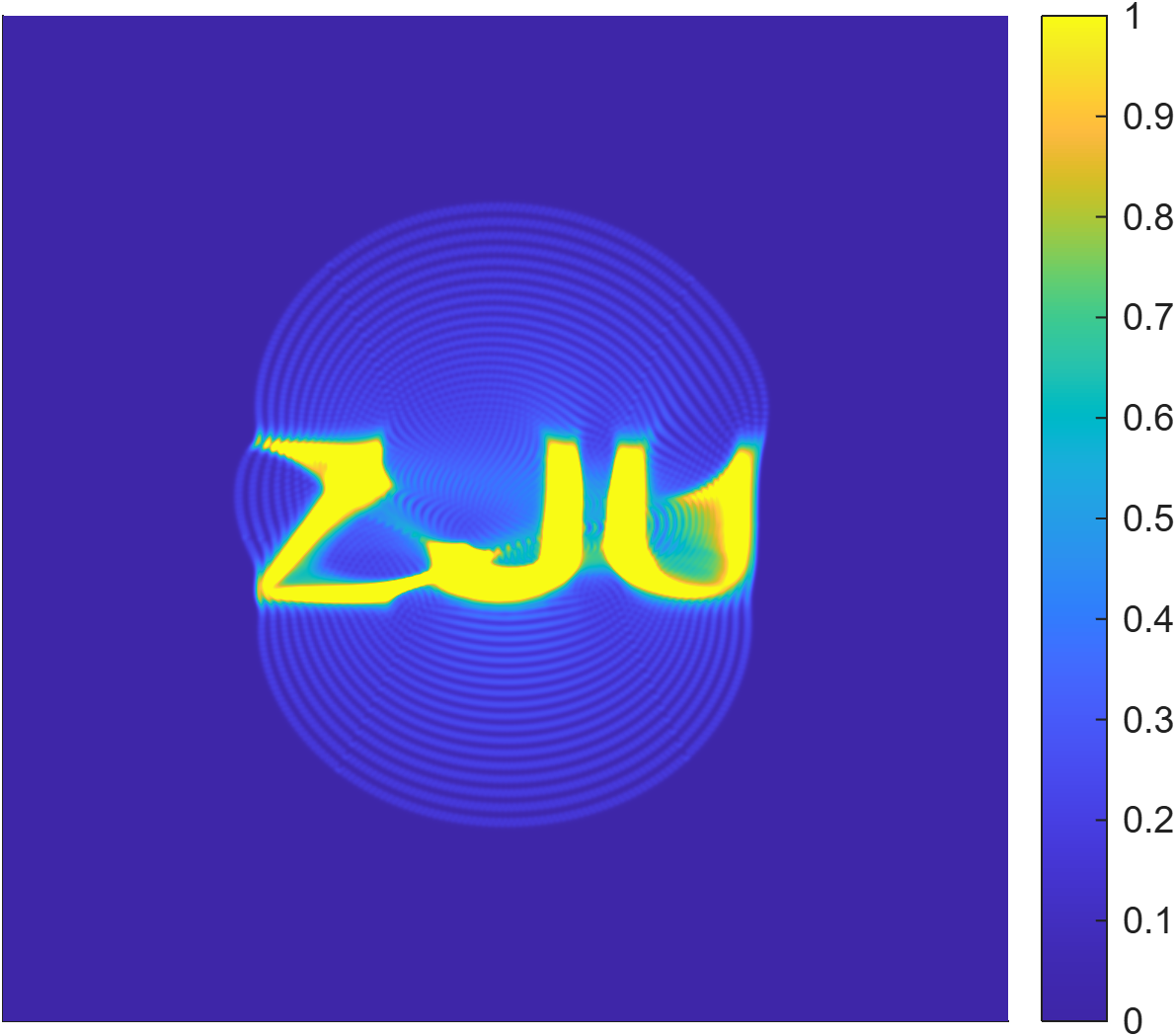}\hfill
        \includegraphics[width=0.49\linewidth]{simulation_ZJU.png}
        \caption{Text ``ZJU''}
    \end{subfigure}\hfill
    \begin{subfigure}[t]{0.42\textwidth}
        \centering
        \includegraphics[width=0.49\linewidth]{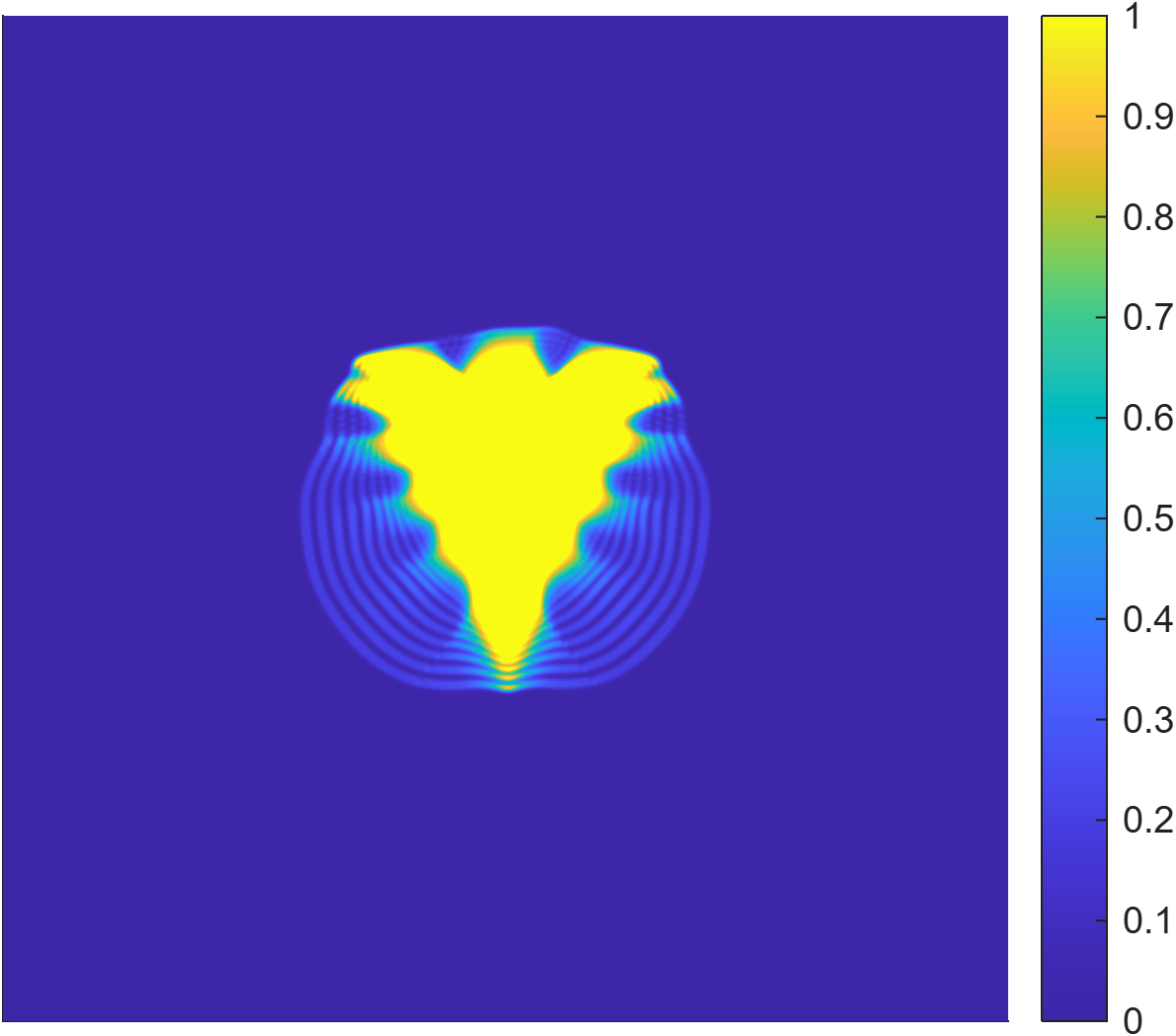}\hfill
        \includegraphics[width=0.49\linewidth]{simulation_eagle.png}
        \caption{``Eagle'' pattern}
    \end{subfigure}
    \caption{Comparison between the Sobolev gradient method and the proposed algorithm. In each pair, the left image is the Sobolev gradient result and the right image is the result of the proposed method.}
    \label{fig:comparison}
\end{figure}

\subsection{Influence of Mesh Density}

To assess the sensitivity of the proposed method to spatial discretization, we examine the influence of the finite element mesh density using the binary ``Eagle'' target, whose fine structures provide a sensitive test of spatial resolution.

The time-step parameter and stopping criterion are fixed, while the mesh is refined through three levels, namely level 40, level 80, and level 100. For each mesh level, we record the ray-tracing irradiance and use the same $H^{-1}$ and $L^2$ residual norms as convergence indicators.

\begin{figure}[H]
    \centering
    \begin{subfigure}[t]{0.20\textwidth}
        \centering
        \includegraphics[width=\linewidth]{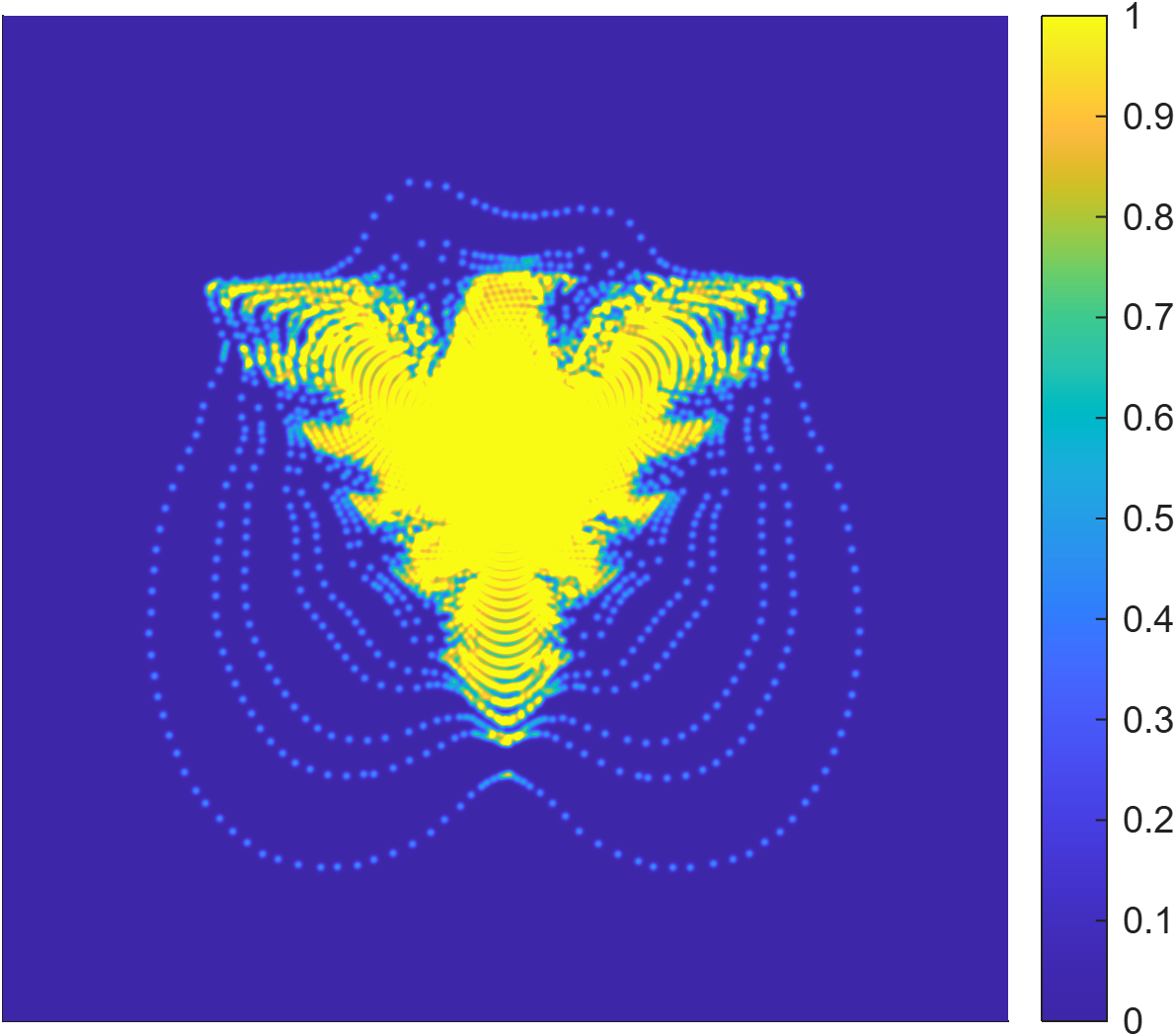}
        \vspace{0.35em}
        \includegraphics[width=\linewidth]{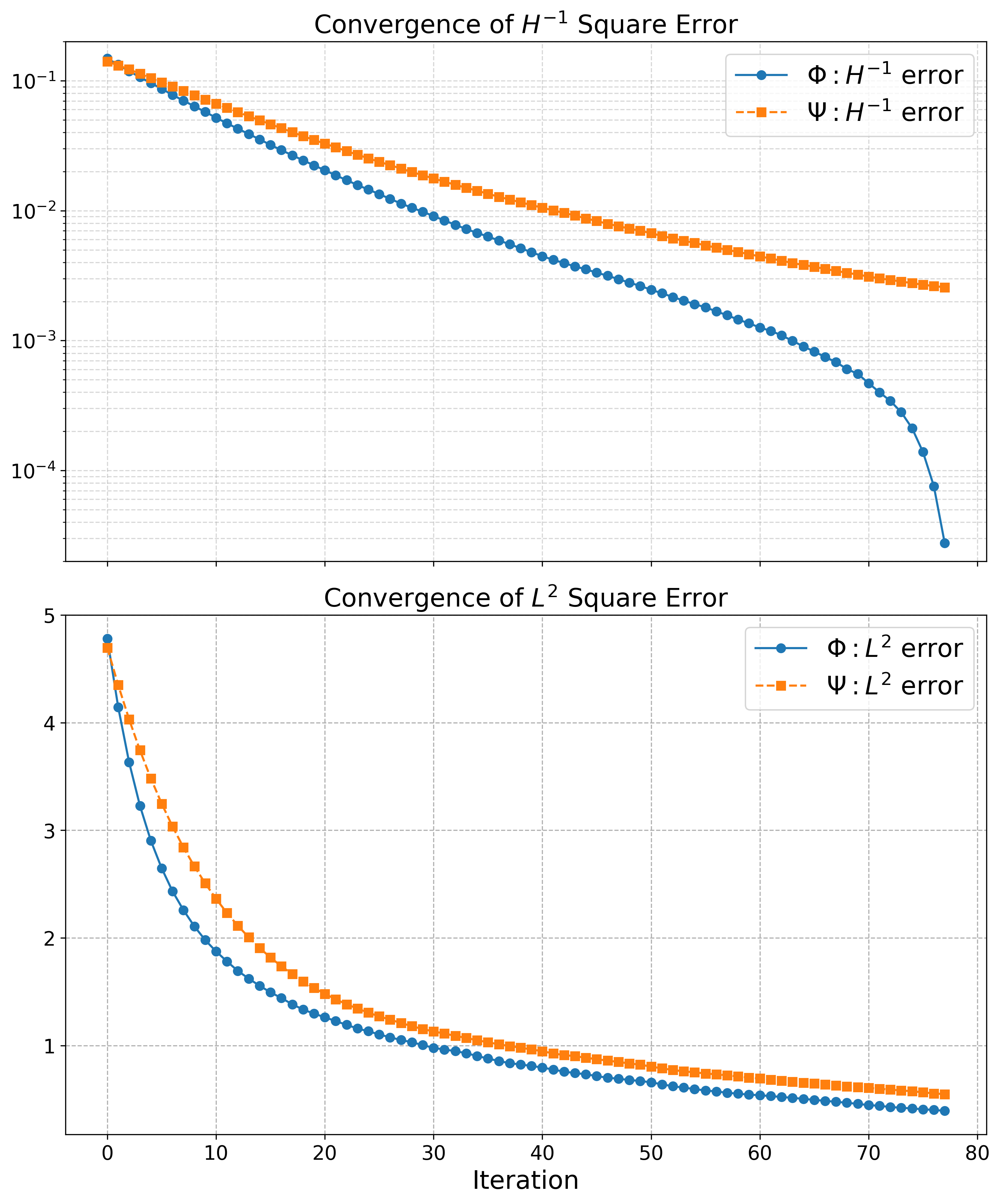}
        \caption{Level 40}
    \end{subfigure}\hfill
    \begin{subfigure}[t]{0.20\textwidth}
        \centering
        \includegraphics[width=\linewidth]{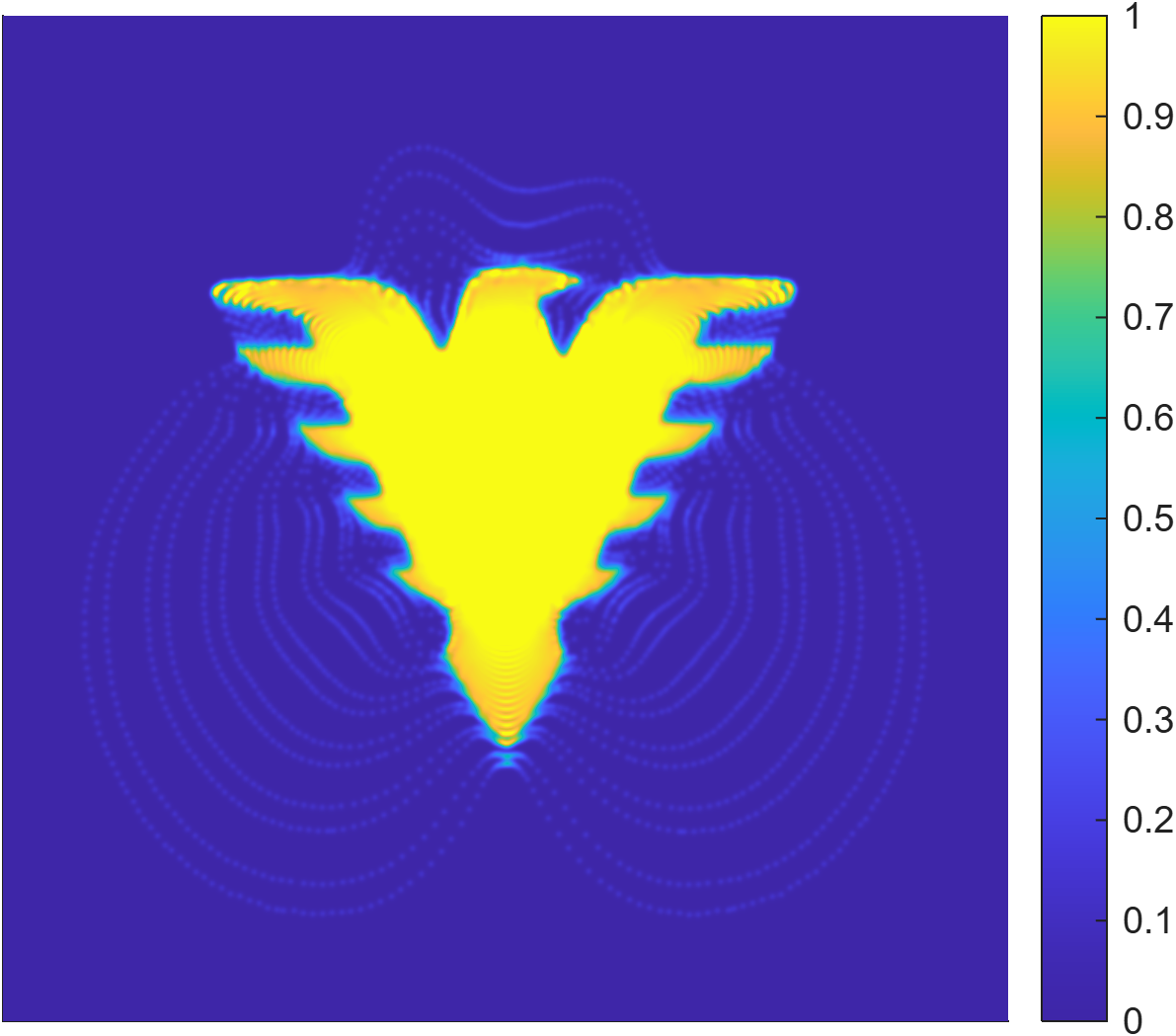}
        \vspace{0.35em}
        \includegraphics[width=\linewidth]{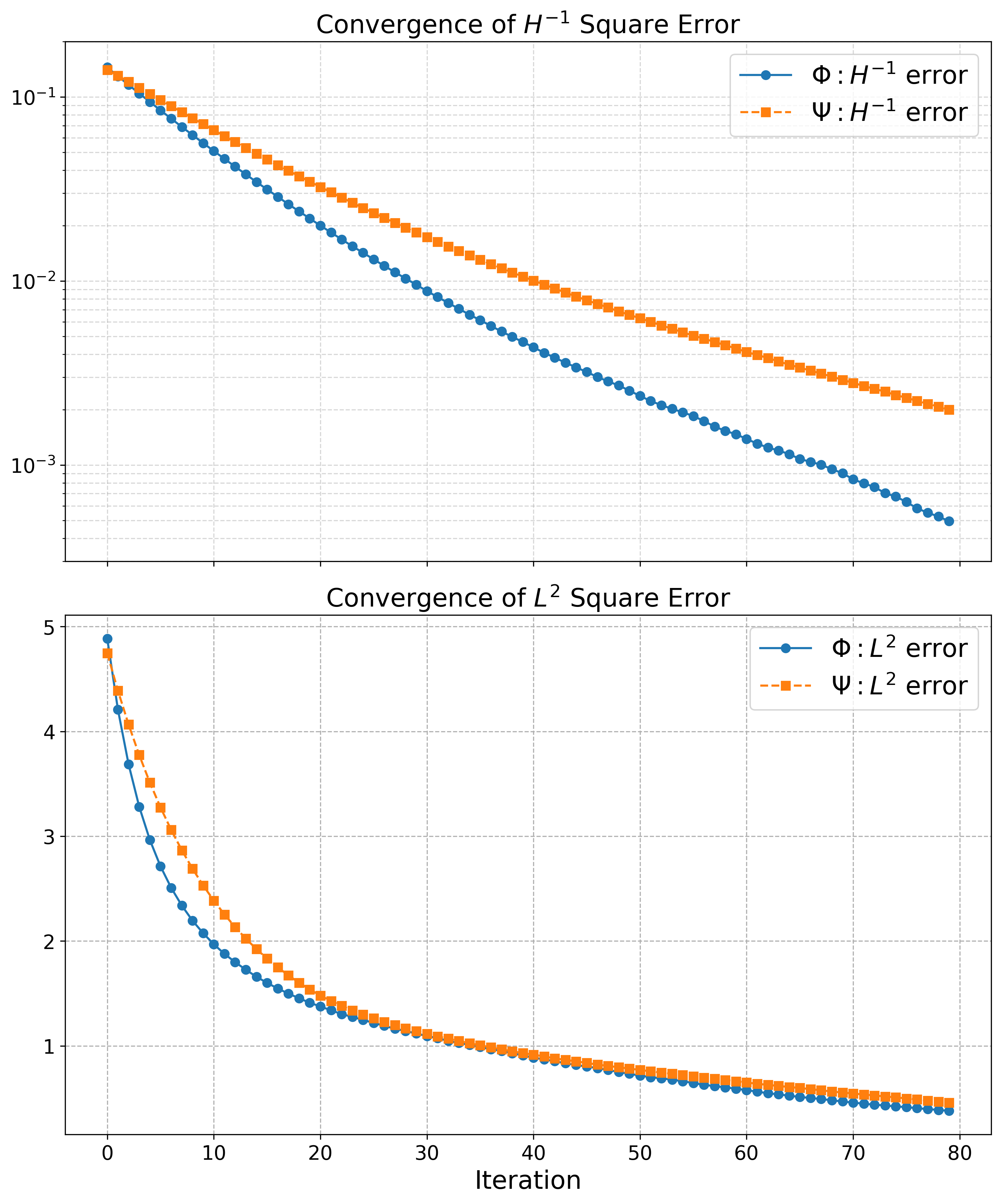}
        \caption{Level 80}
    \end{subfigure}\hfill
    \begin{subfigure}[t]{0.20\textwidth}
        \centering
        \includegraphics[width=\linewidth]{simulation_eagle.png}
        \vspace{0.35em}
        \includegraphics[width=\linewidth]{Curve_qsy_level100.png}
        \caption{Level 100}
    \end{subfigure}
    \caption{Influence of mesh density on the proposed algorithm for the binary ``Eagle'' target. The three columns correspond to mesh levels 40, 80, and 100. The upper row shows the ray-tracing irradiance, and the lower row shows the corresponding $H^{-1}$ and $L^2$ residual histories.}
    \label{fig:mesh-density}
\end{figure}

As shown in Fig.~\ref{fig:mesh-density}, increasing the mesh level consistently improves the spatial resolution of the reconstruction. At level $40$, the overall geometry of the ``Eagle'' target is recovered, but the thin feather structures are partially blurred and the local irradiance is less uniform. Refinement to levels $80$ and $100$ progressively resolves these fine structures, sharpens the interfaces, and suppresses spurious irradiance in the dark region.

The residual curves show the same trend. Finer meshes reach lower final $H^{-1}$ and $L^2$ residuals and produce more faithful ray-tracing results, confirming the importance of sufficient spatial resolution for targets with sharp interfaces and fine structures.

\subsection{Influence of the Artificial Time Step}

We next study the role of the artificial time step parameter $\mathrm{d}t$ in the bidirectional iteration. The mesh level, target pattern, ray-tracing validation, and stopping criterion are fixed, and only $\mathrm{d}t$ is varied.

We use the letter ``A'' target as a representative binary pattern and compare four step sizes: $\mathrm{d}t=0.01,0.04,0.10$, and $0.15$. The case
$\mathrm{d}t=0.01$ is the same as the letter ``A'' result reported in Fig.~\ref{fig:convergence}. For each step size, we record the ray-tracing irradiance and the corresponding loss curves measured by the $H^{-1}$ and $L^2$ residual norms.

\begin{figure}[H]
    \centering
    \begin{subfigure}[t]{0.20\textwidth}
        \centering
        \includegraphics[width=\linewidth]{simulation_A.png}
        \vspace{0.35em}
        \includegraphics[width=\linewidth]{Curve_A_level100.png}
        \caption{$\mathrm{d}t=0.01$}
    \end{subfigure}\hfill
    \begin{subfigure}[t]{0.20\textwidth}
        \centering
        \includegraphics[width=\linewidth]{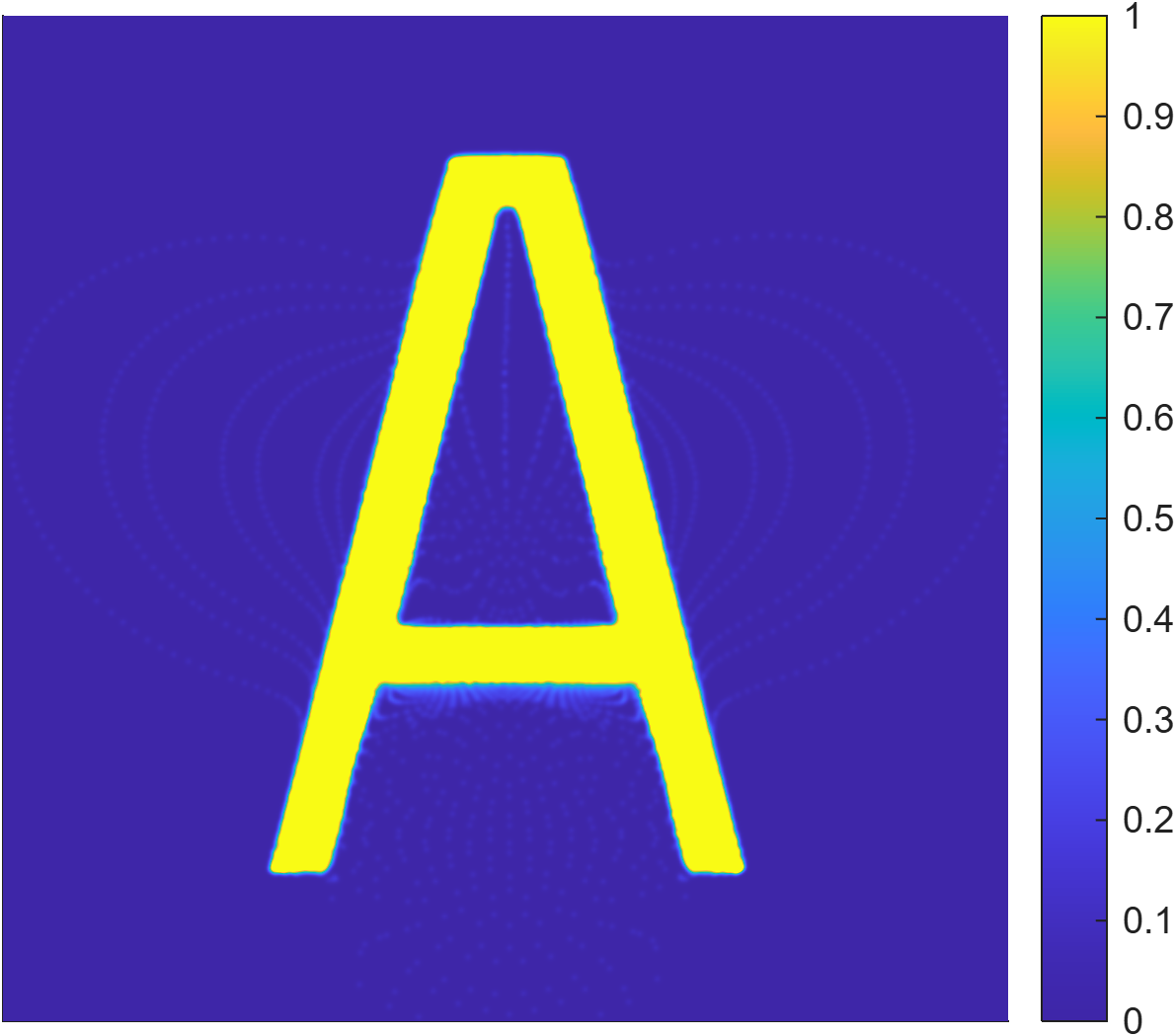}
        \vspace{0.35em}
        \includegraphics[width=\linewidth]{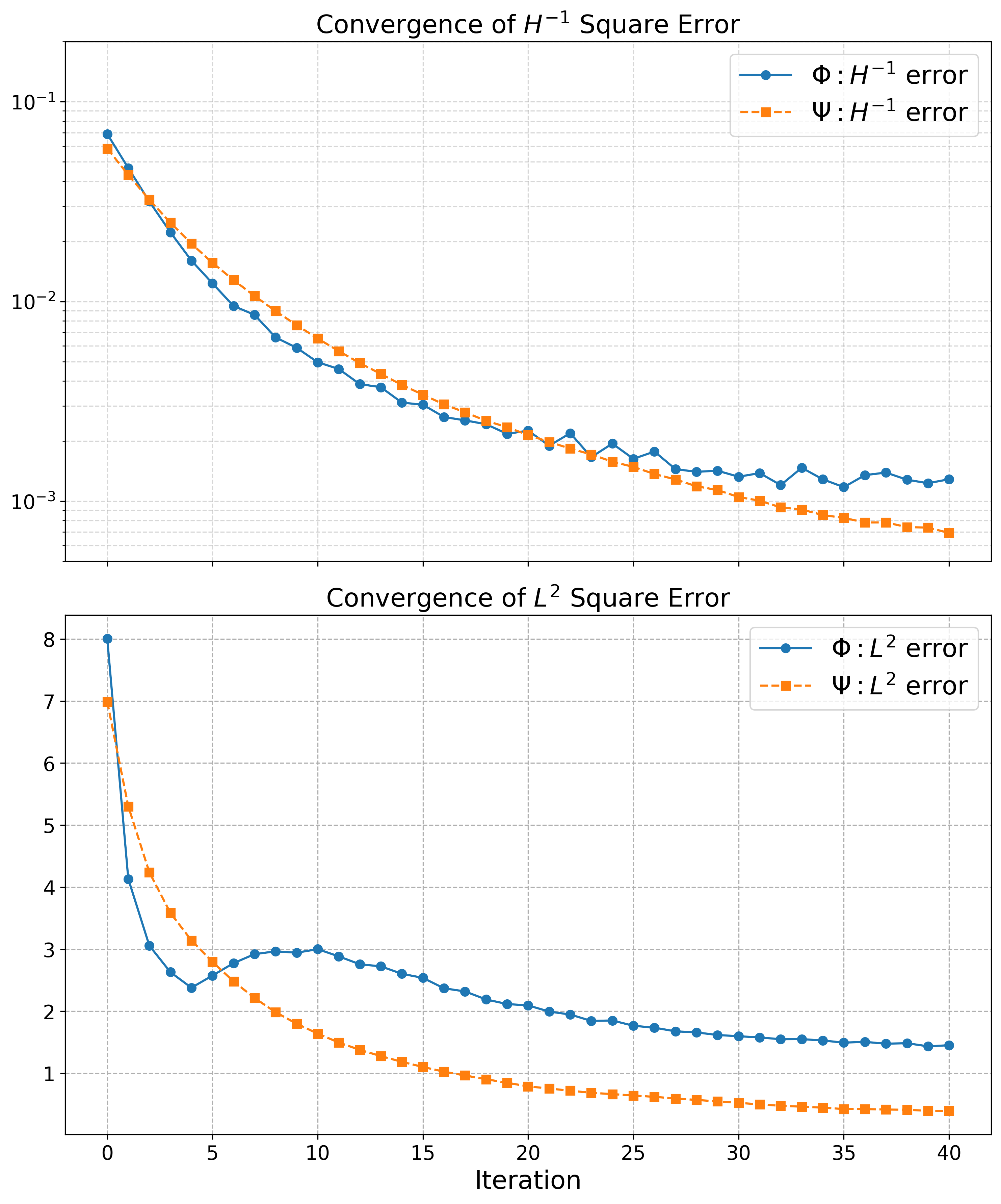}
        \caption{$\mathrm{d}t=0.04$}
    \end{subfigure}\hfill
    \begin{subfigure}[t]{0.20\textwidth}
        \centering
        \includegraphics[width=\linewidth]{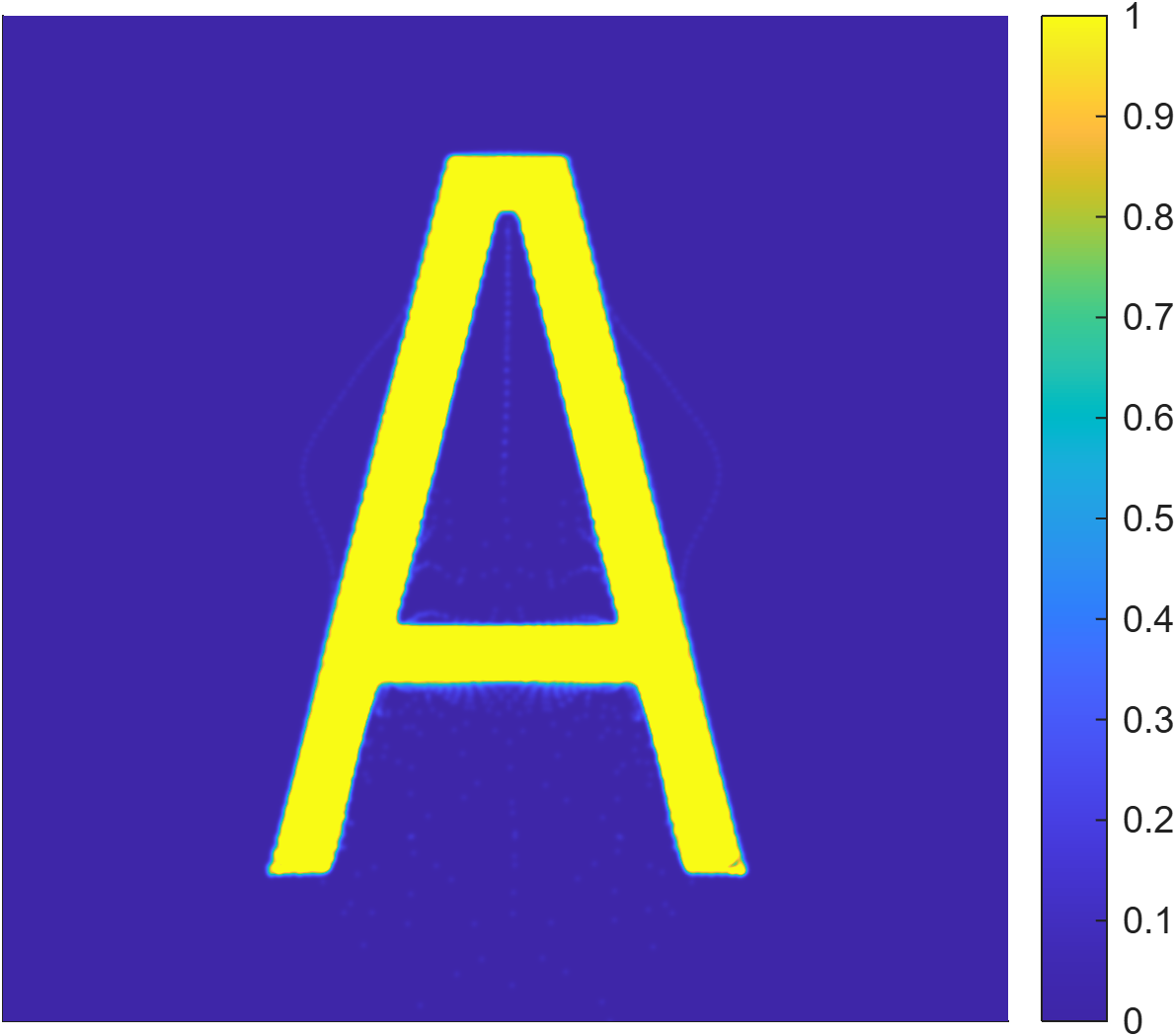}
        \vspace{0.35em}
        \includegraphics[width=\linewidth]{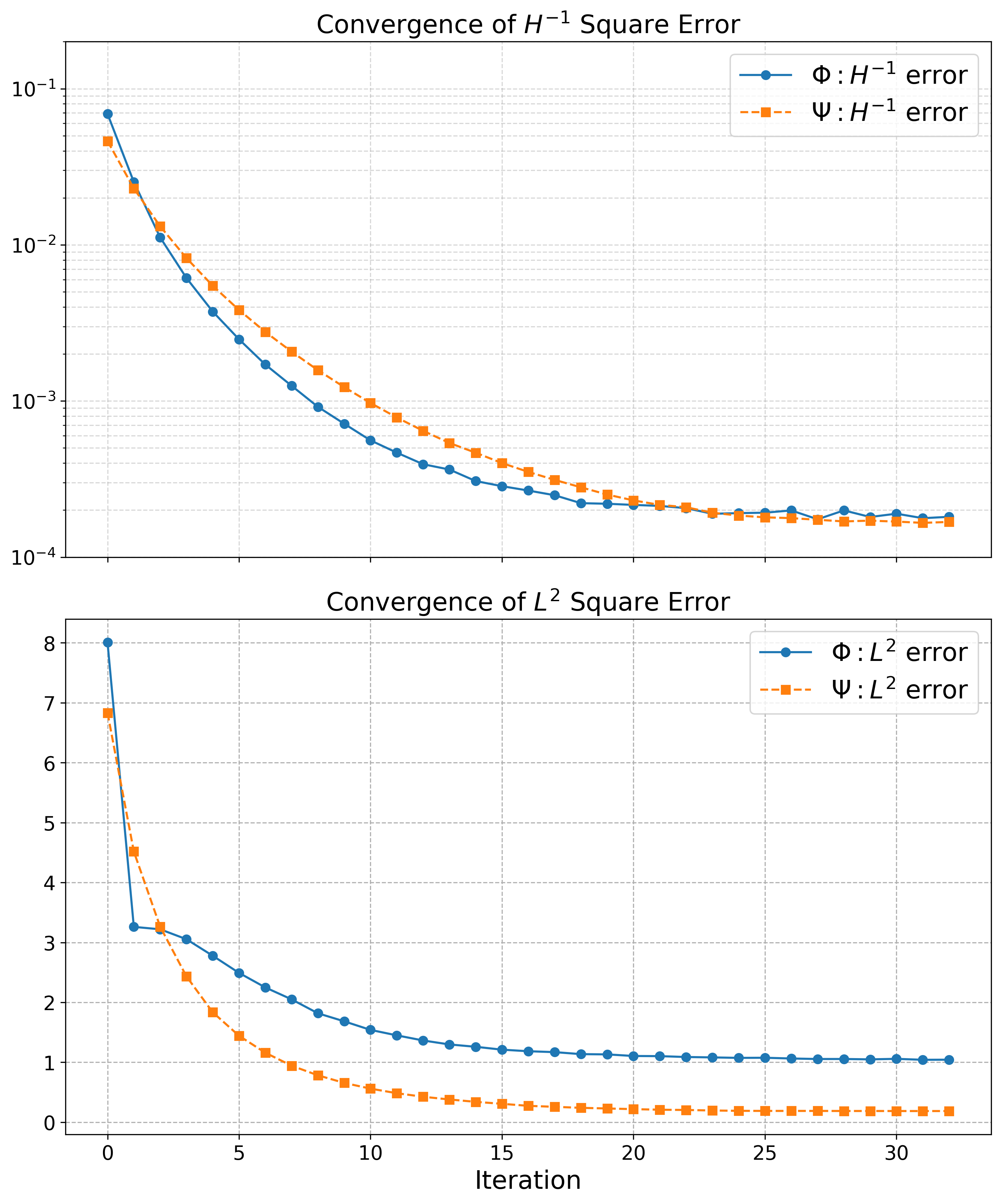}
        \caption{$\mathrm{d}t=0.10$}
    \end{subfigure}\hfill
    \begin{subfigure}[t]{0.20\textwidth}
        \centering
        \includegraphics[width=\linewidth]{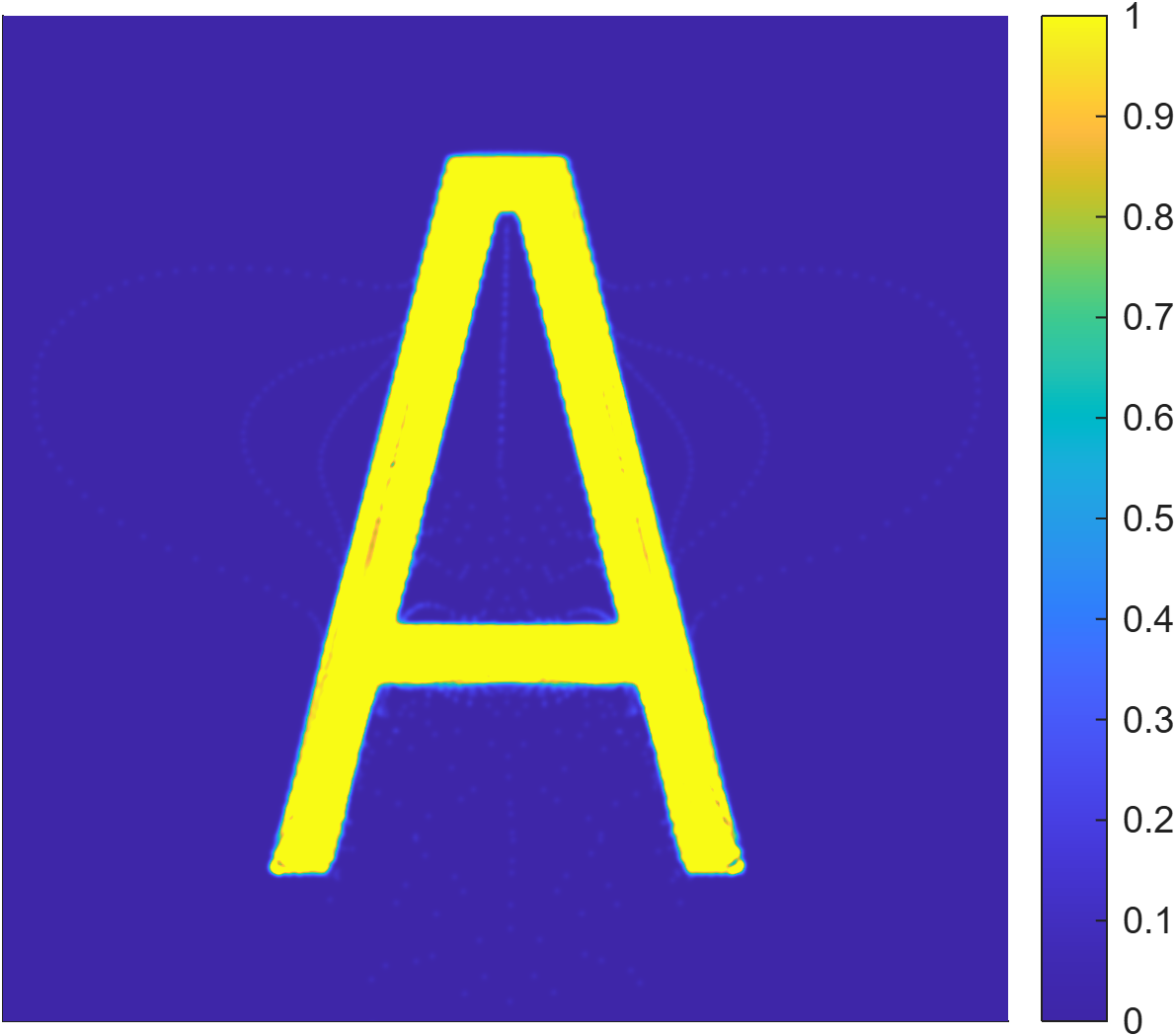}
        \vspace{0.35em}
        \includegraphics[width=\linewidth]{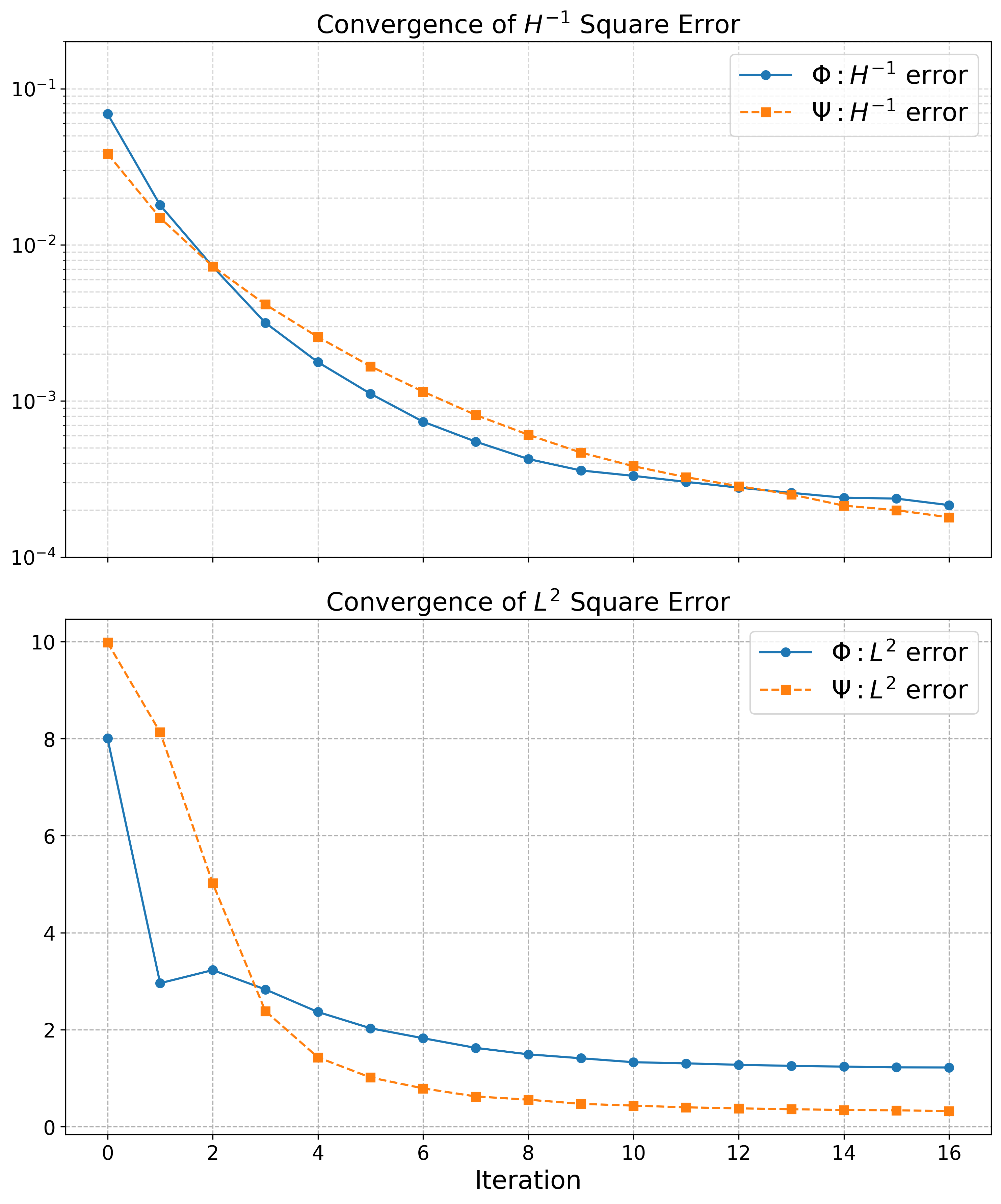}
        \caption{$\mathrm{d}t=0.15$}
    \end{subfigure}
    \caption{Influence of the artificial time step $\mathrm{d}t$ on the letter ``A'' target. The four columns correspond to $\mathrm{d}t=0.01,0.04,0.10$, and $0.15$, respectively. The first row shows the ray-tracing irradiance obtained by the algorithm under different step
    sizes, and the second row shows the corresponding loss histories measured
    by the $H^{-1}$ and $L^2$ residual norms.}
    \label{fig:step-size}
\end{figure}

As shown in Fig.~\ref{fig:step-size}, increasing the time step accelerates the initial decrease of the loss. In the letter ``A'' test, $\mathrm{d}t=0.01$ gives the largest final residual among the tested values, whereas $\mathrm{d}t=0.10$ reaches the lowest final loss. Larger time steps also tend to produce sharper boundaries and reduce some of the background artifacts.

This improvement, however, comes with a stability trade-off. As $\mathrm{d}t$ increases, the reconstruction may develop more irregular artifacts and less uniform irradiance inside the illuminated region. For this example, $\mathrm{d}t=0.04$--$0.10$ provides a favorable balance between convergence speed and reconstruction quality. This range is problem dependent: the appropriate step size depends on the target geometry and mesh resolution, and excessively large steps may lead to instability or divergence for more complex targets.

\subsection{Influence of Target Singularity}

We further examine its robustness with respect to different geometric singularities in zero-background target distributions of this kind. Four representative target types are considered: an annular support with an interior hole, several thin illuminated bars, multiple disconnected target components, and a target located very close to the boundary of $\Omega^*$. These examples are designed to test whether the method can preserve holes,
resolve narrow supports, separate disconnected illuminated regions, and control the transport near the target boundary within the same computational framework.

\begin{figure}[!htbp]
    \centering
    \begin{subfigure}[t]{0.20\textwidth}
        \centering
        \includegraphics[width=\linewidth]{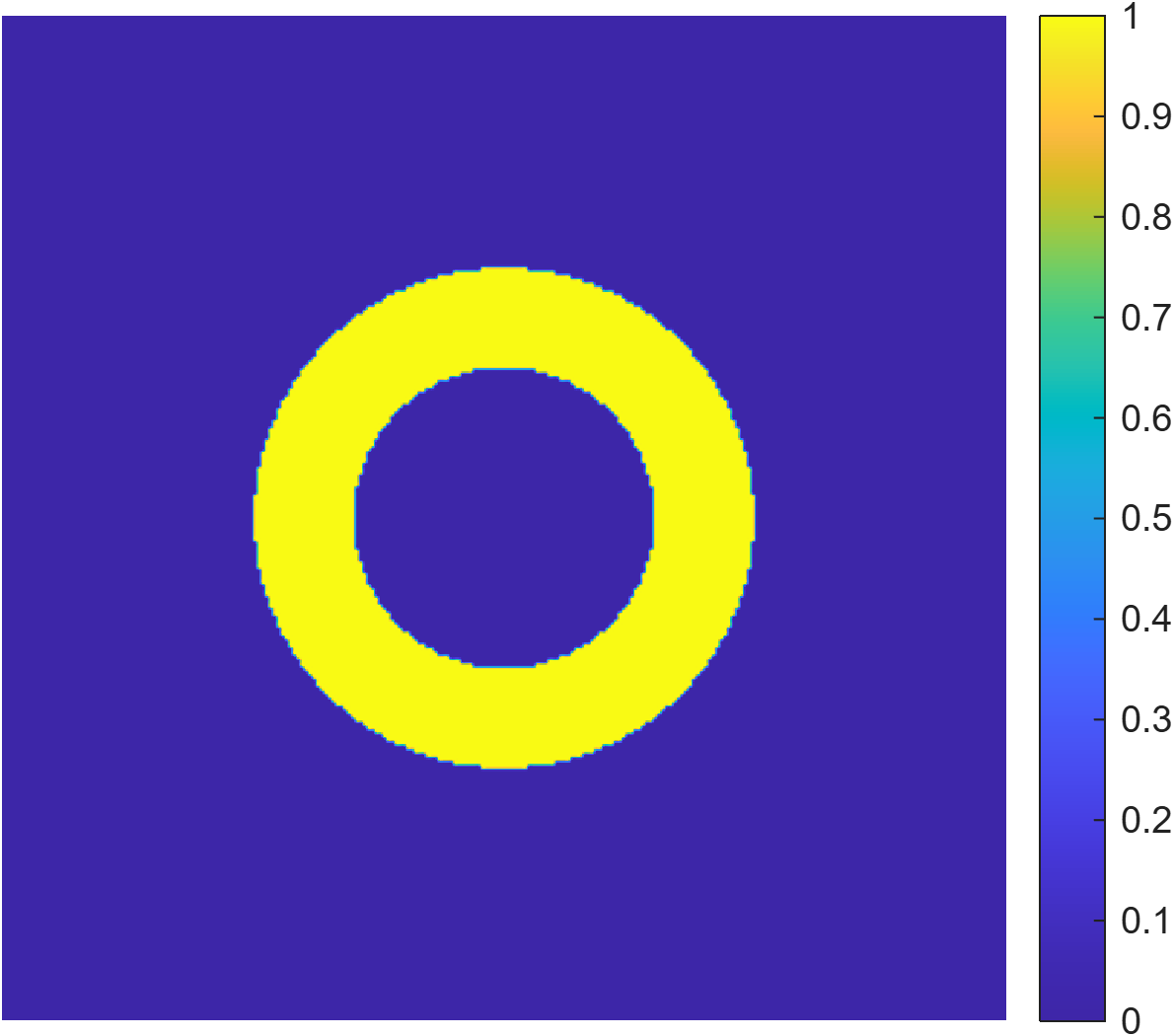}
        \vspace{0.35em}
        \includegraphics[width=\linewidth]{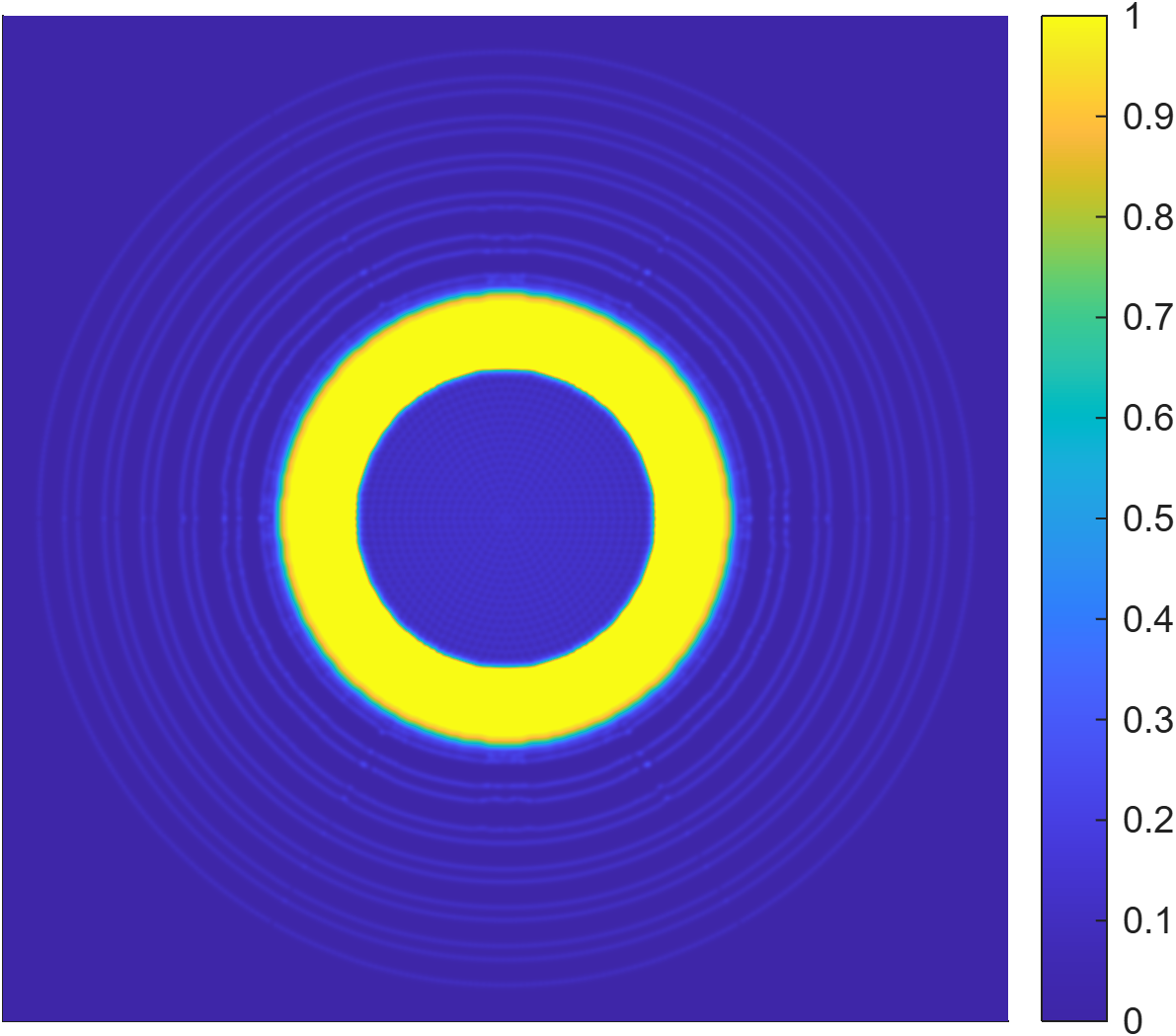}
        \vspace{0.35em}
        \includegraphics[width=\linewidth]{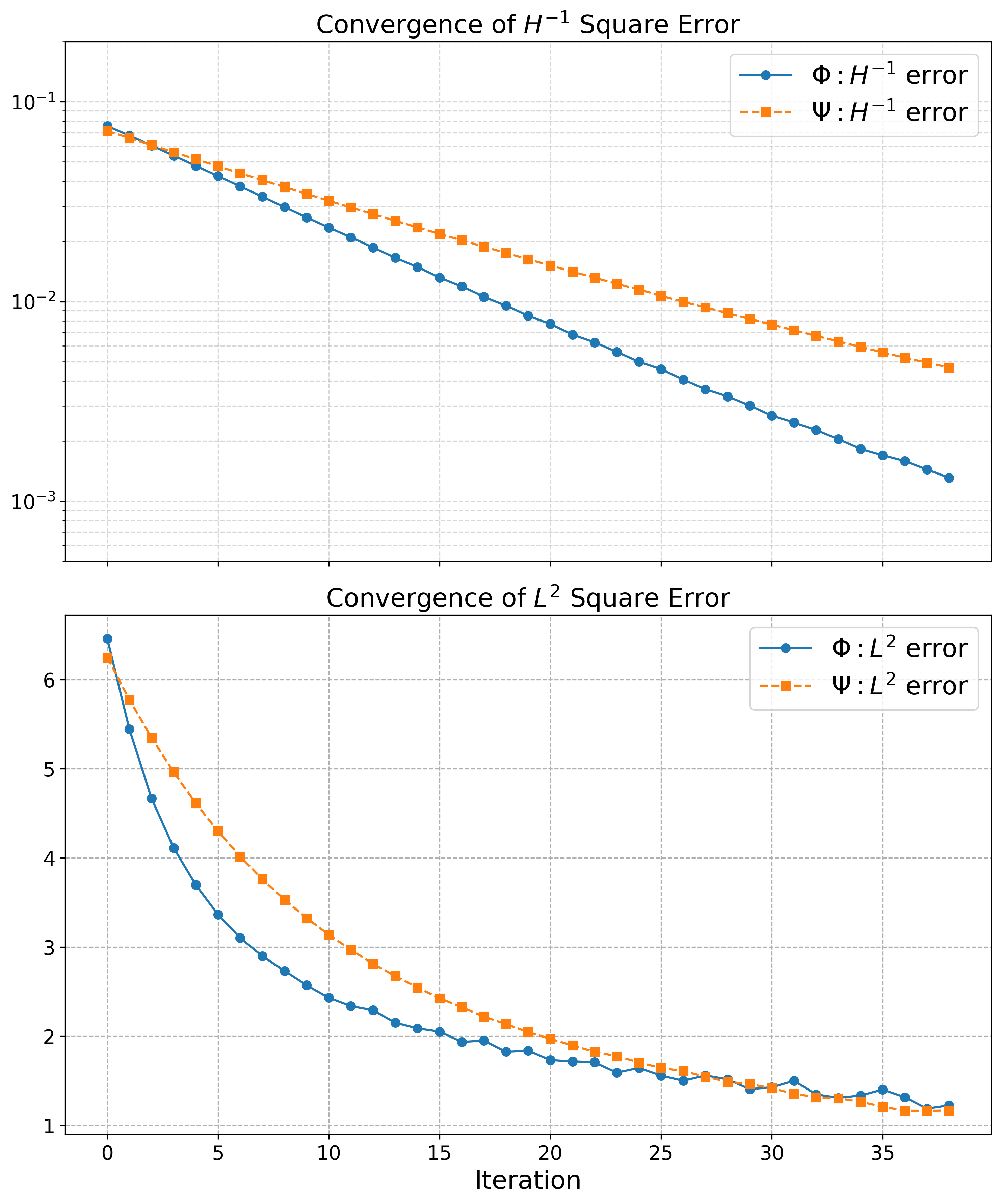}
        \caption{Annulus}
    \end{subfigure}\hfill
    \begin{subfigure}[t]{0.20\textwidth}
        \centering
        \includegraphics[width=\linewidth]{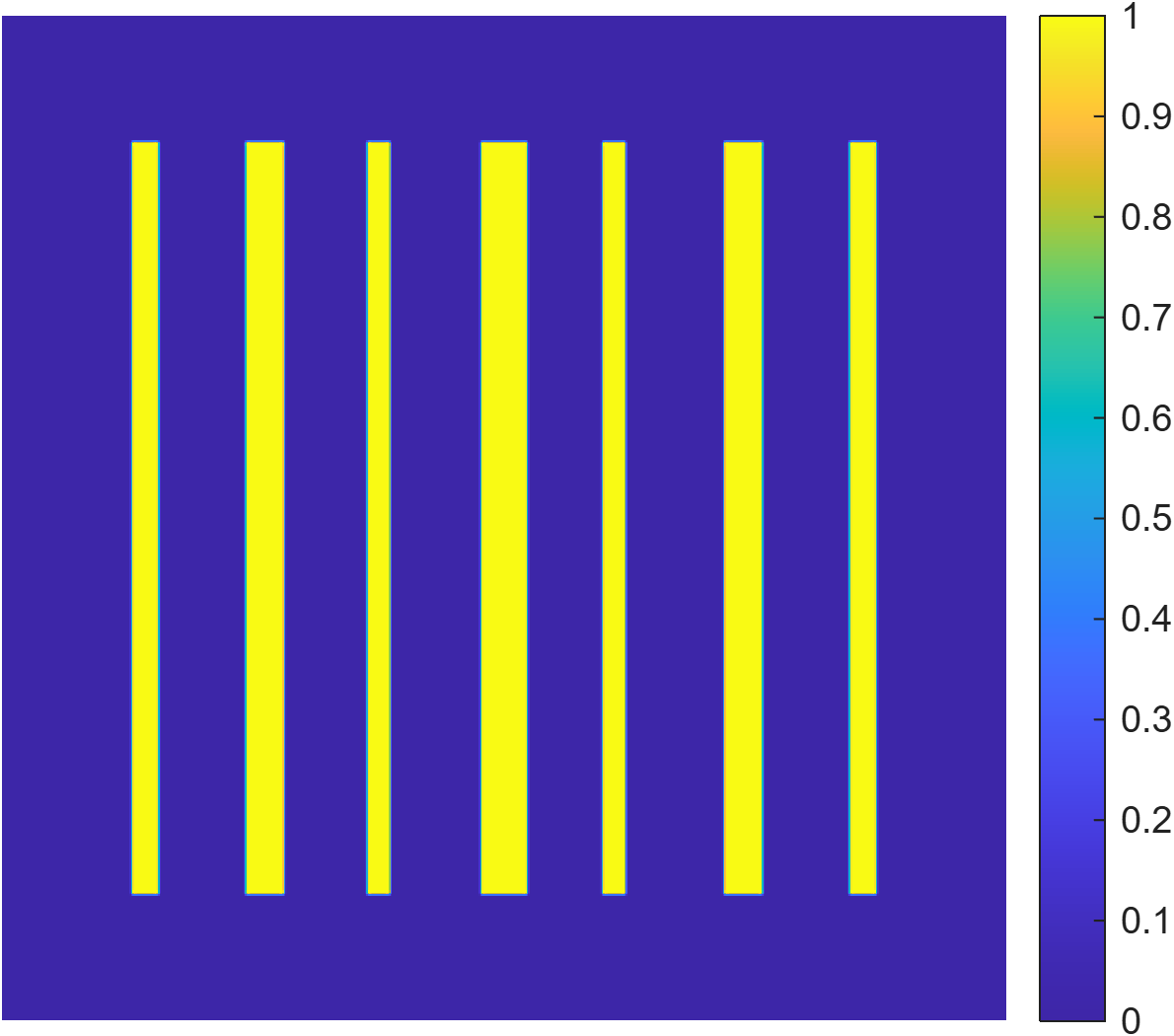}
        \vspace{0.35em}
        \includegraphics[width=\linewidth]{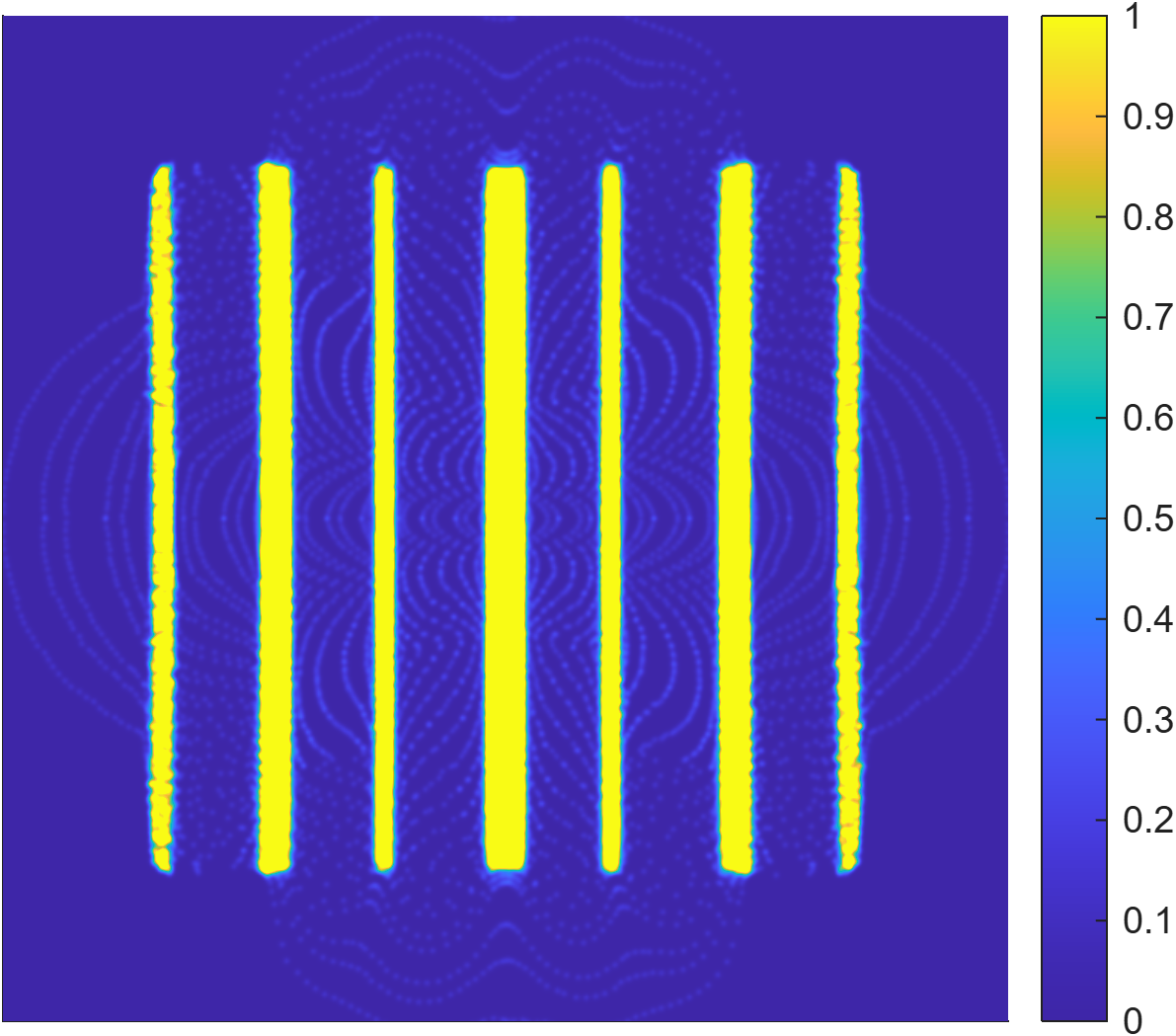}
        \vspace{0.35em}
        \includegraphics[width=\linewidth]{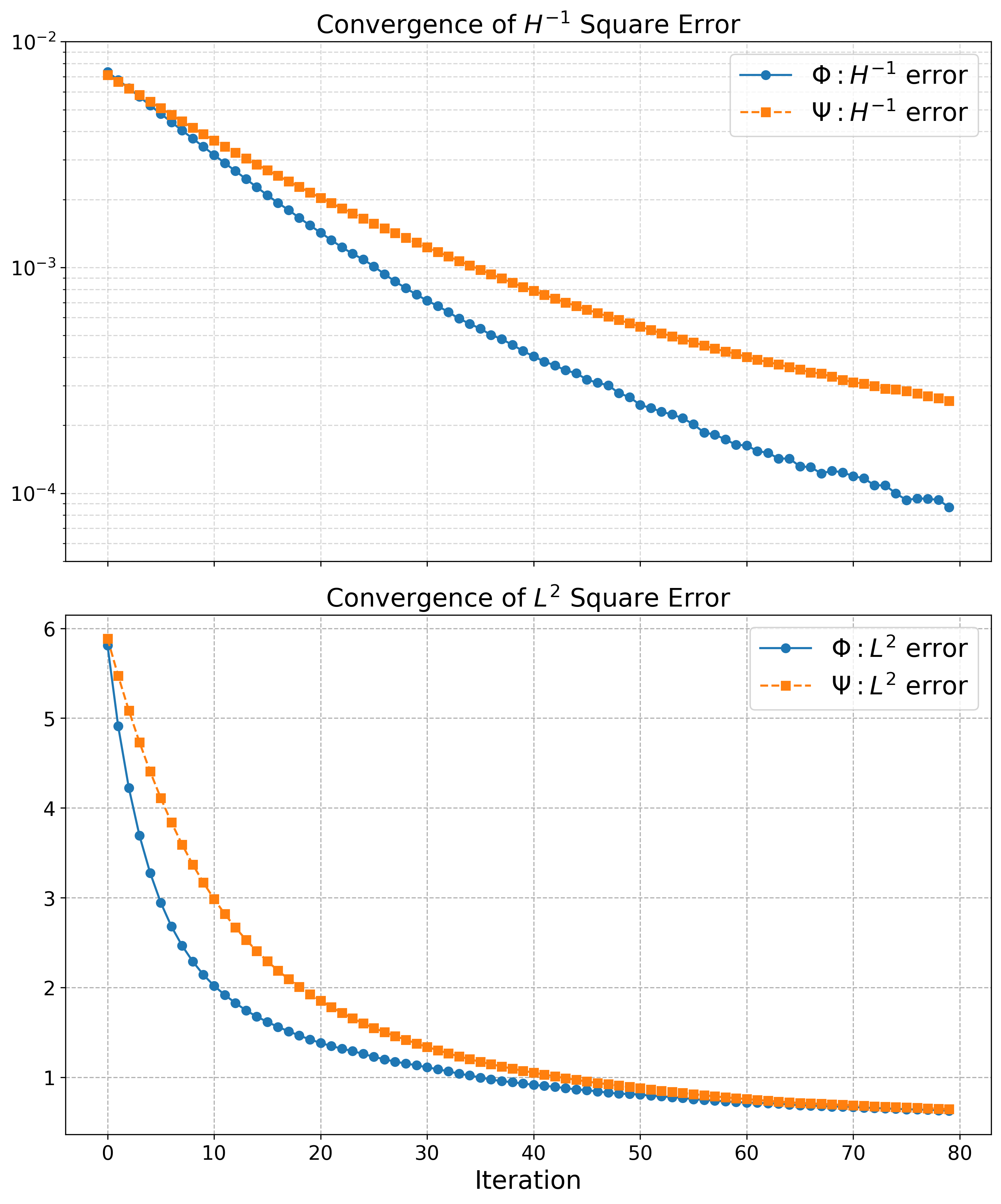}
        \caption{Thin bars}
    \end{subfigure}\hfill
    \begin{subfigure}[t]{0.20\textwidth}
        \centering
        \includegraphics[width=\linewidth]{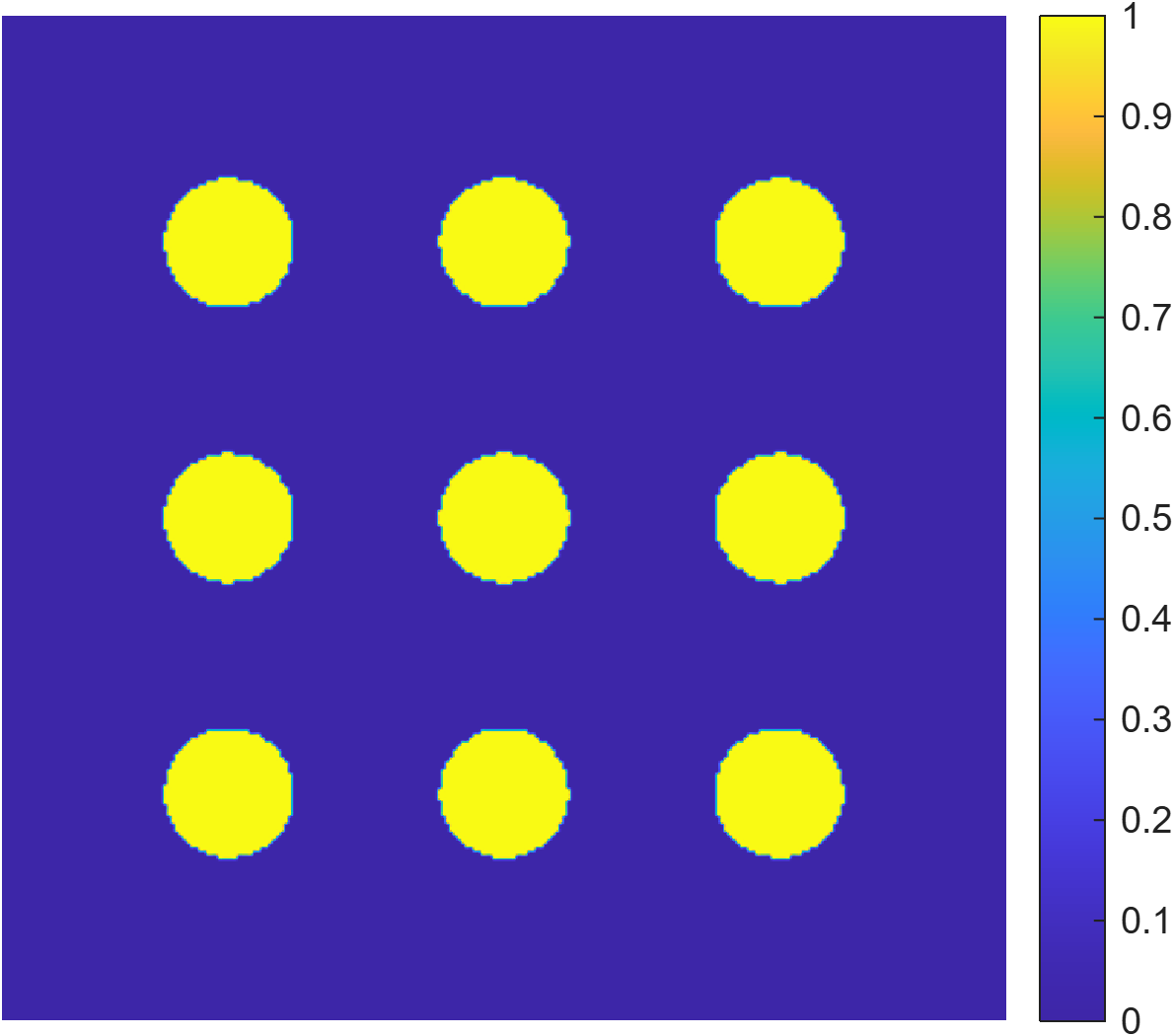}
        \vspace{0.35em}
        \includegraphics[width=\linewidth]{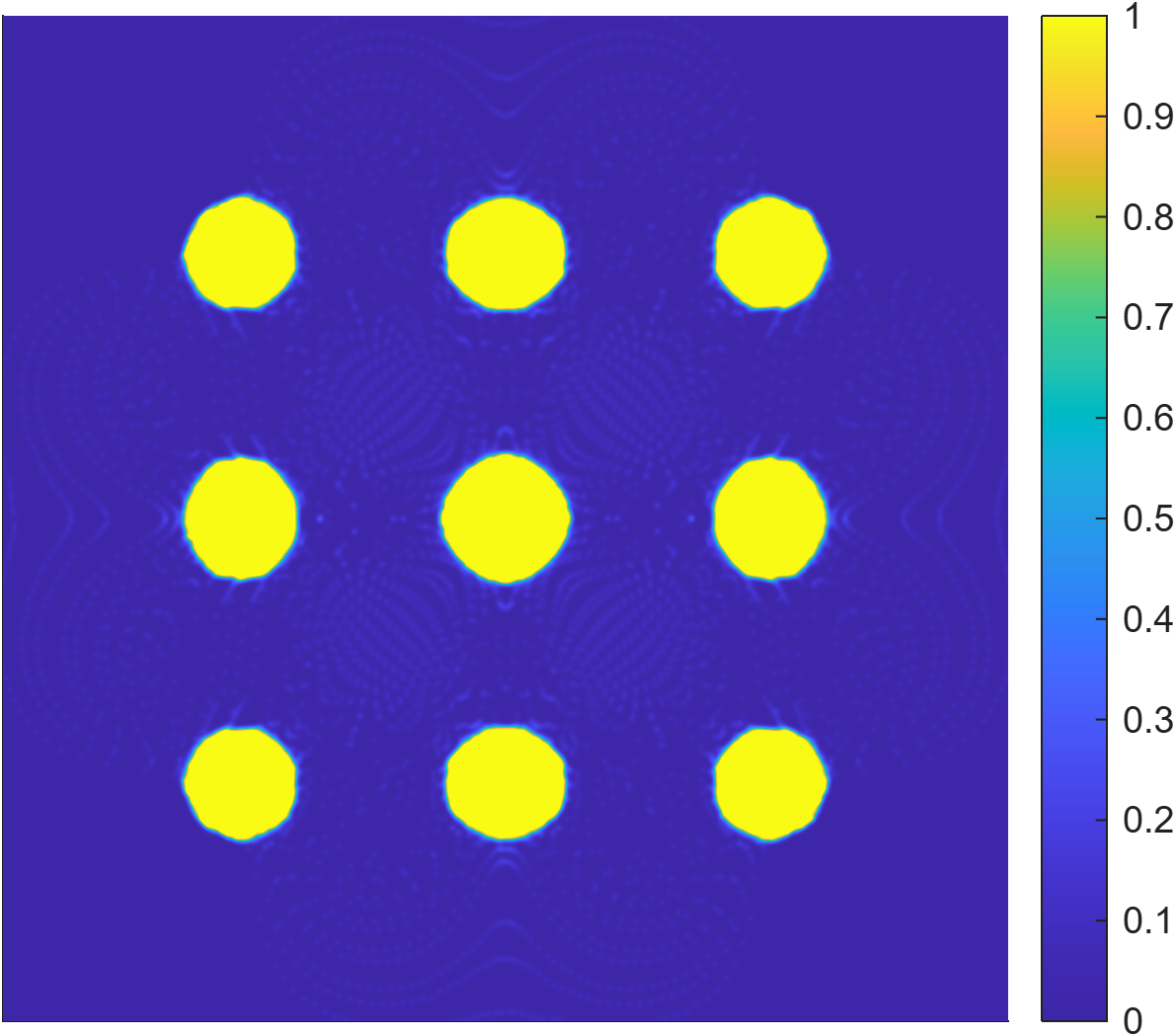}
        \vspace{0.35em}
        \includegraphics[width=\linewidth]{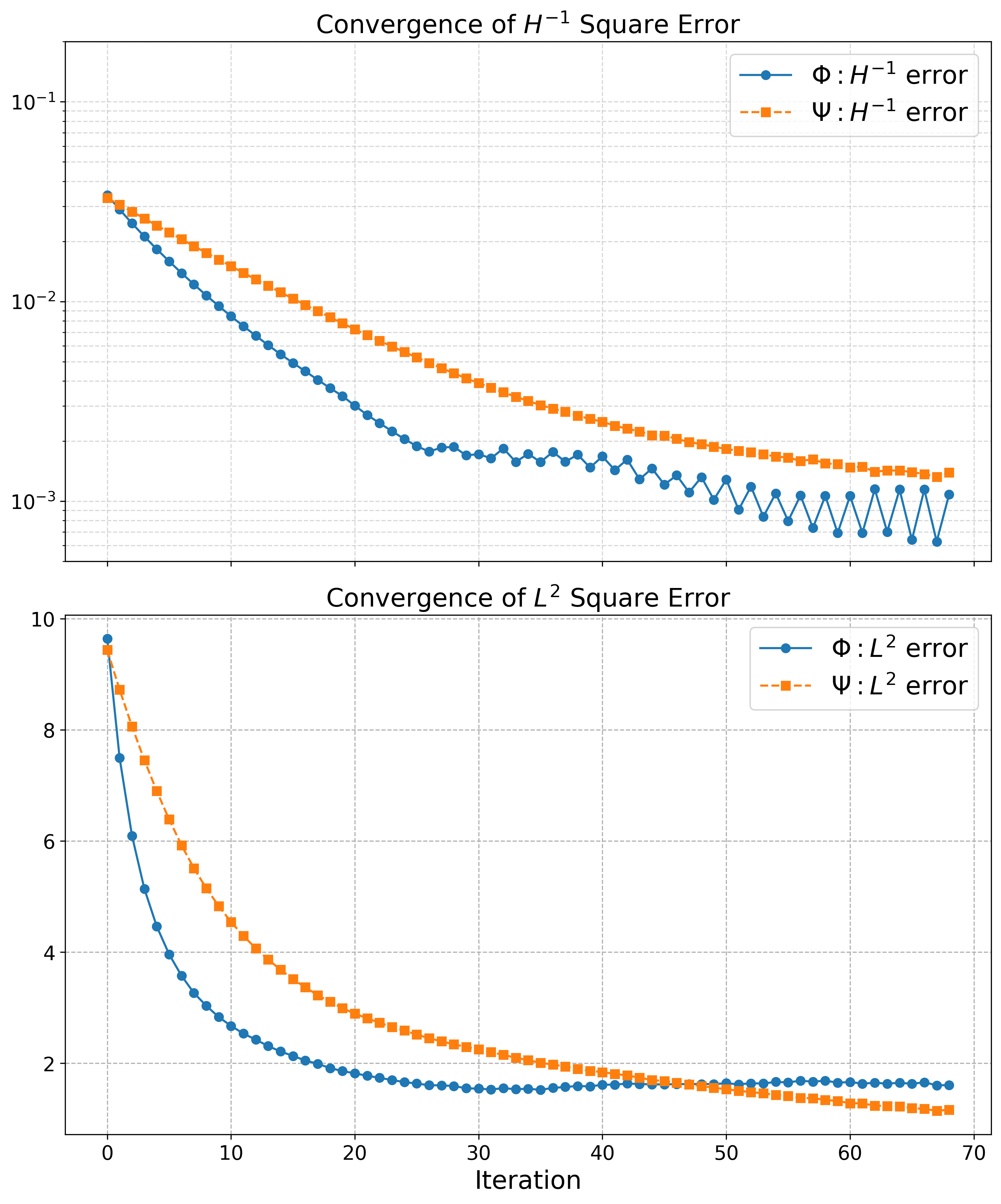}
        \caption{Separated disks}
    \end{subfigure}\hfill
    \begin{subfigure}[t]{0.20\textwidth}
        \centering
        \includegraphics[width=\linewidth]{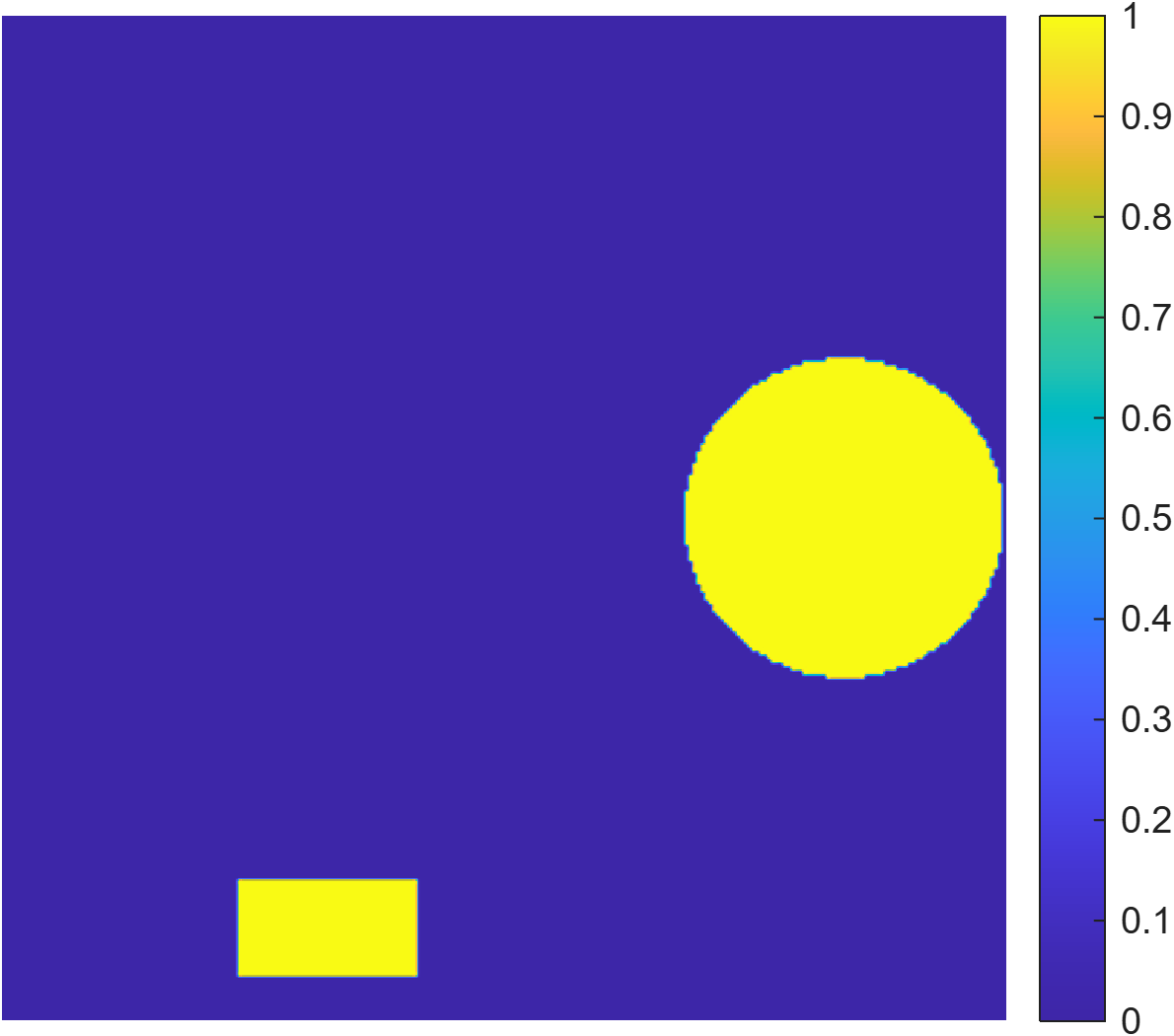}
        \vspace{0.35em}
        \includegraphics[width=\linewidth]{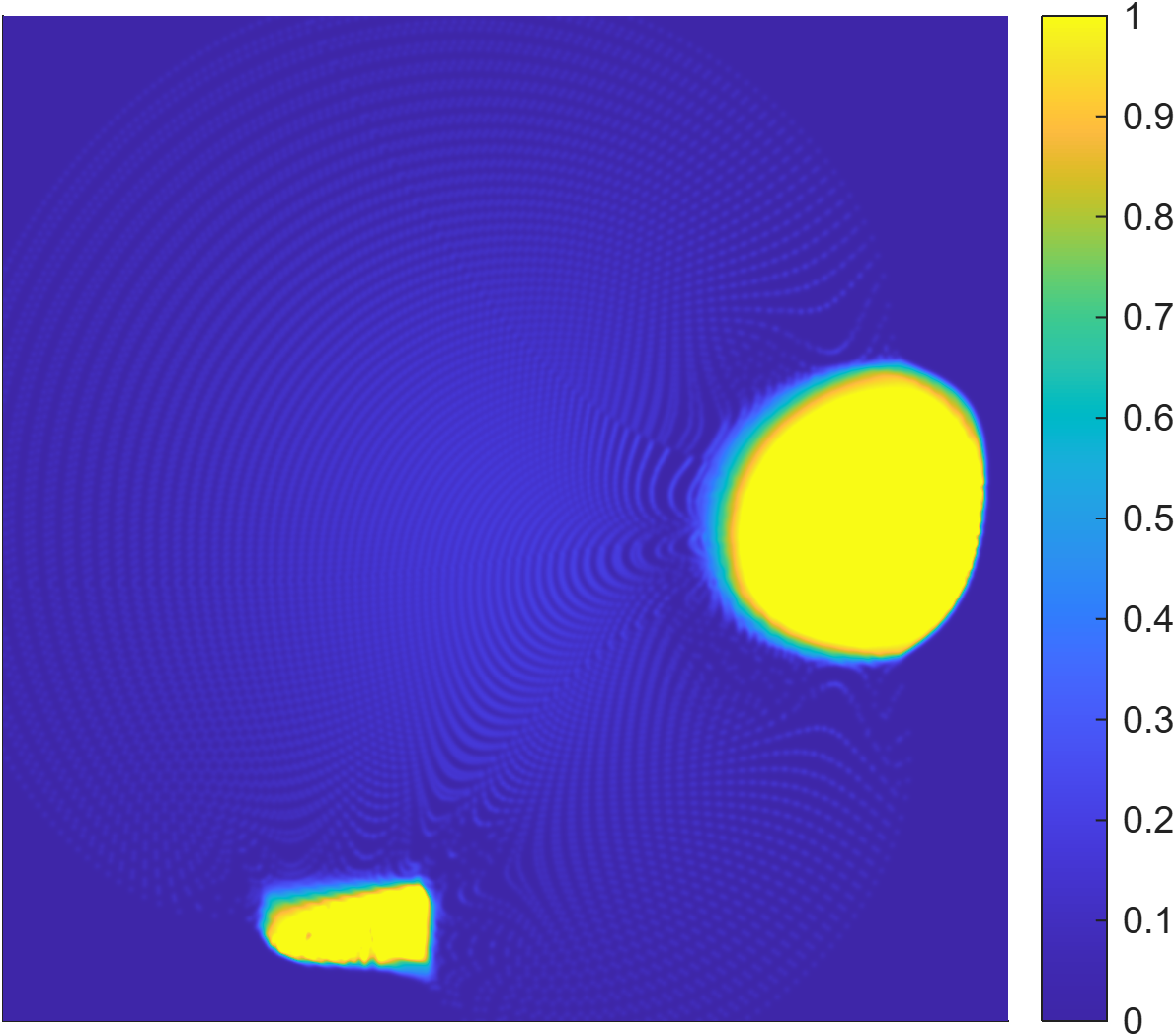}
        \vspace{0.35em}
        \includegraphics[width=\linewidth]{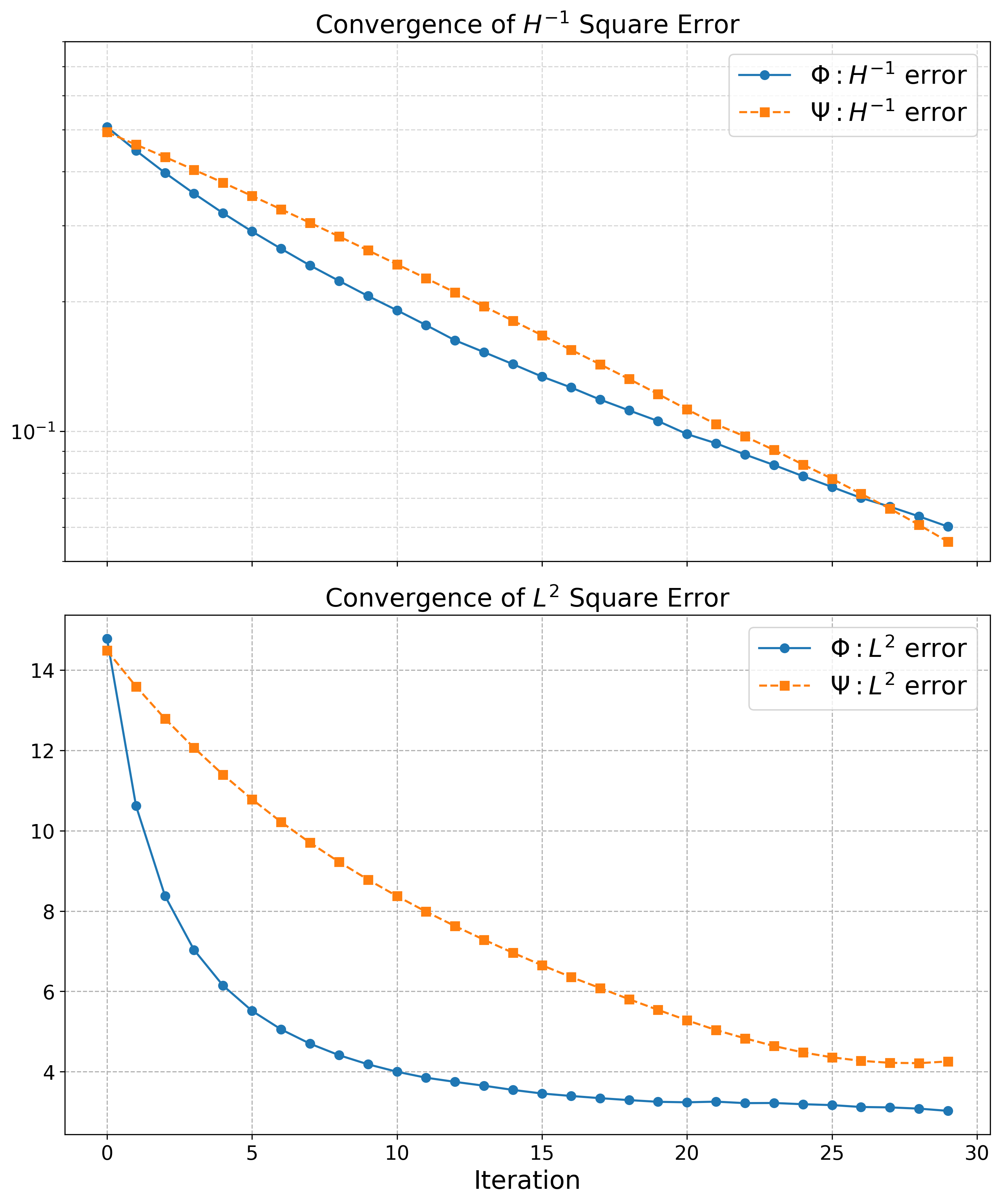}
        \caption{Near boundary}
    \end{subfigure}
    \caption{Influence of target type on the reconstruction results. The four columns correspond to annular, thin-bar, separated-disk, and near-boundary targets. In each column, the top panel is the prescribed target, the middle panel is the ray-tracing result obtained by the proposed algorithm, and the bottom panel is the $H^{-1}$ and $L^2$ residual curve.}
    \label{fig:target-type}
\end{figure}

The comparison in Fig.~\ref{fig:target-type} shows how different geometric singularities affect the reconstruction. For the annular target, the method preserves the central zero-intensity hole while concentrating the transported energy on the relatively thin illuminated ring. Some weak leakage remains near the outer boundary, but the main annular structure is maintained and the residuals decay stably. For the thin-bar target, the narrow illuminated strips and the dark gaps between neighboring bars are both well resolved, indicating that the method can preserve thin supports without excessive numerical diffusion.

For the separated-disk target, the algorithm maintains the relative positions and separation of the disconnected components without creating artificial connections through the dark region. The residuals again exhibit stable decay. These three examples show that holes, thin structures, and disconnected supports can be treated within the same computational framework.

The near-boundary target is more challenging. Although the main target shape is recovered, larger artifacts and residuals appear near \(\partial\Omega^*\). In this case, the desired transport is farther from the initial map and requires a larger global redistribution of mass. Moreover, the transport map undergoes stronger local deformation near the boundary, reducing the accuracy of the element-image approximation in \eqref{tri_approx}. The result therefore indicates that targets close to the boundary remain more demanding for the present implementation.

\subsection{Comparison of Two Discrete \texorpdfstring{$c$}{c}-Transforms}

We compare the linear fast $c$-transform with the version incorporating the local quadratic correction described in Section~\ref{High-Order Local Refinement}. All other parameters, including the mesh scale, time step, and stopping criterion, are kept unchanged so that the observed differences primarily reflect the $c$-transform step.

\begin{figure}[H]
    \centering
    \captionsetup[subfigure]{font=small, justification=centering, skip=2pt}
    \begin{subfigure}[t]{0.20\textwidth}
        \centering
        \includegraphics[width=\linewidth]{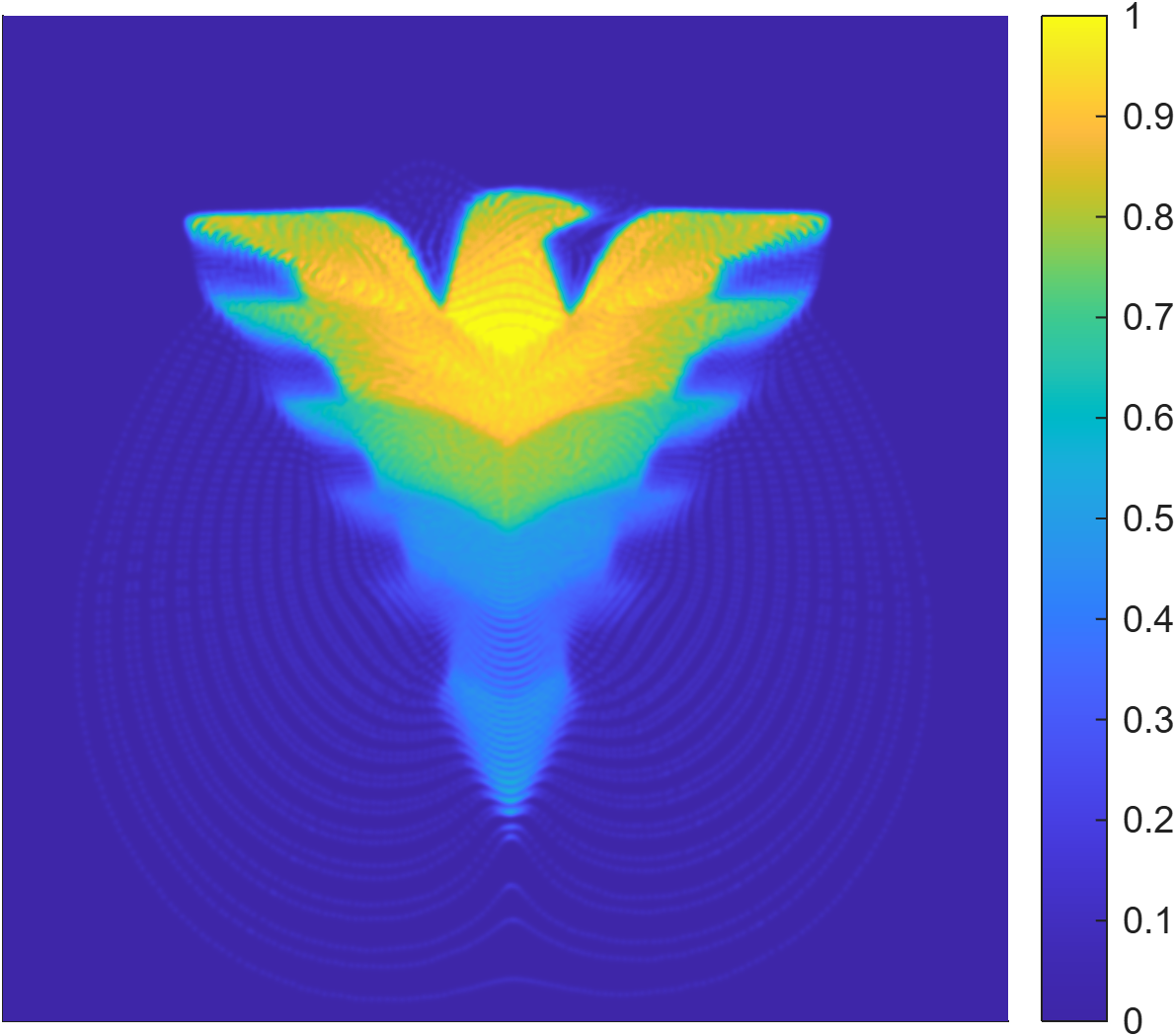}
        \caption{Linear}
    \end{subfigure}\hfill
    \begin{subfigure}[t]{0.20\textwidth}
        \centering
        \includegraphics[width=\linewidth]{simulation_gray.png}
        \caption{Quadratic}
    \end{subfigure}\hfill
    \begin{subfigure}[t]{0.20\textwidth}
        \centering
        \includegraphics[width=\linewidth]{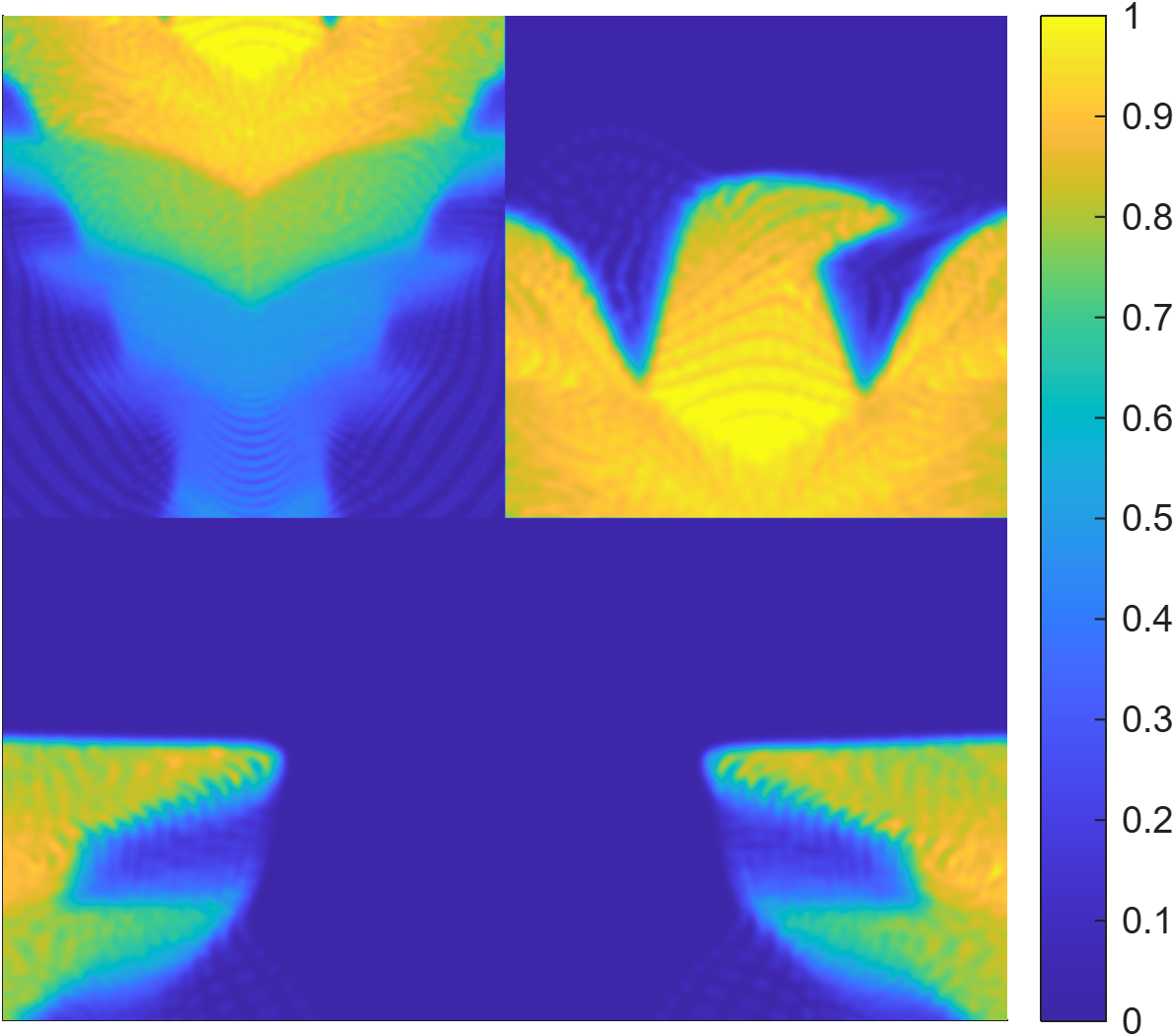}
        \caption{Linear zooms}
    \end{subfigure}\hfill
    \begin{subfigure}[t]{0.20\textwidth}
        \centering
        \includegraphics[width=\linewidth]{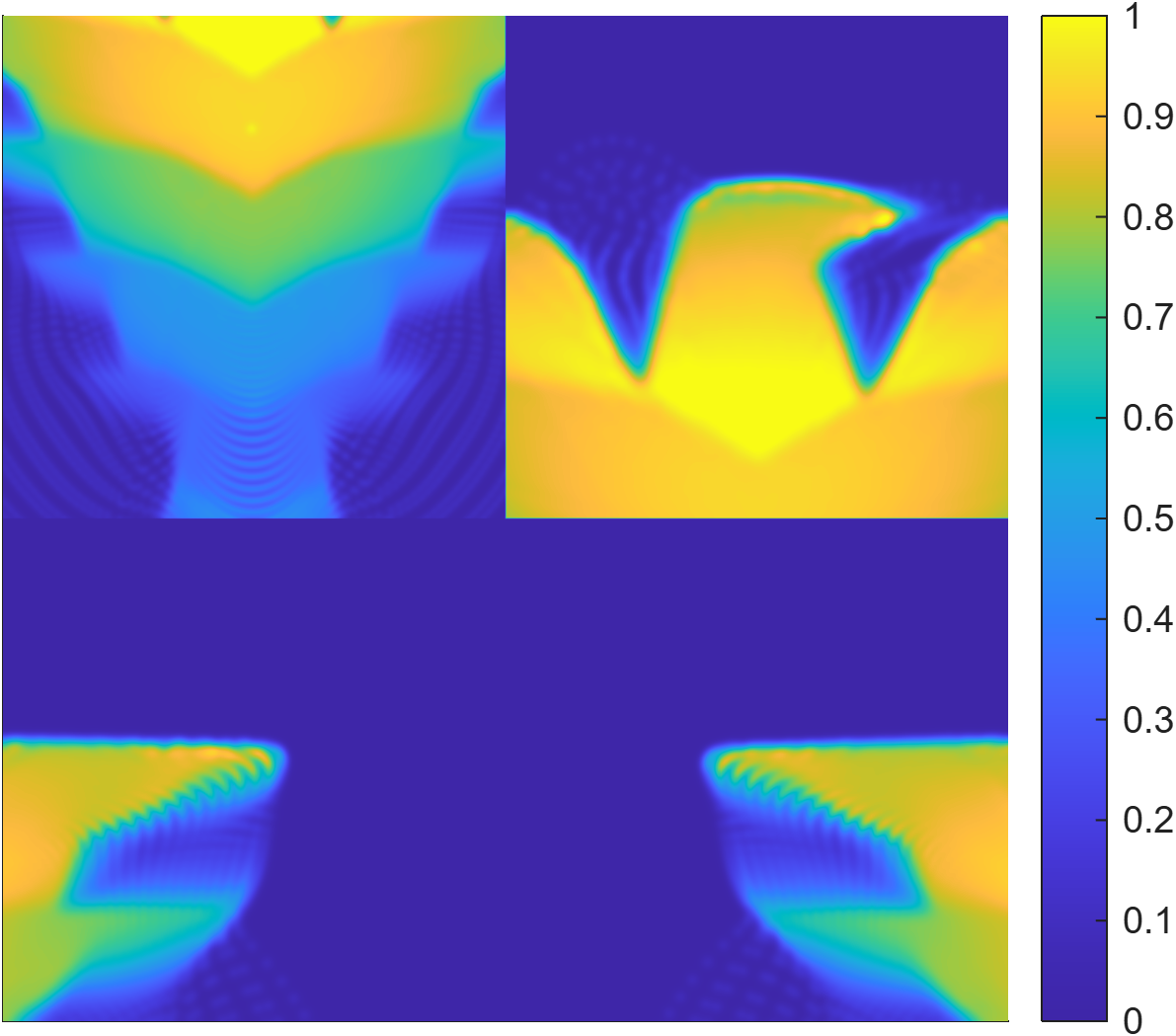}
        \caption{Quadratic zooms}
    \end{subfigure}
    \captionsetup{width=0.9\textwidth}
    \caption{Comparison between the linear fast $c$-transform and the quadratic
    fast $c$-transform algorithms. The panels show, from left to right, the
    linear result, the quadratic result, selected zoomed views of the linear
    result, and the corresponding zoomed views of the quadratic result.}
    \label{fig:ctransform-comparison}
\end{figure}

The local quadratic correction suppresses the interior fluctuations observed in the linear reconstruction and produces smoother intensity transitions. The improvement is particularly clear in the zoomed regions, where the local grayscale structure is reproduced more faithfully.

\textbf{Acknowledgments} The work was supported in part by National Key Research and Development Program of China (No. 2023YFA1009100), National Natural Science Foundation of China (No. U21A20425, No. 12501582), and a Key Laboratory of Zhejiang Province.

\end{document}